\newcommand{\eqdef}{\stackrel{\textrm{def}}{=}}
\newcommand{\ignore}[1]{}
\definecolor{corlinks}{RGB}{64,128,128}
\definecolor{cormenu}{RGB}{0,37,94}
\definecolor{corurl}{RGB}{0,46,91}
\newcommand{\on}{\{-1,1\}}
\newcommand{\1}{\mathds{1}}
\newcommand{\E} {\mathbb{E}}
\newcommand{\R}{\mathbb{R}}
\newcommand{\N}{\mathbb{N}}
\newcommand{\C}{\mathcal{C}}
\newcommand{\girth}{\text{\frakfamily g}}
\newcommand{\I}{\mathcal{I}}
\newcommand{\F}{\mathbb{F}}
\newcommand{\zo}{\{0, 1\}}
\newcommand{\eps}{\epsilon}
\newtheorem{fact}{Fact}[section]
\newtheorem{theorem}[fact]{Theorem}
\newtheorem{lemma}[fact]{Lemma}
\newtheorem{corollary}[fact]{Corollary}
\newtheorem{observation}[fact]{Observation}
\newtheorem{claim}[fact]{Claim}
\newtheorem{remark}[fact]{Remark}
\theoremstyle{definition}
\newtheorem{definition}[fact]{Definition}
\newcommand{\bnote}[1]{\textcolor{red}{\small{}}}
\newcommand{\pnote}[1]{\textcolor{blue}{\small{}}}
\newcommand{\snote}[1]{\textcolor{orange}{\small{}}}
\newcommand{\tE}{\tilde{\mathbb{E}}}
\newcommand{\Span}{\mathsf{Span}}
\newcommand{\tchi}{\tilde{\chi}}
\newcommand{\dist}{\mathsf{dist}}
\newcommand{\cl}{\mathsf{cl}}
\newcommand{\bchi}{\bar{\chi}}
\newcommand{\e}{\epsilon}
\newcommand{\pmo}{\{\pm 1\}}
\newcommand\Hastad{H{\aa}stad}
\newcommand\cR{\mathcal{R}}
\newcommand\cD{\mathcal{D}}
\newcommand\cP{\mathcal{P}}
\newcommand\pE{\Tilde{\mathbb{E}}}
\newcommand{\stoc}[2]{#2}
\newcommand\stoceq[1]{$$#1$$}
\title{Sum of Squares Lower Bounds from Pairwise Independence}
\author{Boaz Barak\thanks{Microsoft Research New England} \and Siu On Chan\thanks{Microsoft Research New England} \and Pravesh Kothari\thanks{University of Texas, Austin. Work done while an intern at Microsoft Research New England.}}
\begin{document}
\thispagestyle{empty}
\setcounter{page}{0}
\maketitle

\begin{abstract}
We prove that for every $\epsilon>0$ and predicate $P:\{0,1\}^k\rightarrow \{0,1\}$ that supports a pairwise independent distribution, there exists an
instance $\I$ of the \textsc{Max}$P$ constraint satisfaction problem on $n$ variables such that no assignment can satisfy more than a $\tfrac{|P^{-1}(1)|}{2^k}+\epsilon$
fraction of $\I$'s constraints but the degree $\Omega(n)$ Sum of Squares semidefinite programming hierarchy cannot certify that $\I$ is unsatisfiable. Similar results were previously only known for weaker hierarchies.
\end{abstract}

\clearpage

\section{Introduction}

\emph{Constraint Satisfaction Problems} (CSP) are among the most natural computational problems, and yet their computational complexity is not fully understood.
In particular several works have studied the notion of \emph{Approximation Resistance}, which loosely speaking means that the best polynomial-time approximation
algorithm is simply the one that outputs a random assignment. Under Khot's \emph{Unique Games Conjecture}~\cite{Khot02} much is known about this property.
In particular Austrin and Mossell~\cite{AustrinM09} showed if the UGC is true, then, for every predicate $P:\zo^k\rightarrow \zo$, if there exists a \emph{pairwise independent} distribution $\mu$ over $P^{-1}(1)$ (i.e., a distribution $\mu$
such that for every $i\neq j \in [k]$, the marginal $\mu_i\mu_j$ is the uniform distribution over $\zo^2$), then $P$ is approximation resistant.
Austrin and \Hastad~\cite{AustrinH11} used this to establish (under the UGC) fairly tight bounds on the threshold at which a random predicate of a particular density becomes approximation resistant.
However, there is no consensus whether the UGC is true. Assuming only  $\mathbf{P}\neq\mathbf{NP}$, the best known bound is by Chan~\cite{Chan13} who showed that a predicate is approximation resistant if it contains a
distribution $\mu$ as above satisfying the additional condition that it is uniform over a subspace $V \subseteq GF(2)^k$. This algebraic structure is a fairly strong condition. In particular if we choose $P:\zo^k\rightarrow\zo$ to be a random predicate conditioned on  $|P^{-1}(1)|=t$ (where $t \in \{1\ldots 2^k\}$ is some parameter), then $P$ will satisfy the first condition (supporting a pairwise independent distribution)  with high probability as long as $t>ck^2$ for some constant $c$~\cite{AustrinH11} while it will \emph{not} satisfy the second condition even for $t$ as large as $\exp(k/5)$ (see Observation~\ref{obs:no-aff-plane}).

Another line of work has been concerned with proving \emph{unconditional} lower bounds for these problems on restricted families of algorithms. These works considered convex relaxations for CSPs, where we say that a CSP is \emph{approximation resistant} for some relaxation $\cR$ if there is an instance for which a random assignment is essentially optimal, but the relaxation value is $1-o(1)$ (namely, the relaxation ``thinks'' that it's possible to satisfy almost all constraints).  Interestingly, the unconditional results match the conditional ones. That is, for certain weaker relaxations (namely, the Sherali-Adams linear programming hierarchy or Sherali-Adams augmented with the basic semidefinite program), there are unconditional results for the same predicates that were shown approximation-resistant under the UGC~\cite{BGMT12,TulsianiW12,ODonnellW14}. (This is of course not a coincidence, as the UGC is intimately connected with some of these weaker relaxations~\cite{Raghavendra08}.) In contrast, for the stronger \emph{Sum of Squares (SOS)} (also known as \emph{Lasserre}) relaxation~\cite{Shor87,Nesterov00,Parrilo00,Lasserre01}, the previously known results~\cite{Grigoriev01,Schoenebeck08,Tulsiani09}  utilized the same conditions as in Chan's NP-hardness result (and in fact inspired Chan's work).

In this work we show that the pairwise independence condition suffices for lower bounds even for this stronger Sum-of-Squares hierarchy. This result is interesting in its own right and, based on past experience, could also be viewed as suggesting that it may be possible to improve the UGC-based results to results based on $\mathbf{P}\neq\mathbf{NP}$.

\subsection{Our results}

Our results actually hold for a more general setting than showing approximation-resistance of predicates, and so to state them we need to introduce some notation.
Roughly speaking, we show that for every $k$ and an arbitrarily small $\epsilon > 0$, there exists a set $\I = \{ C_1,\ldots,C_m \}$ of $k$-tuples of literals (i.e. variables or their negations) over the variables $x_1,\ldots,x_n$
such that \textbf{(1)} for every assignment $x$ to the variables, the induced distribution on $\zo^k$ obtained by taking a random $i\in [m]$ and looking at the literals in $C_i$ is $\e$-close to the uniform distribution on $\zo^k$ but
\textbf{(2)} for every pairwise independent distribution $\mu$ over $\zo^k$,  there is a relaxation-solution that ``cheats'' the $\Omega(n)$-degree SOS relaxation to think that there is a distribution $\cD$ over assignments (i.e.~$\zo^k$) such that for every $i\in [m]$, the projection of $\cD$ to the literals in $C_i$ is distributed according to $\mu$.
This immediately implies that predicates supporting a pairwise independent distribution are approximation-resistant for this relaxation.  We now formally state our results:

\begin{definition}[Pseudo-expectation]  For every $n$ and $d$, let $\cP_d^n$ denote the linear space of $n$-variate real polynomials of degree at most $d$.
A linear operator $\pE:\cP_d^n\rightarrow \R$ is a \emph{degree-$d$ pseudo-expectation operator} if it satisfies:

\begin{description}
\item[Normalization] $\pE[1]= 1$ where on the LHS $1$ denotes the constant polynomial $p$ such that $p(x)=1$.

\item[Positivity] $\pE [p^2] \geq 0$ for every $p\in\cP_{d/2}^n$.
\end{description}

For every polynomial $p\in \cP_d^n$, we say that $\pE$ satisfies the constraint $\{ p = 0 \}$ if $\pE [pq] = 0$ for every $q\in \cP^n_{d-\deg(P)}$.
\end{definition}

The Sum-of-Squares hierarchy can be thought of as optimizing over pseudo-expectations; see the survey~\cite{BarakS14} and the references therein, as well as the lecture notes~\cite{Barak14}.
For notational convenience, we will use variables over $\pmo$ instead of $\zo$. A \emph{literal} is a function $f:\pmo^n\rightarrow \pmo$ such that $f(x)=x_i$ or $f(x)=-x_i$ for some $i$.
If $C= (f_1,\ldots,f_k)$ is a $k$-tuple of literals then we denote by $C(x)$ the tuple $(f_1(x),\ldots,f_k(x))$. Our main result is the following:

\begin{theorem}[Main Result] \label{thm:main} For every $k\in\N$, $\e>0$
there exists $\delta = \delta(k) >0$ such that for every  sufficiently large $n\in\N$ there is a set  $\I = \{ C_1,\ldots,C_m \}$ of $k$-tuples of literals over $x_1,\ldots,x_n$  such that

\begin{enumerate}

\item For every $x\in\pmo^n$, the distribution $\{ C(x) \}$ where $C$ is chosen at random in $\I$ is within $\e$ statistical distance to the uniform distribution over $\pmo^k$.

\item For every pairwise independent distribution $\mu$ over $\pmo^k$, there exists a degree $\delta n$ pseudo-expectation operator $\pE$ over $\R^n$ satisfying the constraints $\{ x_j^2 = 1 \}_{j=1\ldots n}$ such that for every $C\in \I$ and $f:\pmo^k\rightarrow\R$, $\pE f(C(x)) = \E f(\mu)$.
\end{enumerate}
\end{theorem}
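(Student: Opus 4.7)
The plan follows the standard template for Sum-of-Squares lower bounds on random CSPs: sample $\I$ at random, construct a candidate pseudo-expectation $\pE$ from the Fourier coefficients of $\mu$, and verify positivity by leveraging expansion of the underlying constraint hypergraph. First I would choose $\I = \{C_1,\ldots,C_m\}$ with $m = C_{k,\e}\cdot n$ by picking each $C_i$ to be a uniformly random $k$-tuple of literals over $x_1,\ldots,x_n$. For property (1), fix any $x\in\pmo^n$ and any $y\in\pmo^k$: a random literal-tuple evaluates to $y$ on $x$ with probability exactly $2^{-k}$, so a Chernoff bound gives $\e$-closeness to uniform except with probability $\exp(-\Omega_\e(m))$. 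Taking $C_{k,\e}$ large enough and union-bounding over the $2^n$ assignments yields property (1) with high probability, and simultaneously gives that the resulting constraint hypergraph is an expander in the sense needed below.

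For the pseudo-expectation, expand the indicator of $\mu$ in Fourier as $\1_\mu(y) = \sum_{T\subseteq [k]} \hat{\mu}(T)\chi_T(y)$, and note that pairwise independence forces $\hat{\mu}(T)=0$ whenever $1\le |T|\le 2$. For each constraint $C_i = (f_1,\ldots,f_k)$ and each $T\subseteq[k]$, I want $\pE\bigl[\prod_{j\in T} f_j(x)\bigr] = \hat{\mu}(T)$. To extend to all multilinear monomials $\chi_S$ with $|S|\le \delta n$ I would canonically \emph{reduce} $S$ modulo the constraints: if there exists a (necessarily unique, for sufficiently expanding instances) small subset $\mathcal{T}\subseteq \I$ and sub-tuples $T_i$ of each $C_i\in\mathcal{T}$ whose literal-supports have symmetric difference equal to $S$, set $\pE[\chi_S]=\prod_i \hat{\mu}(T_i)$; otherwise set $\pE[\chi_S]=0$. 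Well-definedness amounts to the observation that two distinct reductions of the same $S$ differ by a ``cycle'' of constraints whose net contribution vanishes, using that $\hat{\mu}$ kills characters of support $1$ and $2$.

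The main obstacle is verifying positivity $\pE[p^2]\ge 0$ for all $p$ of degree $\le \delta n/2$. The approach I would take is to exhibit an explicit PSD decomposition of the moment matrix in the character basis, indexing its blocks by the ``shape'' of the cluster of constraints involved in each monomial. Within each cluster the definition of $\pE$ coincides with the genuine expectation under $\mu^{\otimes|\mathcal{T}|}$, which is manifestly positive semidefinite; expansion of the random hypergraph ensures that distinct clusters touch disjoint variable sets, so the global moment matrix factors as a direct sum of PSD local blocks. Pushing this decomposition up to degree $\Omega(n)$ — in particular controlling the combinatorics of which subsets of $\I$ can combine to produce a given character, and verifying that the pairwise-independence cancellations propagate through every intermediate step — is the technical heart of the argument and the step that will consume the most work.
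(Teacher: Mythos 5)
Your construction of $\I$ and the soundness argument (Chernoff plus union bound over assignments) match the paper, but both halves of your pseudo-expectation argument have genuine gaps in the regime that matters, namely degree $d=\Omega(n)$. First, the ``reduce $S$ modulo the constraints and multiply Fourier coefficients'' definition is neither well defined nor the right object for a general pairwise independent $\mu$. Uniqueness of the reduction holds only when the relevant clusters of clauses are tree-like, i.e.\ for sets of size $o(\log n)$ after pruning short cycles; once $|S|=\Omega(n)$ the induced subhypergraphs contain cycles, several inequivalent decompositions of the same character can exist, and your cancellation claim (``a cycle of constraints whose net contribution vanishes'') uses only that $\hat\mu$ vanishes on levels $1$ and $2$. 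That suffices when $\mu$ is uniform over a subgroup (the Grigoriev/Schoenebeck/Tulsiani setting, where the Fourier support has dual-coset structure), but for a general pairwise independent $\mu$ the level-$\geq 3$ coefficients are arbitrary and distinct reductions genuinely give different products. The paper avoids this by defining $\pE[\chi_S]$ as the expectation of $\chi_S$ under an explicit local distribution $\nu_{\cl(S)}$ proportional to $\prod_{C\subseteq \cl(S)}\mu_C$ on a carefully defined closure (containing all paths of length at most $3$ between vertices of $S$), which implicitly sums over all decompositions and whose consistency is proved directly; the strengthened notion of closure is needed precisely because cycles break the naive definition.

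Second, and more seriously, positivity does not follow from a direct-sum decomposition of the moment matrix into PSD blocks indexed by clusters. A degree-$\Omega(n)$ polynomial involves characters whose supports meet many clusters at once, two characters whose clause-clusters overlap partially have nonzero pseudo-correlation, and expansion does not make distinct clusters variable-disjoint once one ranges over all monomials of degree up to $\delta n$; so the moment matrix is not a direct sum of local blocks in any grouping by ``shape.'' This global PSDness is exactly the technical heart of the paper, and it is handled by a different mechanism: a tailor-made ordering of the index sets and a local Gram--Schmidt procedure producing functions $\tchi_i$, where each $\tchi_i$ is orthogonalized only against characters supported in a small $R$-closure around $A_i$ on which $\pE$ agrees with a genuine local distribution (hence is PSD there); the crucial global-orthogonality lemma then shows that the correlation of $\tchi_i$ with every earlier character $\chi_B$, even when $B$ is large and far away, is captured by the portion of $B$ inside this ball, using the product structure of $\nu$ on unions of suitably closed sets and the counting bound $|B_{in}\cup B_{bdy}|\le |B|$. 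Nothing in your outline supplies a substitute for this step, so as written the proposal does not establish $\pE[p^2]\geq 0$ beyond roughly logarithmic degree, and even there only after replacing the product-of-coefficients definition by the local-distribution one.
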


The following immediate corollary implies that predicates supporting pairwise independent distributions are approximation-resistant for $\Omega(n)$-degree SOS:

\begin{corollary} For every $\e>0$ and $P:\pmo^k\rightarrow \zo$, if there exists a pairwise independent distribution $\mu$ supported on $P^{-1}(1)$ then there exists $\delta>0$ such that for all $n$ there is a set  $\I = \{ C_1,\ldots,C_m \}$ of $k$-tuples of literals over $x_1,\ldots,x_n$  such that
 \begin{enumerate}

\item For every $x\in\pmo^n$,  $\E_{C\in \I} P(C(x)) \leq \tfrac{|P^{-1}(1)|}{2^k} + \e$.

\item The value of the $\delta n$-degree Max-$P$ SOS relaxation for the fraction of satisfiable constraints on the instance $\I$ is $1$.
 \end{enumerate}

\end{corollary}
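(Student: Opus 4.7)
The plan is to derive this corollary as a nearly immediate consequence of Theorem~\ref{thm:main}, using the same instance $\I$ and the same $\delta$ (which is then allowed to depend on $k$ and $\e$ as in the theorem).

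For the first claim, I would invoke part (1) of Theorem~\ref{thm:main} with parameter $\e$. Since $P : \pmo^k \to \{0,1\}$ is $[0,1]$-valued, the fact that the distribution of $C(x)$ over random $C\in\I$ is within total variation distance $\e$ of uniform on $\pmo^k$ translates directly into
$$\E_{C\in\I} P(C(x)) \ \le\ \E_{y \sim U_{\pmo^k}} P(y) + \e \ =\ \tfrac{|P^{-1}(1)|}{2^k} + \e$$
for every $x \in \pmo^n$, which is exactly the desired bound.

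For the second claim, I would invoke part (2) of Theorem~\ref{thm:main} with the hypothesized pairwise independent $\mu$ supported on $P^{-1}(1)$. This produces a degree-$\delta n$ pseudo-expectation $\pE$ satisfying $\{x_j^2 = 1\}_{j=1}^n$ such that $\pE f(C(x)) = \E f(\mu)$ for every $C\in\I$ and every $f:\pmo^k\to\R$. Taking $f = P$ and using $\mathrm{supp}(\mu) \subseteq P^{-1}(1)$ yields $\pE[P(C(x))] = \E_\mu P = 1$ for every single $C\in\I$, so averaging over $C\in\I$ gives an SOS objective value of exactly $1$. The matching upper bound $\E_{C\in\I}\pE[P(C(x))] \le 1$ will follow from the standard fact that on the Boolean hypercube the nonnegative function $1 - P(C(x))$ admits a sum-of-squares certificate (modulo $\{x_j^2 - 1\}$) of degree $O(k)$, which is easily absorbed by $\delta n$ once $n$ is large enough.

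I do not anticipate any genuine obstacle here: the substantive work is entirely carried by Theorem~\ref{thm:main}, and the above is simply a translation of its two conclusions into the language of approximation resistance for Max-$P$ — part (1) handles the soundness side (no assignment beats random by more than $\e$) and part (2) handles the completeness side (the SOS relaxation is fooled into value $1$).
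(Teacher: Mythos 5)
Your proposal is correct and matches the paper's intent: the paper states this as an immediate corollary of Theorem~\ref{thm:main}, and your derivation — part (1) plus the fact that $P$ is $\{0,1\}$-valued gives the soundness bound, and part (2) applied to $f=P$ with $\mathrm{supp}(\mu)\subseteq P^{-1}(1)$ gives pseudo-expectation value $1$ on every clause — is exactly that translation. Your extra remark about certifying $\pE[P(C(x))]\le 1$ via a low-degree sum-of-squares representation of $1-P$ on the hypercube is a harmless (and correct) addition that the paper does not bother to spell out.
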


\begin{remark} The instance $\I = (C_1,\ldots,C_m)$ is actually obtained at random (with some pruning of a small fraction of the constraints, or alternatively, with some loss in the
``perfect completeness'' condition).  Thus our results can also be thought as giving some evidence to a conjecture of Barak, Kindler and Steurer~\cite{BarakKS13} that no polynomial-time algorithm (including in particular the SOS algorithm) can beat
the basic semidefinite program on approximating random CSP instances.
\end{remark}

Throughout this paper we restrict ourselves to the \emph{Boolean} case, and do not consider extensions to a larger alphabet, though our methods may be useful in this case as well.

\subsection{Related works}

Grigoriev~\cite{Grigoriev01} proved in 1999 that (in the language of this paper)  3XOR is approximation resistant for the degree $\Omega(n)$ Sum-of-Squares hierarchy. Grigoriev's work in fact predated the papers of Parrilo~\cite{Parrilo00}  and Lasserre~\cite{Lasserre01} proposing the SOS hierarchy, and so he used the different (but equivalent) language of Positivstellensatz Calculus proofs. (Also, as far we know, he did not note that these proofs can be efficiently found via a semidefinite program.) Grigoriev's result was rediscovered in 2008 by Schoenebeck~\cite{Schoenebeck08}, who also noted that it implies approximation resistance for 3SAT and some other CSPs  as well.  Tulsiani~\cite{Tulsiani09} (see also Chan~\cite{Chan13}) further generalized these results and in particular showed that every predicate that contains a pairwise independent subgroup is approximation resistant for $\Omega(n)$-degree SOS. Both Tulsiani and Schoenebeck follow Grigoriev's technique of reducing SOS lower bounds to resolution width lower bounds. As far as we know, no other SOS integrality gaps for approximating CSPs were known, and there are very few SOS lower bounds in general, most notably Grigoriev's lower bound for knapsack~\cite{Grigoriev01b} and the very recent result by Meka, Potechin and Wigderson for the planted clique problem (personal communication).

Arora, Bollob{\'{a}}s, Lov{\'{a}}sz and Tourlakis~\cite{AroraBLT06} obtained integrality gaps for the  Lov{\'{a}}sz-Schrijver linear programming hierarchy for Vertex Cover. Schoenebeck, Trevisan  and Tulsiani~\cite{SchoenebeckTT07} showed that Max-Cut is approximation resistant for $\Omega(n)$ levels of the Lov{\'{a}}sz-Schrijver  hierarchy, and these results have been strengthened to the stronger Sherali-Adams hierarchy~\cite{de2007linear,Charikar09}. The famous  Goemans-Williamson algorithm~\cite{GoemansW95} shows that Max-Cut is \emph{not} approximation resistant for even the degree $2$ SOS hierarchy, further underscoring the difference between these relaxations.

Perhaps closest to our work are the papers of Benabbas, Georgiou, Magen, and Tulsiani~\cite{BGMT12} who showed that predicates containing a pairwise independent distribution are approximation resistant for $\Omega(n)$ rounds of the Sherali Adams hierarchy, even when one adds the degree $2$ SOS constraints. Indeed, our pseudo-distribution agrees with theirs, though we describe it somewhat differently, and most importantly, need a completely different argument to show that it is positive semi-definite. Our work is also inspired by the pseudo-expectation view of the SOS hierarchy as advocated in the papers \cite{BarakBHKSZ12,BarakKS14}.
%
%
%
%
%

\section{Overview of our proof}

To prove Theorem~\ref{thm:main}, we need to show that given any pairwise independent distribution $\mu$ over $\pmo^k$, one can come up with $\I$, a collection of tuples $\{ C_1,\ldots, C_m \}$ of literals and a pseudo-expectation
operator $\pE$ that ``pretends'' to be the expectation of a valid distribution whose projection on to any $C_i$ is $\mu$.
In fact, our choices for both $\I$ and $\pE$ will not be novel and follow prior works in this area. For $\I$, as mentioned, we will simply use a random set of tuples (or more accurately, a set corresponding to a hypergraph with  sufficiently strong expansion properties), as was done by previous works dealing with weaker hierarchies~\cite{BGMT12,TulsianiW12,ODonnellW14}.
It turns out that given this choice, the pseudo-expectation $\pE$ is essentially ``forced'', and again, we use the same pseudo-expectation used in prior works such as~\cite{BGMT12}, though we describe it slightly differently.
This pseudo-expectation corresponds in some sense to the ``maximum entropy distribution'' conditioned on satisfying our constraints (though of course it is not an actual distribution but only a \emph{pseudo-distribution}
in the sense of~\cite{BarakS14}). Those prior works have shown that for every set $S$ of $o(n)$ variables, there is a distribution $\nu_S$ over the variables in $S$ that agrees with $\pE$ .
The main difference is that we prove that for some $d= \Omega(n)$, $\pE$ is a valid degree-$d$ pseudo-expectation operator, that is, it satisfies the non-negativity / positive semidefinite-ness condition $\pE [p^2] \geq 0$ for every polynomial $p\leq d/2$. This is a more ``global'' property, as the polynomial $p$ might depend on all $n$ variables, which makes it more challenging to prove.

Our approach is to essentially diagonalize $\pE$. That is, we will show an explicit construction of polynomials $\tchi_1,\ldots,\tchi_M \in \cP_{d/2}^n$ which we call \emph{local orthogonal functions} such that \textbf{(1)} $\{ \tchi_i \}_{i=1}^M$ spans the space $\cP_{d/2}^n$, \textbf{(2)} $\pE [\tchi_i\tchi_j] = 0$ for all $i\neq j$ and \textbf{(3)} $\pE [\tchi^2_i] \geq 0$ for all $i$.
The existence of these polynomials immediately implies the property we need, as,  by representing every polynomial $p$ as $p = \sum_i p_i \tchi_i$, we see that
\[
\pE [p^2] = \sum_{i,j}  p_ip_j \pE [\tchi_i \tchi_j] = \sum_i p_i^2 \pE [\tchi_i^2] \geq 0 \;.
\]

We now review the construction of the instance, as well as the pseudo-expectation operator, and then discuss how we come up with these local orthogonal functions.
As mentioned above,  our instance $\I = (C_1,\ldots,C_m)$ will simply be a random instance, which we think of as a $k$-uniform hypergraph with $m$ hyperedges $C_1,\ldots,C_m$. After some pruning we can assume this
hypergraph has girth $\Omega(\log n)$.\footnote{If we don't prune these clauses then our proof guarantees that for $1-o(1)$ fraction of the
clauses we get the marginal distribution to be $\mu$. It is possible that this can be upgraded to all of the clauses at the expense of some additional complication, but we have not checked whether or not that's the case.}
By a simple Chernoff + union bound argument, if $m>cn$ for a sufficiently large constant $c$ then for every assignment $x\in\pmo^n$, the induced distribution $\{ C_i(x) \}_{i\sim [m]}$ will be $\e$-close to the uniform distribution. For this informal overview, suppose that we merely want to establish the existence of a degree $d$ pseudo-expectation operator for some large constant $d$. Note that this means that sets of at most $d$ (or even $2^d$) variables form a \emph{forest} (i.e. disjoint collection of trees) in this hypergraph.

We now describe the pseudo-expectation operator $\pE$, which in some sense is almost ``forced'' as the only natural operator for this instance. (As mentioned, this part is not novel and the same operator was used by works such as~\cite{BGMT12}; however we describe it somewhat differently.)
We construct $\pE$ by defining for every set $S$ of at most $d$ variables a distribution $\nu_S$ over $\pmo^S$ such that \textbf{(1)} for every clause $C$ contained in $S$, the projection of $\nu_S$ to $C$ equals $\mu$ and \textbf{(2)} the distributions are \emph{locally consistent} in the sense that if $S\subseteq U$ then the projection of $\nu_U$ to $S$ equals $\nu_S$.
The definition of $\nu_S$ is very simple. First, say for the purposes of this informal overview that a set $S$ is \emph{closed} if every clause $C$ in $\I$ is either completely contained in $S$ or intersects it in at most a single variable. If $S$ of size $O(d)$ is closed and connected (as a subgraph of $\I$) then it is a \emph{tree} in the hypergraph $\I$. In this case, we define the distribution $\nu_S$ as follows: to sample $x$ from $\nu_S$ we pick an arbitrary clause $C \subseteq S$ and sample its variables according to $\mu$. We then continue down the tree, sampling the variables of all the clauses that intersect with $C$, and so on. It is not hard to show that because of pairwise independence (and in fact simply because every marginal is uniform) this process will always yield the same distribution regardless of the traversal order, and the probability of $x \in \pmo^S$ to be sampled under this distribution will be proportional to $\prod_{C\subseteq S}\Pr[ \mu = C(x)]$. If a set $S$ is closed but not connected then the distribution $\nu_S$ is obtained by making independent choices for each of the connected components of $S$. For a general (not necessarily closed) set $S$, we define the \emph{closure} of $S$, denoted by $\cl(S)$, to be the \emph{minimal} closed superset of $S$ (this is well defined; one can show that intersections of closed sets are closed and thus, \emph{the} minimal closed set is the intersection of all closed sets containing $S$).  A fairly simple argument using the girth condition can be used to argue that $|\cl(S)| \leq O(|S|)$ for every $|S|\leq d$. We then define $\nu_S$ to be the distribution obtained by projecting the distribution $\nu_{\cl(S)}$ to $S$. The collection of local distributions so obtained satisfies \textbf{(1)} by construction, and it is not hard to show that it satisfies \textbf{(2)} as well.
Since all polynomials of degree at most $d$ are spanned by the set of polynomials $\{ \chi_S \}_{|S|\leq d}$ (which we will call the \emph{characters}) where $\chi_S(x) = \prod_{i\in S} x_i$, to define the pseudo-expectation operator it suffices to define $\pE [\chi_S]$ for every $|S|\leq d$. We simply define $\pE [\chi_S]$ to be $\E_{x\sim \nu_S}[ \chi_S(x)]$.

\begin{figure}
\begin{center}
\includegraphics[width=3in]{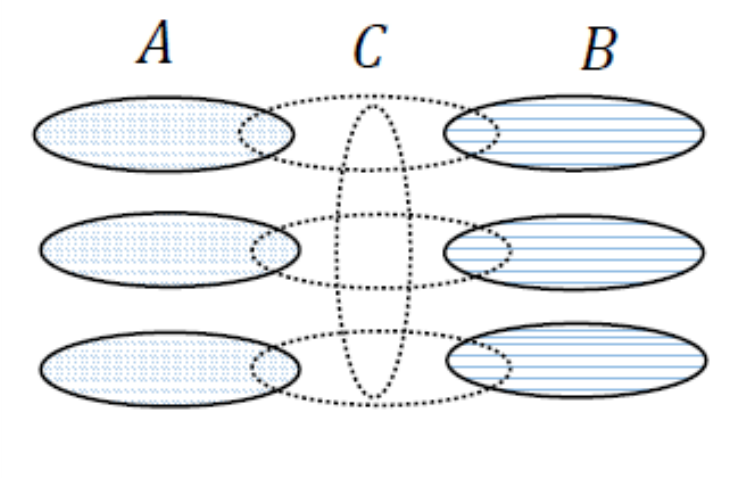}
\end{center}
\caption{In this example, even though both $A$ and $B$ are collections of disjoint clauses and hence are ``closed'' under our definition, their distributions could be correlated due to the existence of the set $C$.}
\label{fig:correlations}
\end{figure}

We now describe how we come up with the functions $\tchi_1,\ldots,\tchi_M$.  Intuitively, we would like to come up with these functions via a Gram-Schmidt like process. That is, we fix some ordering $A_1 \prec \ldots \prec A_M$ of the $M=\binom{n}{\leq d}$ sets of size at most $d$, and define $\chi_i$ to be $\chi_{A_i}$.  Now, we would want to define $\tchi_i$ to be the component orthogonal to the span of $\chi_1,\ldots,\chi_{i-1}$ where we define orthogonality using $\pE$ as an inner product. We would then get that $\pE \tchi_i \chi_j =0$ for all $j<i$, which would imply that $\pE \tchi_i \tchi_j = 0$ for all $i\neq j$ (as $\tchi_j$ is spanned by $\chi_1,\ldots,\chi_j$).
However, this is of course circular reasoning, since we cannot assume that $\pE$ is positive semidefinite (and hence a valid inner product) since this is exactly what we are trying to prove!

However, because we know that on every small set $U$, $\pE$ agrees with an actual expectation operator (the one associated with the \emph{actual} distribution $\nu_{U}$), we do know that it is psd when it is restricted to this small set $U$.
Therefore, if for some reason when we do this Gram-Schmidt process and express $\tchi_i$ as some linear combination $\sum_{j\leq i} \alpha_j \chi_j$, we get lucky and this linear combination happens to be extremely \emph{sparse} then we can actually carry through the argument described above. Specifically, it turns out that it suffices for the set $U=\cup \{ A_j \mid \alpha_j \neq 0 \}$ to be sufficiently small so that $\pE$ is a valid inner product on $U \cup A_i$. However a priori, this hope seems dubious, since the Gram-Schmidt process is very sequential, and we need to do it for $\binom{n}{\leq d}$ steps.
It seems quite possible that we would create \emph{long distance correlations} in the process, whereby we would end up needing to express $\tchi_i$ using many $\chi_j$'s for sets $A_j$ that are quite far from $A_i$. (See Figure~\ref{fig:correlations} for one example of a correlation that could arise between two disjoint collection of clauses $A$ and $B$.)

Nevertheless, we show that we are in fact able to choose a tailor-made ordering of the sets so that this hope is (essentially) materialized.
An important observation that comes to our aid here is that our local distributions, intuitively speaking, satisfy: if two sets $A$ and $B$ are sufficiently far apart in the hypergraph $\I$, then the distribution $\nu_{A\cup B}$ is obtained by taking the product of the independent distributions $\nu_A$ and $\nu_B$. We use this observation to argue that, if we choose the ordering on the sets in ${[n] \choose d}$ in the right way, then, when we express $\tchi_i$ as a linear combination of the functions $\chi_j$ for $j<i$, we
only use $j$'s such that $A_j$ is contained in a certain (carefully defined) small ``ball'' in the hypergraph around the set $A_i$. The crucial result that we need here is to show that whenever there is a dependence between the local distribution on some set $A$ and the local distribution on some set $B$ that came \emph{before} $A$ in our order, then, either $B$ is contained in this ``ball'' around $A$, or the correlation between $A$ and $B$ is completely ``explained" by the intersection  of the closure of $B$ with this ball, in the sense that conditioned on any assignment to the variables in the intersection, the local distributions on $A$ and $B$ are independent. This will allow us to argue that we don't need to use $\chi_B$ to express $\chi_{A_i}$ but can restrict ourselves to characters contained in that ball. Moreover, and crucially,  we will show that our ordering has the property that all the characters we will need to use must have come before $A$ as well.

\begin{figure}
\begin{center}
\includegraphics[width=3in]{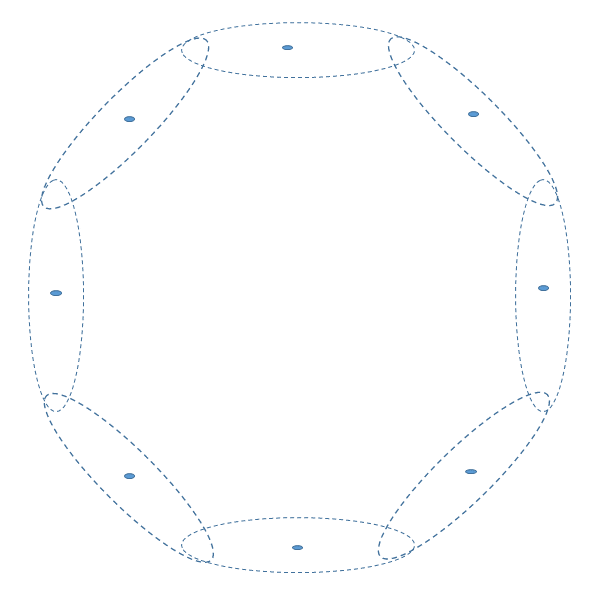}
\end{center}
\caption{In this example, the solid dots are variables and no clause contains any two of them, but the local distribution on the variables might not be uniform since the constraints of the cycle can create a dependency.}
\label{fig:bad-closure}
\end{figure}

\paragraph{Handling $\Omega(n)$ rounds.} 
The above overview can be converted into a full proof with some care when $d = o(\log{(n)})$ by exploiting the acyclicity of all subgraphs involved. Extending to $d = \Omega(n)$ , however, introduces additional subtleties. When $d$ exceeds $\Omega(\log{(n)})$, subgraphs induced by $d$ vertices of $\I$ can have cycles. An immediate effect of this is that the the definition of a closed set that we gave before no longer yields consistent local distributions on any collection of $d$ variables. An example of a problem that arises when cycles can exist on a set of vertices is illustrated in Figure \ref{fig:bad-closure}. To fix this, we define a stronger notion of closed set $S$ that guarantees that all paths of length at most $3$ between any two vertices in $S$ are completely contained inside $S$. This notion of closures differs from the one that Benabbas et.~al.~\cite{BGMT12} use. An appeal to the expansion property of $\I$ (instead of high girth as before) can be used to show that the closure of a set $S$ is at most a constant factor larger than  $|S|$. Similarly, as before, we need to show that there exists a (suitably defined) ball, $Ball(A)$ around any set $A$ of variables (of size at most $d$) such that the correlations with any other set $B$ of size at most $d$ are ``captured'' by the intersection of $Ball(A)$ and $B$. This needs a more careful argument. In particular, the correlations (even in the low girth case) are actually not necessarily captured by the intersection of $Ball(A)$ with $B$, but rather with some set $B'$ that is related, but not identical to $B$. However, the crucial property that we require is that the set $B_{in} = Ball(A)\cap B'$ satisfies \textbf{(1)} if $B$ came before $A$ in the ordering, then so will $B_{in}$ and \textbf{(2)} $|B_{in}| + |B\setminus Ball(A)| \leq |B|$. This second property is more complicated to prove in the case where $|B|$ can be much larger than the girth bound, but turns out to hold there as well.  The bottom line is that with additional care however, the high level picture provided by this overview can indeed be implemented and we give a full analysis based on the local Gram-Schmidt like process in Section \ref{sec:PSD}.

\section{Preliminaries} \label{sec:prelims}

We collect some standard definitions and notation here.  A \emph{$(k,n)$-instance} is a $k$-uniform hypergraph $\I = \{ C_1,\ldots, C_m \}$ over $[n]$ so that every hyperedge (also known as a \emph{clause}) $C = (i_1,\ldots,i_k) \in \I$ is labeled by a string $\sigma = \sigma^C \in \pmo^k$. We identify a clause $C$ with the function that maps $x\in \pmo^n$ to $y_1,\ldots,y_k$ where $y_j = \sigma_j x_{i_j}$.
We will sometimes also consider $C$ as a tuple of the \emph{literals}  $(\sigma_{i_1} x_{i_1},\ldots,\sigma_{i_k} x_{i_k})$. We write $V(C)$ for the variables involved in (or \emph{covered by}) a clause $C$ and similarly for $V\subseteq [n]$ we write $\C(V)$ for the set of all clauses $C$ such that $V(C) \subseteq V$. For any $x \in \on^n$, we write $x_A$ to denote the tuple of coordinates in the subset $A \subseteq [n]$. If $x \in \on^A$ and $y \in \on^B$ for disjoint sets $A$ and $B$, we will write $x \circ y$ for the string in $\on^{A \cup B}$ that projects to $x$ for coordinates in $A$ and to $y$ for coordinates in $B$.

Unless explicitly mentioned, the base of all logarithms appearing in the paper is assumed to be $2$. We consider the arity of our tuples $k$ to be a constant and so $O$ notation may hide the dependence on $k$.

We now define some standard ideas in the context of hypergraphs.

\begin{definition}
Let $G$ be a hypergraph. $G$ is said to be a \emph{path} if its hyperedges can be ordered into a sequence $C_1, C_2, \ldots, C_{\ell}$ such that for each $2
\leq i \leq \ell$, $C_i \cap C_{i-1} \neq \emptyset$ and $C_i \cap C_j =
\emptyset$ for every $|i-j| > 1$. $G$ is said to be a \emph{cycle} if it has at
least two hyperedges, and there is a cyclic ordering of its hyperedges $C_0,
C_1, \ldots, C_{\ell-1}$, and there are distinct vertices $v_0, \dots, v_{\ell-1}$ with
$v_i\in C_i\cap C_{(i+1)\bmod \ell}$ for all $i$.
$G$ is said to be a \emph{forest} if it does not contain any cycle.
A forest is a \emph{tree} if it is \emph{connected} (i.e.\ for every two distinct
vertices $u$ and $v$, there is a path $C_1, \dots, C_\ell$ such that $u\in C_1$
and $v\in C_\ell$).

The \emph{degree} of $G$ is the maximum number of hyperedges that intersect with any given hyperedge in $G$. The length of the shortest cycle in $G$ is said to be the \emph{girth} of $G$. For any vertices $u,v$ of a hypergraph $G$, we define the \emph{distance}, $\dist(u,v)$ of $u,v$ in $G$ as the minimum number of hyperedges in any path that joins $u$ and $v$ in $G$. For $S,T$, subsets of vertices, we define $\dist(S,T) \eqdef \min_{s \in S, t \in T} \dist(s,t).$
\end{definition}

Next, we define the notion of expansion in a $k$-uniform hypergraph $G$:
\begin{definition}[Coefficient of Expansion]
A $k$-uniform constraint hypergraph $G$ is said to be $(r, \beta)$-expanding if any collection $\C$ of at most $r$ hyperedges of $G$ cover at least $(k-1-\beta) |\C|$ vertices of $G$, i.e.~ $|\{ v \mid \exists C \in \C \text{, } v \in C\}| \geq (k-1-\beta) |\C|$. We call $\beta$, the \emph{coefficient of expansion} of $G$.
\end{definition}

Let $\I$ be a $(k,n)$ instance. We now describe the properties of the $(k,n)$ instances that we need and give a construction for them in Section \ref{sec:nice} of the Appendix by taking a random instance and removing a few clauses. Specifically, we show the existence of \emph{nice} instances, the ones that satisfy the properties described in the lemma below:

\bnote{are these the conditions we need for the new proof?}
\pnote{Yes, I will point it out clearly here.}
\begin{lemma}
Fix $1 > \epsilon, \delta \geq 0$ and $\gamma \geq e^k k^2$. Then, there exists a $k$-uniform constraint hypergraph $G$ with $\gamma  n$ edges such that for $\eta = (1/\gamma^2)^{2/\delta}$, $1/\tau = 4 \log_2 (\gamma k^2)$, 
$G$:
\begin{enumerate}
\item is $(\eta n, \delta)$-expanding,
\item has girth $\girth \geq \tau \log{(n)}$ 
\end{enumerate}
\end{lemma}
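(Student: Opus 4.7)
The plan is a standard probabilistic-method construction: sample a random $k$-uniform hypergraph with slightly more than $\gamma n$ hyperedges, verify that the expansion condition holds with high probability and that the expected number of short cycles is sublinear, and then prune by deleting one hyperedge per short cycle. Concretely, let $G_0$ consist of $m_0 = \gamma n + \sqrt n$ hyperedges drawn independently and uniformly from $\binom{[n]}{k}$.

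For the expansion property, fix $r \le \eta n$ and apply a union bound over the choice of $r$ edges together with a $(k-1-\delta)r$-element vertex set $V$ containing all of them:
\[
\Pr[\text{some $r$ edges cover $\le (k-1-\delta)r$ vertices}] \le \binom{m_0}{r}\binom{n}{(k-1-\delta)r}\Bigl(\tfrac{(k-1-\delta)r}{n}\Bigr)^{kr}.
\]
Applying $\binom{a}{b} \le (ea/b)^b$ and simplifying yields an upper bound of the form $\bigl(C\,e^k\gamma k^2(r/n)^\delta\bigr)^r$ for a small absolute constant $C$. Since $\eta = \gamma^{-4/\delta}$, $(r/n)^\delta \le \gamma^{-4}$ throughout $r \le \eta n$, so the base collapses to at most $C\,e^k k^2/\gamma^3 \le 1/2$ under the assumption $\gamma \ge e^k k^2$. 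Splitting the sum over $r$ at $\sqrt n$ (for $r \le \sqrt n$ the factor $(r/n)^\delta \le n^{-\delta/2}$ makes each term $o(1)$, and for $r > \sqrt n$ the geometric $(1/2)^r \le (1/2)^{\sqrt n}$ decay dominates) gives a total of $o(1)$, so $G_0$ is $(\eta n, \delta)$-expanding with probability $1-o(1)$.

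For the girth property, count ordered length-$\ell$ hyperedge cycles: there are at most $m_0^\ell$ ordered $\ell$-tuples of distinct edges and at most $n^\ell$ ordered $\ell$-tuples of distinct candidate cycle vertices, and by independence of the $C_i$'s the joint probability of the $\ell$ inclusions $\{v_{i-1},v_i\}\subseteq C_i$ is $\bigl(k(k-1)/(n(n-1))\bigr)^\ell \le (2k^2/n^2)^\ell$. Hence the expected number of ordered length-$\ell$ cycles is at most $(4\gamma k^2)^\ell$, and summing over $\ell = 2,\ldots,\tau\log_2 n$ gives
\[
\mathbb{E}[\text{\# short cycles}] \le 2(4\gamma k^2)^{\tau \log_2 n} = 2\,n^{\tau\log_2(4\gamma k^2)} = o(\sqrt n),
\]
because $\tau\log_2(\gamma k^2) = 1/4$ by our choice of $\tau$ and the extra $\tau\log_2 4$ contributed by the constant is tiny (since $\tau = 1/(4\log_2(\gamma k^2))$ is small under $\gamma k^2 \gg 1$). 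By Markov, with probability $\ge 1/2$ the number of short cycles is $o(\sqrt n)$.

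Combining, with positive probability $G_0$ is both $(\eta n,\delta)$-expanding and contains only $o(\sqrt n)$ short cycles. Fix such a realisation, delete one edge per short cycle, and then delete further edges arbitrarily to bring the count down to exactly $\gamma n$. Expansion is preserved under edge deletion because every $r$-subset of the pruned edge set is already an $r$-subset of $G_0$, and the girth is at least $\tau\log n$ by construction. The main bookkeeping obstacle is simply calibrating the parameters: $\eta = \gamma^{-4/\delta}$ is tuned so that the base of the expansion bound collapses below $1$, and $1/\tau = 4\log_2(\gamma k^2)$ is tuned so that $n^{\tau\log_2(\gamma k^2)} = n^{1/4}$, comfortably sublinear; everything else is a routine first-moment computation.
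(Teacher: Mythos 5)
Your proposal is correct and follows essentially the same route as the paper's own proof: a first-moment union bound for $(\eta n,\delta)$-expansion, an expected-count-plus-Markov bound of roughly $(4\gamma k^2)^\ell$ on cycles of length $\ell$, and pruning one edge per short cycle. The only differences are cosmetic — you sample a fixed number $\gamma n+\sqrt n$ of i.i.d.\ uniform hyperedges and trim to exactly $\gamma n$ edges, whereas the paper includes each hyperedge independently with probability $p=4\gamma\,k!/n^{k-1}$ and settles for between $\gamma n$ and $6\gamma n$ edges after pruning.
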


We will use this lemma with any given $\epsilon$ (the soundness slack), $\delta = \frac{1}{200}$ and $\gamma = e^k k^2/\epsilon^2$. We will call the instances that satisfy the conditions of the lemma above as \emph{nice}.

For such instances, it is also easy to prove the soundness part
(part (i)) of Theorem  \ref{thm:main} (see Section \ref{sec:soundness} of the Appendix) which we record in the following lemma.

\begin{lemma}\label{thm:nice}
For every $\e>0$ and $k$, if $n$ is sufficiently large then there exists a nice $(k,n)$-instance $\I$ with the property that
for every $x\in\pmo^n$, the distribution $\{ C(x) \}_{C\in \I}$ is $\e$-close in total variation distance to the uniform distribution on $\pmo^k$.
\end{lemma}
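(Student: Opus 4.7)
The plan is to use the nice hypergraph provided by the preceding lemma and endow it with uniformly random signs, then show by a Chernoff plus union bound over all $2^n$ assignments that, with positive probability over the signs, the resulting labelled instance satisfies the uniformity property in part (i) of Theorem~\ref{thm:main} simultaneously for every $x$.

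Concretely, I would invoke the preceding lemma with $\delta = 1/200$ and $\gamma = e^k k^2/\e^2$ to obtain a $k$-uniform hypergraph $G$ with $\gamma n$ hyperedges that is $(\eta n, \delta)$-expanding and has girth $\Omega(\log n)$. Define a $(k,n)$-instance $\I$ by taking the hyperedges of $G$ as clauses and attaching to each clause $C$ an independent uniformly random sign vector $\sigma^C\in\pmo^k$. Fix any assignment $x\in\pmo^n$. For a clause $C=(i_1,\ldots,i_k)$ with sign vector $\sigma^C$, the induced value $C(x)=(\sigma^C_1 x_{i_1},\ldots,\sigma^C_k x_{i_k})$ is uniformly distributed on $\pmo^k$ because $\sigma^C$ is, so for each target $t\in\pmo^k$ the indicator $\mathbf{1}[C(x)=t]$ has mean $2^{-k}$. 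Since the sign vectors across clauses are independent, a Chernoff bound gives that the empirical frequency of $t$ among the $m=\gamma n$ clauses deviates from $2^{-k}$ by more than $\e/(2\cdot 2^k)$ with probability at most $\exp(-c\,\e^2 \gamma n/2^k)$ for an absolute constant $c>0$.

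With our choice $\gamma = e^k k^2/\e^2$ the exponent is $c\,(e/2)^k k^2\, n$, which dominates the $n\ln 2$ entropy from the assignment union bound provided $c$ (or $\gamma$) is chosen to absorb the constant; a union bound over the $2^k$ targets $t$ and $2^n$ assignments $x$ then leaves total failure probability $2^{k+n}\exp(-\Omega(n)) = o(1)$. Hence there exists a sign assignment making the statistical distance between $\{C(x)\}_{C\in\I}$ and the uniform distribution on $\pmo^k$ at most $\e$ simultaneously for every $x\in\pmo^n$, and by construction $\I$ is nice. The only point that requires care is ensuring the Chernoff exponent beats the $2^n$ union bound, which is precisely why $\gamma$ is taken exponentially large in $k$; once this quantitative check is in place the argument is entirely standard and no additional ideas are needed.
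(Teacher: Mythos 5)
Your proposal is correct and follows essentially the same route as the paper: the paper also fixes the expanding, high-girth hypergraph from the construction in Section~\ref{sec:nice}, attaches uniformly random literal signs, and applies a Chernoff--Hoeffding bound for each assignment $x$ and target $y\in\pmo^k$ followed by a union bound over the $2^n$ assignments and $2^k$ targets (this is exactly Lemma~\ref{lem:soundness}, which likewise needs $m=\Omega(2^{O(k)}\e^{-2}n)$ so that the exponential tail beats the $2^n$ union bound). The quantitative check you flag is the same one the paper absorbs into the choice of $\gamma$, so no new ideas are missing.
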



%
%

\section{Closed sets, and the definition of the pseudo-expectation} \label{sec:closure}

Throughout the rest of this paper we fix  $\I = (C_1,\ldots,C_m)$ to be a nice $(k,n)$ instance with coefficient of expansion $\beta$. Thus whenever we mention edges, paths, or clauses, they will always be with respect to the hypergraph $\I$. In this section, we define a linear operator $\tE$ on $P_s^n$, the linear space of multilinear polynomials on $\R^n$ of degree at most $s = \frac{\eta n}{6}$. We will ensure that the $\tE$ so defined will satisfy $\tE[f(C(x))] = \E [f(\mu)]$ for every clause $C \in \I$ and function $f:\pmo^k\rightarrow\R$. In the next section, we will show that the $\tE$ we define here is in fact a pseudo-expectation operator on $P_d^n$ for $d = \frac{\eta n}{10000k}$ and thus obtain our main result. The $\tE$ operator we use was defined in previous works such as Benabbas et. al. \cite{BGMT12} and later also used by Tulsiani and Worah \cite{TulsianiW12} to study weaker LP/SDP hierarchies. Here, we describe a construction of the same operator in a slightly different way so as to help us in the proof of our main result.

To define $\tE$, it is enough to define $\tE[ \chi_S]$ for characters $\chi_S$ for each $S \subseteq [n]$, $|S| \leq s$, as one can then extend $\tE$ linearly to all of $P_s^n$. To do this, we define a probability distribution $\nu_X$ for every $X \subseteq [n]$, such that $|X| \leq s$, and then set $\tE[\chi_S]$ to be the expectation of $\chi_S$ under $\nu_S$.
\subsection{Closures}

We first define the concept of closed sets that is central to our argument.

\begin{definition}[Closure and closed sets] \stoc{\label{def:closure:app}}{\label{def:closure}}  For every $R\geq 1$, a set $A\subseteq [n]$ is \emph{$R$-closed} if for every
$v,v' \in A$, any path of length at most $R$ between $v$ and $v'$ is contained in $A$.
We say that $A$ is \emph{closed} if it is $3$-closed.

We define the \emph{$R$-closure of $A$}, denoted by $\cl_R(A)$, to be the intersection of all sets $B$ such that $A\subseteq B$ and
$B$ is $R$-closed. The \emph{closure of $A$}, denoted by $\cl(A)$, is the $3$-closure of $A$.
\end{definition}

\begin{remark}
Readers familiar with the definition of closure (or advice set) in the work of \cite{BGMT12} or \cite{TulsianiW12} will find the definition of closure above slightly different. The main difference is that our definition allows us to have some nice properties such as uniqueness and that the intersection of two closed sets is closed, which are very helpful for our proof. We stress however that the actual pseudo-expectation is the same as that of those works.
\end{remark}

Next, we give a constructive definition of closure of a set.
\begin{lemma}
Given $S \subseteq [n]$ and any $R < \min\{ \girth/2, \frac{1}{2\beta} \}$, the $R$-\emph{closure} of $S$ can be obtained by the following procedure run on $S$: Set $A := \emptyset$. For every $v, v' \in V(A) \cup S$ such that there is a path of length at most $R$ between $v$ and $v'$ in $\I$ not contained in $A$, add every clause in the path to $A$. Output $V(A) \cup S$. \label{lem:alternate-closure}
\end{lemma}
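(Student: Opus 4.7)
The plan is to prove both that the procedure terminates and that its output is precisely the minimal $R$-closed superset of $S$, which equals $\cl_R(S)$ as defined by intersection.

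Termination is immediate: $\I$ has only finitely many clauses and each iteration of the loop strictly enlarges the clause set $A$, so the procedure halts in a bounded number of steps. Let $T = V(A) \cup S$ denote the final output.

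Next I would verify that $T$ is itself $R$-closed. The stopping criterion directly supplies this: at termination there is no pair $v, v' \in T$ together with a path of length at most $R$ between them that is not contained in $A$. Hence every length-$\le R$ path between vertices of $T$ sits inside $A$, so its vertices lie in $V(A) \subseteq T$, as required by the definition of $R$-closedness.

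Then I would establish minimality by an induction on iterations: for every $R$-closed $B$ with $S \subseteq B$, the invariant $V(A) \cup S \subseteq B$ holds at every stage. The base case is trivial since $A = \emptyset$ and $S \subseteq B$. In the inductive step, whenever the procedure selects endpoints $v, v' \in V(A) \cup S \subseteq B$ and augments $A$ by a length-$\le R$ path $P$ between them, the $R$-closedness of $B$ forces $V(P) \subseteq B$, so the enlargement preserves $V(A) \cup S \subseteq B$. Combining the two characterizations, $T$ is the unique minimal $R$-closed superset of $S$. A brief separate observation—that any intersection of $R$-closed sets is $R$-closed, since a length-$\le R$ path between vertices of the intersection must lie in every member—shows that $\cl_R(S) = \bigcap \{B \supseteq S : B \text{ is } R\text{-closed}\}$ is well-defined and coincides with $T$.

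The main subtlety, and really the only thing to be careful about, is the bookkeeping convention that "path contained in $A$" is read as "every clause of the path lies in $A$", so that $R$-closedness becomes a property purely about vertex membership. Once this is pinned down, the proof is essentially an unwinding of the two extremal characterizations of the closure. I note that the bounds $R < \girth/2$ and $R < 1/(2\beta)$ are not actually needed for the correctness of this procedure; their role is downstream, where uniqueness of short paths (from the girth bound) and the expansion coefficient $\beta$ are invoked to bound $|\cl_R(S)|$ in terms of $|S|$.
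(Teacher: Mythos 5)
Your proof is correct and follows essentially the same route as the paper's: termination from finiteness, $R$-closedness of the output from the stopping criterion, and minimality by induction on the paths added, so the output coincides with the intersection defining $\cl_R(S)$. Your side remark that the hypotheses $R<\girth/2$ and $R<1/(2\beta)$ are not needed for this particular lemma (only for later size bounds) is also accurate.
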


\begin{proof}
Observe that the procedure terminates as there are only finitely many clauses. Further, the output is closed by virtue of the termination of the procedure. By induction on the time at which a path is added in the procedure, it is easy to show that every closed set containing $S$ must contain the path. Thus, $V(A)$ is a closed set containing $A$ and every clause $C$ such that $V(C) \subseteq V(A)$ satisfies $V(C) \subseteq \cl_R(S)$. The lemma now follows by the minimality of $\cl_R(S)$.
\end{proof}
\bnote{do we take $v,v'$ in $S$ or in $V(A)\cup S$? i think it needs to be the latter for the set to be closed} \pnote{Yes, that's right, I have corrected the statement of the Lemma now.}

Next, we bound the size of $\cl_R(S)$.
\begin{lemma} \label{lem:closure-size}
For any $R < \min\{ \girth/2, \frac{1}{2\beta} \}$ and $S\subseteq [n]$ such that $|S| \leq \frac{\eta n}{10R}$. Then, $|\C(\cl_R(S))| \leq 2R|S|$ and $|\cl_R(S)| \leq 2Rk|S|$.
\end{lemma}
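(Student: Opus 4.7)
The plan is to combine the iterative construction of $\cl_R(S)$ provided by Lemma~\ref{lem:alternate-closure} with the expansion property of $\I$. Writing $\C_T = \C(\cl_R(S))$, I view the construction as a sequence $\C_0 = \emptyset, \C_1, \ldots, \C_T$ in which each step appends the clauses of one new path of length $\ell_t \leq R$ whose endpoints both lie in $V(\C_{t-1}) \cup S$. Setting $V_t = V(\C_t) \cup S$, the goal is to pinch $|\C_T|$ between two linear bounds involving $|V_T|$ and then solve.

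For the iterative (upper) side: a path of $\ell_t$ hyperedges covers at most $(k-1)\ell_t + 1$ vertices, two of which (the endpoints) already belong to $V_{t-1}$, so the iteration adds at most $(k-1)\ell_t - 1$ new vertices. Telescoping across the $T$ iterations gives
\[
|V(\C_T)| \;\leq\; |V_T| \;\leq\; |S| + (k-1)|\C_T| - T,
\]
and since each step adds at least one and at most $R$ clauses, we have $T \geq |\C_T|/R$. For the expansion (lower) side, as long as $|\C_T| \leq \eta n$, the $(\eta n, \beta)$-expansion of $\I$ gives $|V(\C_T)| \geq (k-1-\beta)|\C_T|$. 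Chaining the two inequalities and cancelling the $(k-1)|\C_T|$ terms yields $T - \beta |\C_T| \leq |S|$, and then substituting $T \geq |\C_T|/R$ gives $|\C_T|(1/R - \beta) \leq |S|$. Since $R < 1/(2\beta)$ implies $1/R - \beta > 1/(2R)$, this rearranges to $|\C_T| \leq 2R|S|$, as desired.

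The one gap above is that expansion was invoked under the hypothesis $|\C_T| \leq \eta n$, which is what we are trying to establish. I would close this loop by a first-violation argument: suppose $|\C(\cl_R(S))| > 2R|S|$, and let $t^\ast$ be the least iteration at which $|\C_{t^\ast}| > 2R|S|$. Then $|\C_{t^\ast}| \leq 2R|S| + R \leq \eta n/5 + R \leq \eta n$ (using the hypothesis $|S| \leq \eta n/(10R)$ and that $R \leq 1/(2\beta)$ is a constant relative to $n$), so expansion does apply to $\C_{t^\ast}$ in place of $\C_T$ and the preceding paragraph gives $|\C_{t^\ast}| \leq 2R|S|$, a contradiction. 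For the second bound, plugging $|\C_T| \leq 2R|S|$ back into the telescoped vertex count yields $|\cl_R(S)| = |V_T| \leq |S| + (k-1)|\C_T| \leq (2R(k-1)+1)|S| \leq 2Rk|S|$ for $R \geq 1$.

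The only genuine obstacle is the circular dependence between needing $|\C_T| \leq \eta n$ to apply expansion and obtaining this bound only as a consequence; the first-violation argument breaks the circle cleanly, after which everything reduces to a short linear inequality between the telescoping vertex count and the expansion property. The girth hypothesis $R < \girth/2$ is not needed to bound $|\C_T|$ itself, but enters implicitly via Lemma~\ref{lem:alternate-closure} (uniqueness of short paths) to make the iterative description of the closure well-behaved.
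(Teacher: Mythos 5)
Your proposal is correct and follows essentially the same route as the paper: run the iterative closure-building procedure of Lemma~\ref{lem:alternate-closure}, bound the new vertices contributed per added clause (your telescoped bound $|V_T|\leq |S|+(k-1)|\C_T|-T$ with $T\geq |\C_T|/R$ is just a repackaging of the paper's ``at most $k-1-1/R$ new vertices per clause on average''), play this against $(\eta n,\beta)$-expansion to get $|\C|(1/R-\beta)\leq|S|$, and break the circularity with a first-violation argument, exactly as the paper does by stopping at the first round where the clause count exceeds its threshold. The only cosmetic differences are that the paper separates out the isolated vertices of $S$ (which your accounting absorbs by including $S$ in $V_0$) and uses the threshold $q=|S'|/(1/R-\beta)$ instead of $2R|S|$; neither affects correctness.
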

\bnote{didn't check this proof}
\begin{proof}
Consider the procedure described in Lemma \ref{lem:alternate-closure}. Let $S^{iso} \subseteq \cl_R(S)$ be the isolated vertices in $\cl_R(S)$. Observe that one cannot add any isolated vertices in the procedure and thus $S^{iso} \subseteq S$. Define $S' = S \setminus S^{iso}$. Then, $\cl_R(S) = \cl_R(S') \cup S^{iso}$.

If the process terminates before adding a total of $q  = \frac{|S'|}{\frac{1}{R}-\beta}$ clauses, then there's nothing to prove, since $|S'| \leq |S| \leq \frac{\eta n}{10R}$ yields that $q \leq \frac{\eta n}{5}$. Thus, suppose, for the sake of a contradiction, that the procedure adds $> q$ clauses and let $i^{th}$ round of the procedure be the first round where the number of clauses added exceeds $q$.

Let $\C_i$ be the set of clauses added in the procedure till the $i^{th}$ round and let $S'_i$ be the set of variables obtained by taking the union of variables covered by the clauses added and $S'$. Further, suppose that the $i^{th}$ round adds $q_i$ clauses. Then, $|\C_i| \leq q+q_{i} < \eta n$ and thus, $\C_i$ must satisfy the expansion requirement: $|V(\C_i)| \geq (q+q_i)(k-1-\beta)$. On the other hand, any new path of length $j \leq R$ added in a round adds at most $jk - (j-1) -2$ new vertices. Thus, on an average, every one of the at most $j$ new clauses added in any round of the procedure contribute at most: $k-1-1/j \leq k-1-1/R$ new vertices. Thus, $|S'_i| \leq |S'| + (q+q_i) (k-1-1/R)$.

Now, $$(q+q_i)(k-1-\beta) \leq |V(\C_i)| \leq |S'_i| \leq |S'|+ (q+q_i) \cdot (k-1-1/R).$$ This yields that $|S'| \geq (q + q_i) \cdot (1/R - \beta) > |S'|$ using that $q = \frac{|S'|}{\frac{1}{R}-\beta}$. This is a contradiction.

The size claimed in the lemma now follows by observing that $\frac{1}{R} -\beta \geq \frac{1}{2R}$ and that every clause contributes at most $k$ new variables.
\end{proof}

The following lemma summarizes the simple properties of the closures defined here.
\begin{lemma}[Simple Properties of Closures] \stoc{\label{lem:closure-props:app}}{\label{lem:closure-props}}
\begin{enumerate}
\item For any $R < \girth/2$, if $A$ and $B$ are $R$-closed and then so is $A\cap B$.
\item If $A\subseteq B$ then $\cl_R(A)\subseteq \cl_R(B)$.
\item Every connected component of $\cl_R(A)$ of size $\geq 2$ intersects $A$ in at least two elements.
\item Let $A = A_1 \cup A_2 \cup \ldots A_m $. Then, $\cl(A) = \cl( \cup_{i = 1}^m  \cl(A_i))$.
\end{enumerate}
\end{lemma}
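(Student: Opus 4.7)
The plan is to verify each of the four properties directly from the defining description $\cl_R(A)=\bigcap\{B\supseteq A : B \text{ is } R\text{-closed}\}$, using only elementary set manipulations; no appeal to the constructive description of Lemma~\ref{lem:alternate-closure} or to the expansion hypothesis will be needed.

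For part (1), I would take $u,v\in A\cap B$ and any path $P$ of length at most $R$ between them. Since $A$ is $R$-closed, $P\subseteq A$; since $B$ is $R$-closed, $P\subseteq B$; hence $P\subseteq A\cap B$, so $A\cap B$ is $R$-closed. Part (2) is immediate from the intersection definition: if $A\subseteq B$, then $\cl_R(B)$ is itself an $R$-closed superset of $A$, so it appears among the sets being intersected to form $\cl_R(A)$, giving $\cl_R(A)\subseteq\cl_R(B)$.

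For part (3), I will argue by contradiction via minimality of the closure. Let $K$ be a connected component of the induced subhypergraph on $\cl_R(A)$ with $|K|\geq 2$, and suppose for contradiction that $|K\cap A|\leq 1$. I plan to show that the proper subset $A':=\cl_R(A)\setminus(K\setminus A)$ is still $R$-closed and still contains $A$, contradicting the minimality of $\cl_R(A)$. That $A\subseteq A'$ is immediate, and $A'\subsetneq\cl_R(A)$ since $|K\setminus A|\geq |K|-1\geq 1$. The key step is to check $R$-closedness: given $u,u'\in A'$ and a path $P$ of length at most $R$ joining them in $\I$, use that $\cl_R(A)$ is $R$-closed to deduce $P\subseteq\cl_R(A)$, and then observe that within the induced subhypergraph on $\cl_R(A)$ the path $P$ witnesses that $u$ and $u'$ lie in the same connected component. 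So the cases split cleanly: if $u,u'\notin K$ then $P\subseteq\cl_R(A)\setminus K\subseteq A'$; if one endpoint were the (unique) element of $K\cap A$ and the other lay in $\cl_R(A)\setminus K$, the existence of $P$ would merge their components, contradicting that the second endpoint is outside $K$; and no other case arises because $|K\cap A|\leq 1$. This is the most substantive step of the lemma and is where I expect to need the most care.

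For part (4), both inclusions follow from part (2) together with the idempotence $\cl(\cl(A))=\cl(A)$, which itself holds because $\cl(A)$ is closed and hence is its own minimal closed superset. Since each $A_i\subseteq A$, part (2) gives $\cl(A_i)\subseteq\cl(A)$, so $\bigcup_i\cl(A_i)\subseteq\cl(A)$; applying $\cl$ and invoking idempotence yields $\cl(\bigcup_i\cl(A_i))\subseteq\cl(A)$. The reverse inclusion follows from $A=\bigcup_iA_i\subseteq\bigcup_i\cl(A_i)$ together with part (2) applied to both sides.
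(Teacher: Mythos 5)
Your proof is correct and follows essentially the same route as the paper: closedness of intersections by direct path containment, monotonicity of $\cl_R$ from the intersection/minimality definition, part (3) by deleting the offending component except for its single vertex of $A$ and contradicting minimality, and part (4) from monotonicity together with idempotence of $\cl$. The only cosmetic differences are that your part (1) does not even need the girth hypothesis $R<\girth/2$ that the paper invokes for uniqueness of the connecting path, and that your appeals to $\cl_R(\cdot)$ being itself $R$-closed implicitly use part (1) extended to arbitrary intersections, exactly as the paper does.
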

\begin{proof}
\begin{enumerate}
\item If there are two vertices $v,v'$ in $A \cap B$ such that $\dist(v,v') \leq R$, then since both $A$ and $B$ are closed, both of them should contain the  unique (since $R<\girth/2$) path between them.



\item By definition, $\cl_R(B)$ is an $R$-closed set containing $B\supseteq A$ and hence if $\cl_R(A)\nsubseteq \cl_R(B)$ then $\cl_R(A)\cap \cl_R(B)$ would be an even smaller $R$-closed set that contains $A$, contradicting the minimality of $\cl_R(A)$.

\item Suppose otherwise that there is some connected component $S$ of $\cl_R(A)$ with $|S|\geq 2$ intersecting $A$ with at most one element $\{ x \}$,
then we claim that $B = (\cl_R(A)\setminus S) \cup \{ x \}$ is an $R$-closed set containing $A$.  Clearly, $B\supseteq A$.
Now suppose for the sake of contradiction that there were two vertices $v \neq v'$ of distance at most $R$ in $B$ whose path is not in $B$.
Then since $B \subseteq \cl_R(A)$ and $\cl_R(A)$ is  $R$-closed, the path between $v$ and $v'$ must have had a vertex $u\in S \setminus \{ x \}$. But since one of $v$ or $v'$ must be different than $x$ (say $v'$), we get by contradiction that $v'$ was connected to $S$ in $\cl_R(A)$.

\item Let $B = \cl ( \cup_{i = 1}^m \cl(A_i))$. Since $\cl(A)$ is closed and contains $\cup_{i = 1}^m A_i$, $B \subseteq \cl(A)$. If $B \neq \cl(A)$, then, $B \supseteq \cup_{i = 1}^m A_i$ and is closed contradicting the minimality of $\cl(A)$.  \qedhere
\end{enumerate}
\end{proof}

\subsection{Definition of $\tE$}
Using the closures defined above, we define a local probability distribution on all closed sets and use it to define $\tE$. Let $C = (v_1, v_2, \ldots, v_k)$, where, each $v_j$ is the literal $\sigma_j x_{i_j}$ for some $\sigma_j\in \pmo$. The distribution $\mu_C$ simply assigns to $x\in \pmo^n$ the probability $\mu(\sigma_1 x_{i_1},\ldots, \sigma_k x_{i_k})$ (i.e., the probability that $C(x)=a$ under $\mu_C$ is set to $\mu(a)$ for every $a\in\pmo^k$).

The definition and the proof of consistency of the local distribution we define were shown by Benabbas et.~al.~\cite{BGMT12} for the weaker notion of closures they used (in order to define linear round solutions in the Sherali Adams hierarchy). The argument for our notion of closure is similar but we include it here for the sake of completeness.

For every set $S \subseteq [n]$, $|S| \leq d$, let $\cl(S)$ be the closure of $S$ and suppose $I_S$ is the set of isolated variables in $\cl(S)$. Define $\C(\cl(S))$ be all clauses $C$ such that $V(C) \subseteq \cl(S)$.
Then, we set:
\begin{equation} \label{eq:def-prob}
\nu_{\cl(S)}(x) = {Z_{\cl(S)}} \cdot \Pi_{C \in \C(\cl(S))}
\mu_C(x_{C})
\end{equation}
where $x_C$ the projection of $x$ on to the coordinates in $V(C)$, and $Z_{\cl(S)} = 2^{k|\C(\cl(S))|-|\cl(S)|}$ ($\geq 1)$.\bnote{this number is larger than one, so better to use notation that makes it clear that's the case}  Observe that the above expression tells us that the marginal distribution of $\nu_{cl(S)}$ over $I_S$ is uniform. We extend the notation above and write $\nu_T$ for the marginal of $\nu_{\cl(T)}$ on variables in $T$.

We now show that $\nu_{\cl(S)}$ defined above is indeed a probability distribution over $\cl(S)$.
\begin{lemma}
Let $A$ and $B$ be closed sets such that $A \subseteq B$ and $|\C(B)| \leq \eta n$. Then,
\begin{enumerate}
\item $\nu_{A}$ is a valid probability distribution: $\sum_{x \in \on^{A}} \nu_{A}(x) = 1$.
\item $\nu$ is locally consistent: for every $x \in \on^S$, $\nu_{A}(x) = \sum_{ y \in \on^{B \setminus A}} \nu_{B}(x \circ y)$.
\end{enumerate}
 \label{lem:prob-normalization}
\end{lemma}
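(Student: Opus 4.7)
The plan is to prove both statements by a Fourier expansion on $\pmo^A$ (resp.\ $\pmo^B$), using pairwise independence of $\mu$ to suppress low-weight Fourier coefficients and the expansion of $\I$ to suppress all non-trivial high-weight contributions. Set $p_C := 2^k \mu_C$, so that $\hat p_C(\emptyset)=1$ while pairwise independence of $\mu$ yields $\hat p_C(T)=0$ whenever $|T|\in\{1,2\}$, and $\nu_A(x)=Z_A\cdot 2^{-k|\C(A)|}\prod_{C\in\C(A)} p_C(x_C)$.

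For Part 1, I would expand
\[
\sum_{x\in\pmo^A}\nu_A(x) \;=\; Z_A\cdot 2^{|A|-k|\C(A)|}\sum_{(T_C)_{C\in\C(A)}:\,\oplus_C T_C=\emptyset}\prod_C \hat p_C(T_C),
\]
and, using $Z_A=2^{k|\C(A)|-|A|}$, reduce the claim to ruling out all non-trivial tuples. For such a tuple, set $\C^*=\{C:T_C\neq\emptyset\}$ and $V^*=\bigcup_C T_C$. Each $v\in V^*$ appears in an even (hence $\geq 2$) number of the $T_C$, and pairwise independence forces each $|T_C|\geq 3$, so $\sum_C|T_C|\geq 3|\C^*|$. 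Since $|\C^*|\leq|\C(B)|\leq\eta n$, the $(\eta n,\beta)$-expansion gives $|V(\C^*)|\geq(k-1-\beta)|\C^*|$ and excess degree $\sum_v(d_{\C^*}(v)-1)\leq(1+\beta)|\C^*|$; each $v\in V^*$ then contributes at least $1$ to the excess, so $|V^*|\leq(1+\beta)|\C^*|$ and thus $\sum_C|T_C|\leq|V^*|+(1+\beta)|\C^*|\leq 2(1+\beta)|\C^*|$. Combining with the lower bound forces $\beta\geq 1/2$, contradicting $\beta=1/200$.

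For Part 2, I would repeat the expansion on $\pmo^B$, factor $\chi_T(x\circ y)=\chi_{T\cap A}(x)\chi_{T\cap(B\setminus A)}(y)$, and integrate out $y$. After the prefactor identity $Z_B\cdot 2^{|B\setminus A|-k|\C(B)|}=Z_A\cdot 2^{-k|\C(A)|}=2^{-|A|}$, marginalization becomes
\[
\sum_y \nu_B(x\circ y) \;=\; Z_A\cdot 2^{-k|\C(A)|}\sum_{(T_C)_{C\in\C(B)}:\,\oplus T_C\subseteq A}\prod_C \hat p_C(T_C)\,\chi_{\oplus T_C}(x).
\]
Since the Fourier expansion of $\nu_A(x)$ is the same sum restricted to $(T_C)_{C\in\C(A)}$, consistency reduces to ruling out any tuple with $\oplus T_C\subseteq A$, $|T_C|\in\{0\}\cup[3,k]$, and some $T_{C_0}\neq\emptyset$ for $C_0\in\C(B)\setminus\C(A)$. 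The essential new input is that the $3$-closedness of $A$ forces every $C\in\C(B)\setminus\C(A)$ to have $|V(C)\cap A|\leq 1$, since otherwise two $A$-vertices of $V(C)$ are joined by the length-one path $\{C\}$ not contained in $A$.

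The main obstacle is exactly this last step of Part 2: after restricting the $T_C$'s to $B\setminus A$, the effective lower bound on $|T_C\cap(B\setminus A)|$ drops from $3$ to $2$, so the Part 1 double count does not apply verbatim. Closing the gap requires stratifying $\C^*$ by its $A$-incidence—clauses lying in $\C(A)$, clauses in $\C(B)\setminus\C(A)$ with zero $A$-anchors, and those with exactly one—and combining $3$-closedness of $A$ with the girth bound of $\I$ (to preclude two clauses of $\C(B)\setminus\C(A)$ sharing more than one vertex, and to rule out structurally incompatible sharing of $A$-anchors) in order to again apply the expansion inequality and derive the contradiction.
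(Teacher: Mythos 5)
Your Part 1 is correct and is a genuinely different route from the paper: writing $p_C=2^k\mu_C$, noting $\hat p_C(T)=0$ for $1\leq |T|\leq 2$, and ruling out nontrivial tuples with $\oplus_C T_C=\emptyset$ by the parity/degree double count against $(\eta n,\beta)$-expansion is a clean, self-contained argument (the paper instead obtains normalization as the special case $A=\emptyset$ of consistency). Your reduction of Part 2 to ruling out tuples $(T_C)_{C\in\C(B)}$ with $\oplus_C T_C\subseteq A$, all nonempty $|T_C|\geq 3$, and some nonempty $T_{C_0}$ with $C_0\in\C(B)\setminus\C(A)$ is also correct, and so is the observation that $3$-closedness of $A$ (in fact already $1$-closedness) gives $|V(C)\cap A|\leq 1$ for such clauses.

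However, Part 2 has a genuine gap, and you have identified but not closed it: with only ``$|T_C\setminus A|\geq 2$ per clause'' and expansion, the count closes exactly at $2\leq 2(1+\beta)$ and yields no contradiction. What is missing is precisely the combinatorial heart of the paper's proof (its Claim~\ref{clm:glkfdhgkjfd}, which drives the ordering/peeling Lemma~\ref{lem:order-consistency}): using $3$-closedness of $A$ to show that, within any connected component $\C'$ of the relevant clause set, the clauses meeting $A$ can be matched to disjoint partner clauses avoiding $A$, whence $|V(\C')\cap A|\leq |\C'|/2$ and, on average, a clause has at least $k-2$ ``private'' vertices outside $A$. Some such quantitative statement is indispensable in your framework too: for instance, feeding $|V(\C')\cap A|\leq |\C'|/2$ into your double count (splitting the excess degree $k|\C^*|-|V(\C^*)|\leq(1+\beta)|\C^*|$ between vertices of $\bigcup_C T_C\setminus A$, each of degree at least $2$, and the $A$-anchors) does force $\beta\geq 1/4$ and hence a contradiction; alternatively, the clause with $|\Gamma_A(C)|\geq k-2$ produced by the paper's claim admits no legal $T_C$ at all. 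But your sketch (``stratify by $A$-incidence, combine $3$-closedness with girth to rule out structurally incompatible sharing of $A$-anchors'') stops exactly where this work begins: it does not establish the disjoint-pairing bound or any substitute for it, and without that the expansion inequality alone cannot deliver the contradiction. So the proposal proves Part 1, but Part 2 is incomplete at its decisive step.
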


The following claim that we record as a lemma will be useful in the proof.
\begin{lemma} \label{lem:order-consistency}
There exists an ordering $C_1, C_2, \ldots, C_r$ of clauses in $\C_{A,B}$ and a partition of $B \setminus A$ into sets $F_1 \subseteq V(C_1), F_2 \subseteq V(C_2), \ldots, F_r \subseteq V(C_r)$ such that for every $j \leq r$, $|F_j| \geq k-2$ and $F_j \cap \left( \cup_{i > j} V(C_i) \right) = \emptyset$.
\end{lemma}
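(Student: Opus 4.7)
I would prove this lemma via a greedy peeling argument that constructs the ordering in reverse. Starting from $\C^{(0)} = \C_{A,B}$, at each iteration $t \geq 1$ I would select a clause $C \in \C^{(t-1)}$ whose \emph{exclusive} vertex set $V(C) \setminus (A \cup V(\C^{(t-1)} \setminus \{C\}))$ has size at least $k-2$, designate it as $C_{r-t+1}$, set $F_{r-t+1}$ to be this exclusive set, and remove $C$ from the remaining clauses. By construction the resulting $F_j$'s are subsets of $V(C_j)$, are disjoint from $V(C_i)$ for all $i > j$, have size at least $k-2$, and their union is $V(\C_{A,B}) \setminus A = B \setminus A$ (the latter equality uses the closedness of $A$, which ensures every vertex of $B \setminus A$ lies in some clause of $\C_{A,B}$).

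The substance of the proof is that, at every iteration, some clause $C \in \C'$ indeed has at least $k-2$ exclusive vertices. I would establish this from two inputs. First, the $3$-closedness of $A$ implies that each $C \in \C_{A,B}$ satisfies $|V(C) \cap A| \leq 1$: two vertices of $A$ inside $V(C)$ would form a length-$1$ path forcing $C \in \C(A)$, contradicting $C \in \C_{A,B}$. The same closedness further implies that if two clauses in $\C_{A,B}$ share a non-$A$ vertex, they cannot have distinct $A$-vertices, since the resulting length-$2$ path between those $A$-vertices would violate closure. Second, since $|\C'| \leq |\C(B)| \leq \eta n$, the $(\eta n, \delta)$-expansion of $\I$ yields $|V(\C')| \geq (k-1-\delta)|\C'|$, and hence $|V(\C') \setminus A| \geq (k-2-\delta)|\C'|$.

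A double-counting argument---summing clause-vertex incidences over $V(\C') \setminus A$ and using that each clause contributes at least $k-1$ non-$A$ vertex slots---gives that the total number of non-$A$ vertices appearing in exactly one clause of $\C'$ is at least $2|V(\C') \setminus A| - (k|\C'| - n_A) \geq (k-3-2\delta)|\C'|$, where $n_A = \sum_{C \in \C'} |V(C) \cap A| \leq |\C'|$. Naive averaging yields only a clause with $\geq k-3$ exclusive vertices. The main obstacle is bridging this one-vertex gap to the required $k-2$: I would close it using the second consequence of $3$-closedness noted above, which forces $\C'$ to decompose into subclasses indexed by the distinct $A$-vertices (together with a subclass of clauses with no $A$-vertex) that are pairwise vertex-disjoint in $V(\C') \setminus A$. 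Applying expansion within each subclass separately, together with a short case analysis on the distribution of $A$-intersections across $\C'$, rules out the degenerate configurations where every clause has exactly $k-3$ exclusive vertices, and hence produces a clause with at least $k-2$ exclusive vertices at every peeling step.
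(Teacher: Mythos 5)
Your overall mechanism is the same as the paper's: peel off, one clause at a time, a clause with at least $k-2$ vertices outside $A$ that are exclusive to it within the current collection, and certify its existence by expansion-based counting sharpened by the $3$-closedness of $A$. Two issues, one cosmetic and one substantive. The cosmetic one: your labeling runs the wrong way. At iteration $t$ the exclusive set you take is computed against the clauses that \emph{remain}, and those remaining clauses end up at positions $1,\dots,r-t$, i.e.\ \emph{earlier} than the position $r-t+1$ you assign; hence your $F_j$'s are disjoint from $V(C_i)$ for $i<j$, not $i>j$ as the lemma requires. (Two clauses sharing a single non-$A$ vertex $v$ already exhibit this: your second-removed clause becomes $C_1$ with $F_1\supseteq\{v\}\subseteq V(C_2)$.) The fix is just to put the peeled clause at the \emph{front} of the ordering, as the paper does, so this is a reversible slip. (Your side remark that closedness of $A$ guarantees every vertex of $B\setminus A$ is covered by $\C_{A,B}$ is also not right---coverage is about isolated vertices of $B$, not about $A$---but the paper's statement tacitly makes the same assumption.)

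The substantive gap is in the step that upgrades the naive bound from $k-3$ to $k-2$, which is the technical heart of the lemma. Your proposed route rests on the claim that $\C'$ decomposes into subclasses indexed by the distinct $A$-vertices, together with the $A$-free clauses, which are pairwise vertex-disjoint outside $A$. That claim is false in general: $3$-closedness does rule out two clauses with distinct $A$-vertices sharing a non-$A$ vertex (a length-$2$ path) and rules out a single $A$-free clause touching two different $A$-vertex classes (a length-$3$ path), but a \emph{chain} of two or more $A$-free clauses can connect the class of $u\in A$ to the class of $v\in A$ via a path of length $4$ or more, which $3$-closedness permits. So the subclasses need not be vertex-disjoint, the per-subclass expansion argument does not localize, and the ``short case analysis ruling out degenerate configurations'' is precisely the missing content. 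The paper closes the gap by a different argument: within each connected component $\C'$ of the current collection (singleton components trivially contribute $k-1$ exclusive vertices), it shows that the number of \emph{distinct} $A$-vertices in $V(\C')$ is at most $|\C'|/2$, by assigning to each such vertex $v$ a pair of clauses whose union meets $A$ only in $\{v\}$ (the partner clause exists by connectivity and must avoid $A$, else a length-$2$ path between distinct $A$-vertices), and noting that pairs for different $v$ must be clause-disjoint, since otherwise one obtains a path of length at most $3$ between two distinct $A$-vertices not contained in $A$. Charging only $|\C'|/2$ rather than $|\C'|$ for the $A$-vertices gives at least $(k-5/2-2\beta)|\C'|$ exclusive non-$A$ vertices, and averaging plus integrality yields a clause with at least $k-2$ of them at every peeling step. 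You would need this disjoint-pairs argument (or a correct substitute) for your proposal to go through.
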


We first complete the proof of Lemma~\ref{lem:prob-normalization} and then prove Lemma~\ref{lem:order-consistency}.

\begin{proof}[Proof of Lemma~\ref{lem:prob-normalization}]
Let $Z_A = 2^{-|A| +k |\C(A)|}$ and $Z_B = 2^{-|B| + k |\C(B)|}$. Let $\C_{A,B} = \C(B) \setminus \C(A)$. Using \eqref{eq:def-prob}, we have:

\begin{align*}
\sum_{y \in \on^{B \setminus A}} \nu_{B}(x \circ y) &= Z_B \cdot \Pi_{C \in \C(B)} \mu_C(x_C \circ y_C)\\
&= Z_B \cdot \Pi_{C \in \C(A)} \mu_C(x_C) \cdot \Pi_{C \in \C_{A,B}} \mu_C (x_C \circ y_C)\\
\end{align*}

To simplify notation, we will write $\mu_{i}$ for $\mu_{C_i}$ and $x^i$ for $x_{V(C_i)}$ where $x \in \on^n$. We have, using the ordering given by Lemma~\ref{lem:order-consistency}. Then,

\begin{align*}
\sum_{y \in \on^{B \setminus A}} \nu_{B}(x \circ y) &= Z_B \sum_{ y \in \on^{B \setminus A}} \cdot \Pi_{C \in \C(A)} \mu_C(x_C) \cdot \Pi_{C \in \C_{A,B}} \mu_C (x_C \circ y_C)\\
&= Z_B \sum_{y \in \on^{B \setminus A}} \Pi_{C \in \C(A)} \mu_C(x_C) \cdot \Pi_{i = 1}^r \mu_{i}(x^{i} \circ y^{i})\\
\text{Using the} & \text{ partition $F_1, F_2, \ldots, F_r$ }\\
&= Z_B \Pi_{C \in \C(A)} \mu_C(x_C) \cdot \sum_{\alpha_r \in \on^{F_r}} \mu_{r}(\zeta_r \circ \alpha_r)\cdots \sum_{\alpha_1 \in \on^{ F_1}} \mu_{1}(\zeta_1 \circ \alpha_1)\\
\text{ where $\zeta_r$ is the value} &\text{ for the variables in $V(C_r) \setminus F_r$.}\\
\text{ Using that $|F_r| \geq k-2$} &\text{ and pairwise independence of $\mu$}\\
&=  Z_B \Pi_{C \in \C(A)} \mu_C(x_C) \cdot \sum_{\alpha_r \in \on^{F_r}} \mu_{r}(\zeta_r \circ \alpha_r)\cdots \sum_{\alpha_2 \in \on^{ F_2}} \mu_{2}(\zeta_2 \circ \alpha_2) \cdot 2^{-|V(C_r) \setminus F_r|}\\
&\text{ Continuing similarly for $2,3,\ldots,r$}\\
&=  Z_B \Pi_{C \in \C(A)} \mu_C(x_C) \cdot 2^{-\sum_{i = 1}^r |V(C_r) \setminus F_r|}.
\end{align*}

Now, $\sum_{i = 1}^r |V(C_r) \setminus F_r| = kr - |B\setminus A|$. Further, $-|B|+k|\C(B)| -kr+|B \setminus A| = -|A| +k |\C(A)|$. Thus, $Z_B \cdot  2^{-\sum_{i = 1}^r |V(C_r) \setminus F_r|} = Z_A$ completing the proof.  \qedhere

We now complete the proof of Lemma \ref{lem:order-consistency}.
\begin{proof}[Proof of Lemma \ref{lem:order-consistency}]
For every $C \in \C_{A,B}$ define $\Gamma(C) = \{ v \in V(C) \mid \forall C' \neq C \in \C_{A,B} \text{, } v \notin V(C') \}$. For any collection $\C$ of clauses in $\C_{A,B}$, let $\Delta(\C) = |\cup_{C \in \C} \Gamma(C)|$. Similarly, define $\Gamma_A(C) = \Gamma(C) \setminus A$ and $\Delta_A(\C) = |\cup_{C \in \C} \Gamma_A(C)|$. We make the following claim:
\begin{claim} \label{clm:glkfdhgkjfd}
For any $\C \subseteq \C_{A,B}$, $\Delta_A(\C) \geq (k-5/2-2\beta)|\C|.$
\end{claim}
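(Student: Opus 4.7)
The plan is to decompose $\Delta_A(\C)=\sum_{C\in\C}|\Gamma_A(C)|$ using the fact that the sets $\{\Gamma_A(C)\}_{C\in\C_{A,B}}$ are pairwise disjoint (each isolates vertices appearing uniquely in $C$ within $\C_{A,B}$ and outside $A$), and then lower-bound each summand via expansion of $\I$ together with the closure of $A$. For each $C\in\C$, I would write $|\Gamma_A(C)|=k-|V(C)\cap(A\cup V(\C_{A,B}\setminus\{C\}))|$ and control the subtracted quantity by summing over $\C$.

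First I would use the closure of $A$ to show that $|V(C)\cap A|\le 1$ for every $C\in\C_{A,B}$: if $V(C)$ contained two distinct vertices of $A$, these would lie at distance $1$ via $C$, and the $3$-closure of $A$ would force $V(C)\subseteq A$, contradicting $C\notin\C(A)$. Thus $\sum_{C\in\C}|V(C)\cap A|\le|\C|$. Next, applying the $(\eta n,\beta)$-expansion of $\I$ to $\C$ (noting $|\C|\le|\C_{A,B}|\le\eta n$) yields $|V(\C)|\ge(k-1-\beta)|\C|$, and a standard counting argument then gives $|V_1(\C)|\ge(k-2-2\beta)|\C|$, where $V_1(\C)$ denotes the set of vertices appearing in exactly one clause of $\C$. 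Unpacking $\sum_{C\in\C}|V(C)\cap V(\C_{A,B}\setminus\{C\})|$ by vertex-clause incidences and splitting the contribution according to whether a vertex is multiply covered by $\C$ or singly covered by $\C$ but shared with $\C_2:=\C_{A,B}\setminus\C$, I obtain the intermediate bound
\[
\Delta_A(\C)\ge (k-3-2\beta)|\C|-|V_1(\C)\cap V(\C_2)|.
\]

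The hard step will be closing the $\tfrac{1}{2}|\C|$ gap to reach $(k-5/2-2\beta)|\C|$, since a naive bound on the cross-overlap $|V_1(\C)\cap V(\C_2)|$ via expansion of $\C_{A,B}$ gives $(1+\beta)|\C_{A,B}|$, which is far too large when $|\C|\ll|\C_{A,B}|$. To get the sharp $|\C|$-dependent bound I would combine two refinements: (i) an inclusion-exclusion refinement that accounts for $A$-vertices simultaneously shared with some other $\C_{A,B}$-clause---each such vertex is double-subtracted in the naive bound, and the closure of $A$ severely restricts how many such vertices occur (they must be the unique $A$-vertex of every clause they inhabit, by the length-$2$ closure argument applied to any two $\C_{A,B}$-clauses sharing an $A$-vertex); and (ii) applying the expansion inequality to $\C\cup\C_2^\ast$ where $\C_2^\ast:=\{C'\in\C_2:V(C')\cap V_1(\C)\ne\emptyset\}$, which after bounding $|\C_2^\ast|$ in terms of $|\C|$ using that each clause in $\C_2^\ast$ must share at least one vertex of $V_1(\C)$, controls the cross-overlap by a term proportional to $|\C|$ rather than $|\C_2|$. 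Together, these refinements save the needed $\tfrac{1}{2}|\C|$ and yield $\Delta_A(\C)\ge(k-5/2-2\beta)|\C|$.
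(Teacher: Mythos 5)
Your counting up to the intermediate bound $\Delta_A(\C)\ge (k-3-2\beta)|\C|-|V_1(\C)\cap V(\C_2)|$ is fine, but the ``hard step'' you defer is not merely hard: as posed it cannot be closed. With $\Gamma(C)$ taken literally relative to all of $\C_{A,B}$ (as you do), the cross term $|V_1(\C)\cap V(\C_2)|$ can be of order $k|\C|$ rather than $\tfrac12 |\C|$. For instance, let $B$ contain a ``star'': a center clause $C_0$ each of whose $k$ vertices also lies in a private clause $D_i$ with $V(D_i)\subseteq B$; this is consistent with high girth, expansion, and $A\subseteq B$ both closed. Taking $\C=\{C_0\}$ gives $\Gamma(C_0)=\emptyset$, hence $\Delta_A(\C)=0<(k-5/2-2\beta)|\C|$, so under your reading the inequality you are trying to prove is false, and neither refinement (i) nor (ii) can rescue it (in this example every clause of $\C_2^\ast$ meets $V_1(\C)$ in a distinct vertex, and the expansion inequality for $\C\cup\C_2^\ast$ is comfortably satisfied, so $|\C_2^\ast|$ is not bounded in terms of $|\C|$). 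The statement is true, and is what the iterative construction in Lemma~\ref{lem:order-consistency} actually needs, only when $\Gamma$ is understood relative to the sub-collection $\C$ itself, i.e.\ $\Gamma_A$ counts vertices lying in exactly one clause \emph{of $\C$} and outside $A$; the paper's own argument implicitly works with this reading, which removes your cross term altogether.

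Even with that reading you would still be short by $\tfrac12|\C|$: your only control on the $A$-intersections is the per-clause bound $|V(C)\cap A|\le 1$, costing $|\C|$, while the target allows a loss of only $|\C|/2$ beyond the $(k-2-2\beta)|\C|$ coming from expansion. The paper gains this factor by decomposing $\C$ into (maximally) connected components $\C'$: a singleton component satisfies $|\Gamma_A(C)|\ge k-1$ directly, and for a component with at least two clauses one proves $|V(\C')\cap A|\le |\C'|/2$ by a pairing argument --- every $v\in V(\C')\cap A$ is witnessed by a pair of clauses $C,C'$ of $\C'$ with $V(C\cup C')\cap A=\{v\}$, and distinct vertices yield disjoint pairs, since two intersecting pairs would create a path of length at most $3$ between two distinct vertices of $A$ not contained in $A$, contradicting $3$-closedness. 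This disjoint-pairs idea, applied per component together with expansion, is the missing ingredient; your refinement (i) only records that an $A$-vertex is the unique $A$-vertex of the clauses containing it, which does not produce the factor $1/2$.
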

We first complete the proof of the lemma using the claim. Since $\Delta_A(\C_{A,B}) \geq (k-5/2-2\beta)|\C_{A,B}|$ and $\beta <1/10$, there exists a clause $C$ such that $|\Gamma_A(C)| \geq k-2$. Now $V(C) \setminus A \supseteq \Gamma_A(C)$ and thus $|V(C) \setminus A| \geq k-2$. We place this clause at the beginning of the ordering, call it $C_1$ and set $F_1 = V(C) \setminus A$. We now iterate with $\C_{A,B} \setminus \{C\}$ to complete the construction, obtain a clause $C_2 \in \C_{A,B} \setminus C_1$ such that $|\Gamma_A(C_2)| \geq k-2$. Since $\Gamma_A(C_1)$ cannot intersect $\Gamma_A(C_2)$, we can now set $F_2 = V(C_2) \setminus V(C_1)$. Continuing this way yields the required ordering and partition of $B \setminus A$. \qedhere

We now complete the proof of the claim.
\begin{proof}[Proof of Claim~\ref{clm:glkfdhgkjfd}]
Fix any $\C$ and consider any (maximally) connected subgraph with edges $\C' \subseteq \C$. If $\C'$ consists of a single clause $C$, then $|V(C) \cap A| \leq 1$ (since $A$ is closed) and $V(C) \cap V(C') = \emptyset$ for any $C' \neq C \in \C$. Thus, $\Gamma_A(\C') \geq k-1$.

Now suppose $\C'$ consists of at least $2$ clauses. We first claim that $\Delta(\C') \geq (k-2-2\beta)|\C'|$. To see this, observe that $\C'$ is a collection of at most $\eta n$ clauses in $\I$ and thus, $|V(\C')| \geq (k-1-\beta)|\C|$. Further, every $v \in V(\C') \setminus \cup_{C \in \C'} \Gamma(C)$ belongs to at least two different clauses in $\C'$ and thus, $(k-1-\beta)|\C'| \leq |V(\C')| \leq \Delta(\C') + (k|\C'| - \Delta(\C'))/2$. Rearranging gives $\Delta(\C') \geq (k-2-2\beta)|\C'|$.

Next, we claim that that for every $v \in V(\C') \cap A$ there exists a pair of clauses $C,C'$ such that $ V\left( C \cup C' \right) \cap A = \{v\}$. Consider any clause $C \in \C$ such that $V(C) \cap A = \{v\}$. If there is another clause $C'$ such that $V(C') \cap A = \{v\}$, then observe that $V(C')$ cannot intersect $A$ in any other element (since $A$ is closed) and thus we can let $C,C'$ be the pair as above, corresponding to $v$. Otherwise, there exists a clause $C'$ such that $C' \in \C$ such that $V(C') \cap V(C) \neq \emptyset$ (since $V(\C')$ is connected) and $V(C') \cap A = \emptyset$ (as otherwise there would be a path between two distinct vertices of $A$, of length at most $2$ outside of $A$). Further, observe that all such pairs are disjoint. This is because if some pairs intersect, then they induce a path of length at most $3$ between two distinct vertices of $A$ that is not contained in $A$ (violating the $3$ closedness of $A$). Thus, $|V(\C') \cap A| \leq |\C'|/2$. Thus, we must have: $\Delta_A(\C') \geq \Delta(\C') -|\C'|/2 \geq (k-2-2\beta)|\C'| -|\C'|/2 = (k-5/2-2\beta)|\C'|$.

Since for every connected component $\C'$ inside $\C$ we have that $\Delta_A(\C') \geq (k-5/2-2\beta)|\C'|,$ we must have $\Delta_A(\C) \geq (k-5/2-2\beta)|\C|$ as promised. This completes the proof of claim.
\end{proof}
\end{proof}

\end{proof}

\subsection{$\tE$ and some basic properties}

The following is immediate from \eqref{eq:def-prob}:
\begin{lemma} \label{lem:disjoint-closure-indep}
Suppose $A$ and $B$ are closed disjoint sets such that $A \cup B$ is closed. Then, $\nu_{A \cup B} (x) = \nu_{A} (x_A) \cdot \nu_{B}(x_B)$.
\end{lemma}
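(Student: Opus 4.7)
The plan is to derive this directly from the explicit formula \eqref{eq:def-prob}, since $A$, $B$, and $A\cup B$ are all closed (so the local distributions on each are given verbatim by that formula, not via a marginalization). The whole argument reduces to showing that the clause set factorizes, i.e., $\C(A\cup B) = \C(A) \sqcup \C(B)$, after which the product over clauses cleanly separates into an $A$-piece and a $B$-piece, and the normalization constants multiply because $|A\cup B| = |A| + |B|$ and $|\C(A\cup B)| = |\C(A)| + |\C(B)|$.

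The key step is ruling out clauses that straddle $A$ and $B$. Suppose $C \in \C(A\cup B)$ has $V(C)$ meeting both $A$ and $B$. If two vertices $v_1,v_2$ of $V(C)$ lay in $A$, then since the clause $C$ itself is a path of length $1$ between $v_1$ and $v_2$, the closedness of $A$ (which is even $3$-closed) would force $V(C) \subseteq A$, contradicting the assumption that some vertex of $V(C)$ lies in $B$. The symmetric argument applies to $B$. Hence $|V(C) \cap A| \leq 1$ and $|V(C) \cap B| \leq 1$, giving $|V(C)| \leq 2$, which contradicts $k \geq 3$. Thus every clause of $\C(A\cup B)$ has its variables entirely in $A$ or entirely in $B$, and since $A \cap B = \emptyset$ these two pieces are disjoint, yielding the desired partition.

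Now I would just plug in: by \eqref{eq:def-prob},
\[
\nu_{A\cup B}(x) \;=\; Z_{A\cup B} \prod_{C \in \C(A\cup B)} \mu_C(x_C) \;=\; Z_{A\cup B} \left(\prod_{C\in \C(A)} \mu_C(x_C)\right)\left(\prod_{C\in\C(B)} \mu_C(x_C)\right),
\]
and since $Z_{A\cup B} = 2^{k|\C(A\cup B)| - |A\cup B|} = 2^{k|\C(A)| - |A|}\cdot 2^{k|\C(B)|-|B|} = Z_A Z_B$, the right-hand side equals $\nu_A(x_A)\cdot\nu_B(x_B)$.

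There is essentially no obstacle here: the statement is a clean consequence of the product formula for $\nu$ once clause-straddling is ruled out. The only place one has to be slightly careful is invoking closedness of the right set at the right moment to rule out a straddling clause, which is where the hypothesis that $k\geq 3$ (implicit in the setup) and that $A$ and $B$ are each closed (not merely that $A\cup B$ is) gets used.
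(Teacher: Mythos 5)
Your proof is correct and matches the paper's (implicit) argument exactly: the paper simply declares the lemma ``immediate from \eqref{eq:def-prob}'', and the intended justification is precisely your observation that closedness of $A$ and of $B$ (with $k\geq 3$, which the paper assumes implicitly throughout) forbids any clause of $\C(A\cup B)$ from straddling the two sets, so the clause products and the normalization constants $Z$ factorize. Nothing further is needed.
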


We now define the pseudo-expectation operator associated with the local distributions $\{ \nu_T \}_{|T| \leq s}$:
\begin{definition}[Pseudo-Expectation]
For the collection of consistent local probability distributions $\{\nu_T \}_{|T| \leq s}$ defined in \eqref{eq:def-prob} for $s \leq \eta n/6$, we define $\tE$ on $P_s^n$ by $$\tE[ \chi_S] = \E_{\nu_S}[\chi_S],$$ for every $|S| \leq s$.
\end{definition}

\begin{corollary} \label{cor:girth-SA}
Let $\I$ be a nice $(k,n)$ instance and $\mu$ a pairwise independent distribution over $\{\pm 1\}^k$. Then the family of local distributions $\{\nu_X\}_{X\subseteq [n], |X| < d}$ for $s = \eta n/6$ satisfies:
\begin{enumerate}
\item Completeness: For every clause $C$ of $\I$, $\nu_{V(C)} = \mu$.

\item Consistency: for every $S \subseteq T\subseteq [n]$, $|T| \leq d$, the
marginal of $\nu_T$ on to $S$ is $\nu_S$.
\end{enumerate}
\end{corollary}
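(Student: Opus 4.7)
The plan is to derive both conclusions directly from the construction of $\nu_X$ (as the marginal of $\nu_{\cl(X)}$) combined with the marginalization lemma (Lemma~\ref{lem:prob-normalization}) and monotonicity of closure (Lemma~\ref{lem:closure-props}). The proof splits naturally into the completeness part, which reduces to checking that the closure of a single clause is trivial, and the consistency part, which is a two-step marginalization.

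For completeness, I would first observe that for every $C\in\I$, the vertex set $V(C)$ is already $3$-closed when $\I$ is nice. Indeed, the girth bound $\girth\geq\tau\log n>4$ (for $n$ large enough) forbids any path of length at most $3$ between two vertices of $V(C)$ other than the single-clause path through $C$, since a second such path would combine with $C$ into a cycle of length $\leq 4$. The same argument rules out any clause $C'\neq C$ with $V(C')\subseteq V(C)$, so $\C(V(C))=\{C\}$. Hence $\cl(V(C))=V(C)$, and \eqref{eq:def-prob} collapses to $\nu_{V(C)}(x)=2^{k-k}\mu_C(x_{V(C)})=\mu_C(x_{V(C)})$, which is exactly $\mu$ pushed through the literal identification of $\pmo^{V(C)}$ with $\pmo^k$.

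For consistency, suppose $S\subseteq T$ with $|T|<d$. Item~2 of Lemma~\ref{lem:closure-props} yields $\cl(S)\subseteq\cl(T)$, with both sets closed. Since $d$ is chosen so that $|\C(\cl(T))|\leq 6|T|\leq \eta n$ by Lemma~\ref{lem:closure-size}, Lemma~\ref{lem:prob-normalization} applies and tells us that marginalizing $\nu_{\cl(T)}$ down to $\cl(S)$ produces $\nu_{\cl(S)}$. Marginalizing once more to $S$ gives $\nu_S$ by definition. Going via $T$ instead, one first marginalizes $\nu_{\cl(T)}$ down to $T$, which produces $\nu_T$, and then marginalizes $\nu_T$ down to $S$; since marginalization is associative in the obvious sense, both routes agree, so the marginal of $\nu_T$ on $S$ is $\nu_S$.

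The argument is short because all the real work was done in the lemmas already proved in this section, so I do not anticipate a significant obstacle. The only points requiring mild care are verifying that the size hypothesis of Lemma~\ref{lem:prob-normalization} is met along the inclusion $\cl(S)\subseteq\cl(T)$ (which follows from the size bound of Lemma~\ref{lem:closure-size} and the choice of $d$), and checking that the girth is sufficient to make $V(C)$ closed — both of which are immediate from the definition of a nice instance.
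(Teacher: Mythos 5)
Your proposal is correct and follows essentially the same route as the paper: completeness via $\C(V(C))=\{C\}$ (so that \eqref{eq:def-prob} reduces to $\mu_C$), and consistency via the marginalization statement of Lemma~\ref{lem:prob-normalization} applied to $\cl(S)\subseteq\cl(T)$. You simply make explicit the girth and size checks that the paper leaves implicit.
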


\begin{proof}
The completeness property follows from \eqref{eq:def-prob} and  $\C(V(C)) = \{C\}$.
The consistency property follows from Lemma \ref{lem:prob-normalization}.
\end{proof}

Finally, since $\tE$ corresponds to a valid expectation locally, we obtain that $\tE$ induces a positive semidefinite (PSD) inner product on any space of functions of a small number of variables.
\begin{lemma}[Local PSDness] \stoc{\label{lem:local-PSD:app}}{\label{lem:local-PSD}}
Let $\tE$ be the pseudo-expectation operator defined by the local distributions $\{ \nu_S \}_{|S| \leq s}$. Let $T$ be a subset of $[n]$ of size at most $s$. Then for every $f\in V = \Span \{ \chi_A \mid A\subseteq T\}$, $\tE[f^2] \geq 0$.
\end{lemma}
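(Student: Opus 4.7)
The plan is to reduce the PSDness statement to the fact that $\tE$, when restricted to polynomials supported on a small set of variables, literally is expectation under an honest probability distribution.

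First I would observe that since $f \in \Span\{\chi_A : A \subseteq T\}$ and we are in the multilinear setting ($x_i^2 = 1$), the product $f^2$ is again supported on characters indexed by subsets of $T$; in particular $f^2$ is a legitimate element of $P_s^n$ because $|T| \leq s$, so $\tE[f^2]$ is well-defined. Next I would pass to the closure: by Lemma~\ref{lem:closure-size} applied with $R = 3$, $|\C(\cl(T))| \leq 2\cdot 3\cdot |T| \leq 6s = \eta n$, so $\cl(T)$ lies in the regime in which Lemma~\ref{lem:prob-normalization} applies. Hence $\nu_{\cl(T)}$ is an honest probability distribution on $\pmo^{\cl(T)}$.

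The core step is to identify $\tE$ with expectation under this real distribution on polynomials supported on $T$. For each character $\chi_S$ with $S \subseteq T$, the definition of $\tE$ gives $\tE[\chi_S] = \E_{\nu_S}[\chi_S]$, and by the local consistency part of Corollary~\ref{cor:girth-SA} (applied to $S \subseteq \cl(T)$), $\nu_S$ is the marginal of $\nu_{\cl(T)}$ on $S$, so $\E_{\nu_S}[\chi_S] = \E_{\nu_{\cl(T)}}[\chi_S]$. Writing $f = \sum_{A \subseteq T} \hat f(A)\chi_A$ and expanding $f^2 = \sum_{A,B \subseteq T} \hat f(A)\hat f(B)\chi_{A\triangle B}$, linearity then yields
\[
\tE[f^2] \;=\; \sum_{A,B\subseteq T} \hat f(A)\hat f(B)\,\E_{\nu_{\cl(T)}}[\chi_{A\triangle B}] \;=\; \E_{\nu_{\cl(T)}}[f^2]\;\geq\;0,
\]
since $\nu_{\cl(T)}$ is a bona fide probability distribution and $f^2 \geq 0$ pointwise.

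I do not anticipate a genuinely hard step: the whole content of the lemma is the routine observation that $\tE$ is \emph{locally} an actual expectation, something we already built into the construction. The only place one has to be careful is in verifying the size bound that lets us invoke Lemma~\ref{lem:prob-normalization} for $\cl(T)$ (which is why the parameter $s = \eta n/6$ was chosen as it was, giving us the slack $|\C(\cl(T))| \leq 6|T| \leq \eta n$), and in using multilinearity to ensure that $f^2$ still lies in the declared domain $P_s^n$ of $\tE$.
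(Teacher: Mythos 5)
Your proof is correct and follows essentially the same route the paper intends: the paper states this lemma without a written proof, as an immediate consequence of the fact that on characters indexed by subsets of $T$ the operator $\tE$ agrees with the honest expectation under the genuine distribution $\nu_{\cl(T)}$, which is exactly your argument. The one caveat is a parameter-bookkeeping quibble you inherit from the paper itself: invoking Lemma~\ref{lem:closure-size} with $R=3$ for $|T|$ as large as $s=\eta n/6$ exceeds that lemma's stated hypothesis $|T|\leq \eta n/(10R)$, but the same size bound is already implicitly required just to know that $\nu_S$ is a valid distribution for every $|S|\leq s$ when defining $\tE$, so this is not a gap introduced by your argument.
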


\section{Local Distribution on Unions}

In this section we make an important step towards showing the positivity property of our pseudo-distribution by showing that if two sets $A$ and $B$ are sufficiently closed, then the local distribution on $A\cup B$ is only determined by the clauses that are contained in $A$ or in $B$. In particular, this implies that if $A$ and $B$ are disjoint then the distribution on $A$ is \emph{independent} of the distribution of $B$.  The main result of this section is the following expression for the local distribution on the union of $A$ and $B$  where $A$ is $R$-closed for a sufficiently large constant $R$ and $B$ is closed.
\begin{figure}
\begin{center}
\includegraphics[width=3in]{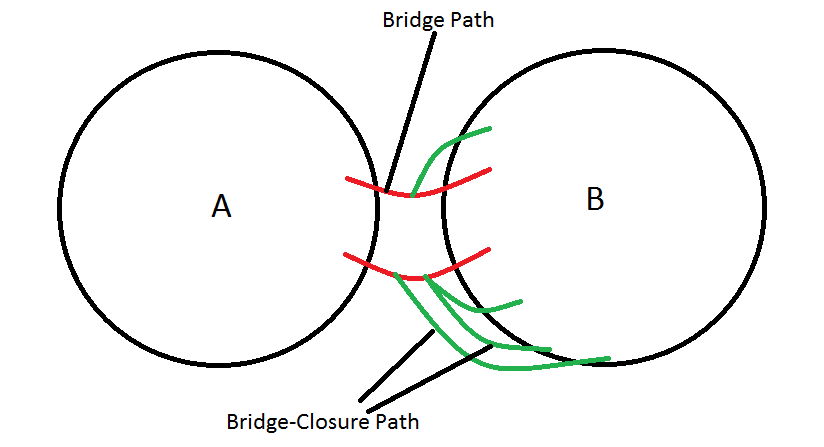}
\end{center}
\caption{A possible configuration when $A$ is $R$-closed and $B$ is closed. All solid lines indicate paths of length at most 3. }
\label{fig:disc-closure-1}
\end{figure}



\begin{lemma}[Local Distribution on Union of Two Closures]
Suppose $A$ is $R$-closed for $R \geq  100$ and $B$ is closed. Then, for any $x \in \on^{A \cup B}$, $$\nu_{A \cup B}(x) = Z_{A,B} \cdot \Pi_{C:V(C) \subseteq A \cup B} \mu_C(x_C),$$ where $Z_{A,B} = 2^{k |\C(A \cup B)| - |A \cup B|}$. \label{lem:local-dist-union}
\end{lemma}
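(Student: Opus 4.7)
The plan is to follow the template of Lemma~\ref{lem:prob-normalization} but adapted to marginalizing $\nu_{\cl(A\cup B)}$ down to the set $A\cup B$, which is in general not itself closed. Writing $W := \cl(A\cup B)\setminus(A\cup B)$ for the ``extra'' variables and $\mathcal{C}' := \C(\cl(A\cup B))\setminus\C(A\cup B)$ for the ``extra'' clauses, I will first expand
\[
\nu_{A\cup B}(x) \;=\; \sum_{y\in\on^W} \nu_{\cl(A\cup B)}(x\circ y) \;=\; Z_{\cl(A\cup B)}\,\prod_{C\in\C(A\cup B)}\mu_C(x_C)\,\cdot\,\sum_{y\in\on^W}\prod_{C\in\mathcal{C}'}\mu_C((x\circ y)_C),
\]
using \eqref{eq:def-prob} and the fact that clauses in $\C(A\cup B)$ have all their variables in $A\cup B$, so their $\mu_C$-factors pull out of the sum over $y$. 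The lemma reduces to showing that the residual sum equals the constant $2^{-k|\mathcal{C}'|+|W|}$ independently of $x$; combined with $Z_{\cl(A\cup B)} = 2^{k|\C(\cl(A\cup B))|-|\cl(A\cup B)|}$ this produces exactly $Z_{A,B}$.

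To evaluate the residual sum I would reuse the telescoping trick from the proof of Lemma~\ref{lem:prob-normalization}: construct an ordering $C_1,\ldots,C_r$ of the clauses of $\mathcal{C}'$ together with a partition $F_1,\ldots,F_r$ of $W$ satisfying $F_j\subseteq V(C_j)$, $|F_j|\geq k-2$, and $F_j\cap \bigcup_{i>j}V(C_i)=\emptyset$. Given such an ordering, I sum $y$ innermost-first over $F_r, F_{r-1},\ldots, F_1$; at step $j$ the inner sum is the marginal of $\mu_{C_j}$ on the at most two coordinates in $V(C_j)\setminus F_j$ (at values determined by $x$ and the remaining $y$), which by pairwise independence of $\mu$ is uniform and equals $2^{-|V(C_j)\setminus F_j|}$. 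The telescoping product is $\prod_j 2^{-|V(C_j)\setminus F_j|} = 2^{-(kr-|W|)}$, exactly as needed.

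The main obstacle is proving the existence of this ordering-and-partition, which is the analog of Claim~\ref{clm:glkfdhgkjfd} in a setting where the ``inner'' set $A\cup B$ is not closed. The proof rests on two structural observations. First, every clause $C\in\mathcal{C}'$ satisfies $|V(C)\cap A|\leq 1$ and $|V(C)\cap B|\leq 1$: if $C$ contained two $A$-vertices (resp.\ $B$-vertices), the length-one path $C$ between them would have to lie in $\C(A)$ (resp.\ $\C(B)$) by $3$-closedness, contradicting $C\in\mathcal{C}'$. Second, since $|\mathcal{C}'|\leq |\C(\cl(A\cup B))|\leq \eta n$ by Lemma~\ref{lem:closure-size}, the expansion property of $\I$ applies to every sub-collection $\mathcal{C}\subseteq\mathcal{C}'$, and as in Claim~\ref{clm:glkfdhgkjfd} this yields the private-vertex bound $\Delta(\mathcal{C})\geq (k-2-2\beta)|\mathcal{C}|$.

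The hypothesis $R\geq 100$ enters here crucially to bound $|V(\mathcal{C})\cap(A\cup B)|$. For any connected component $\mathcal{C}''$ of $\mathcal{C}$, any two distinct $A$-vertices in $V(\mathcal{C}'')$ must lie at distance strictly greater than $R$ within $\mathcal{C}''$: otherwise the connecting path would be a path in $\I$ of length $\leq R$ between two $A$-vertices and hence would lie inside $\C(A)$ by $R$-closedness, contradicting $\mathcal{C}''\subseteq\mathcal{C}'$. A standard hypergraph packing argument therefore gives $|V(\mathcal{C}'')\cap A| \leq O(|\mathcal{C}''|/R) + O(1)$, and the analogous argument for $3$-closed $B$ gives $|V(\mathcal{C}'')\cap B|\leq O(|\mathcal{C}''|/3)+O(1)$. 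Summing over connected components,
\[
\Delta_{A\cup B}(\mathcal{C})\;\geq\;\Delta(\mathcal{C})\,-\,|V(\mathcal{C})\cap(A\cup B)|\;\geq\;\bigl(k-\tfrac{7}{3}-2\beta-O(1/R)\bigr)|\mathcal{C}|-O(1).
\]
For $R\geq 100$ and $\beta$ as small as guaranteed by niceness of $\I$, this average strictly exceeds $k-3$, so by integrality some clause in $\mathcal{C}$ has $\geq k-2$ private outside-$(A\cup B)$ vertices. Placing this clause first, setting $F_1$ to those variables, and iterating on $\mathcal{C}\setminus\{C_1\}$ (for which both structural facts are preserved) produces the desired ordering and partition and completes the proof.
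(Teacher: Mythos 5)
Your reduction (marginalize $\nu_{\cl(A\cup B)}$, pull out the clauses inside $A\cup B$, and kill the residual sum by a telescoping elimination ordering as in Lemmas~\ref{lem:prob-normalization} and~\ref{lem:order-consistency}) is a legitimate alternative to the paper's argument, and your structural observation that every extra clause meets $A$ and $B$ in at most one vertex each is correct. But the step that would carry the whole proof --- the averaging claim that some clause of every sub-collection $\mathcal{C}\subseteq\mathcal{C}'$ has at least $k-2$ private vertices outside $A\cup B$ --- does not follow from the ingredients you invoke. Two problems. First, the constants: Lemma~\ref{lem:long-dist-many-edges} gives $|V(\mathcal{C}'')\cap B|\leq 2|\mathcal{C}''|/3$ (not $|\mathcal{C}''|/3$), and the additive terms coming from $\max(1,\cdot)$ are incurred \emph{per connected component}, so they cannot be absorbed into a single global $O(1)$: with $\Theta(|\mathcal{C}|)$ small components they total $\Theta(|\mathcal{C}|)$. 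Second, and more seriously, for components of $2$ or $3$ clauses the expansion-plus-packing bound is genuinely too weak: for a $2$-clause component, expansion only gives $\Delta\geq(k-2-2\beta)\cdot 2$, and subtracting one $A$-vertex and one $B$-vertex leaves an average of $k-3-2\beta$ per clause, which is \emph{below} $k-3$; the same happens for $3$-clause components. So if $\mathcal{C}$ is dominated by such components, your global average does not exceed $k-3$ and the integrality step yields only $k-3$ private vertices, not $k-2$ --- not enough for the pairwise-independence telescoping. The claim can likely be repaired (e.g., the girth condition forces such small components to be trees, giving $\Delta=(k-1)c-(c-1)\cdot 1$ directly and restoring an average above $k-3$; this is exactly the kind of extra structure the paper's Claim~\ref{clm:glkfdhgkjfd} exploits via its disjoint-pairs argument), but as written this step fails.

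It is also worth noting that the paper does not take this route at all. Instead of extracting a per-clause elimination ordering, it first proves (Claim~\ref{claim:bridge-clauses}, using the girth and the $R$-closedness of $A$ quite heavily) that the extra clauses of $\cl(A\cup B)$ are exactly the clauses of bridge paths and bridge-closure paths, which are pairwise disjoint outside $A\cup B$ up to controlled intersections; the residual sum is then evaluated in two stages (Claim~\ref{claim:big-calculation}), first over the bridge-closure variables and then over the bridge variables, each stage using only that $\mu$ has uniform marginals on at most two coordinates. That explicit characterization is what substitutes for the quantitative averaging you are missing, so if you want to pursue your route you must either import it or strengthen your counting for small components as indicated above.
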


\bnote{do we know of a counterexample when $A$ is merely $3$-closed? If so, perhaps we can show a figure with it to give some intuition}
\pnote{Added a figure for a counter example.}
\bnote{I actually means a counterexample showing two disjoint 3-closed sets $A$ and $B$ such that the distributions on $A$ and $B$ are not independent. (Rather than simply a counterexample showing you can have other clauses than bridges and bridge-closures.) Do we have such a thing?}
\pnote{I tried a bit to construct a counter example, but haven't been able to till now.}

We make two convenient definitions before proceeding, see Figure~\ref{fig:disc-closure-1}:
\begin{definition}[Bridge Paths]
For any two closed sets $A$ and $B$, a path $P$ of length at most $3$ is said to be a bridge path for the pair $A, B$ if $|P \cap A| = |P \cap B|  = 1$.
\end{definition}

\begin{definition}[Bridge-Closure Paths]
For any two closed sets $A$ and $B$, a path $P$ of length at most $3$ is said to be a bridge-closure path for the pair $A,B$, if there exists a bridge path $P'$ such that $|P' \cap P | = 1$ and $|P \cap B| = 1$ but $C \cap A = \emptyset$.
\end{definition}

\paragraph{Proof overview.} Since the proof is rather technical, let us start with a high level overview of it. 
We first show the only extra clauses added to $\cl(A\cup B)$ come from bridge and bridge-closure paths. Moreover, all these additional paths are disjoint apart from their end points. What this amounts to is that the new connections between $A$ and $B$ can be thought of as a collection of disjoint \emph{trees} $T_1,\ldots,T_r$ such that each of these trees has a root in $A$ and its leaves in $B$. The marginal distribution  over $A\cup B$ is obtained by summing up all possible assignments to the intermediate nodes in these trees. Thus at the heart of the proof is the observation that for every such tree $T$ with root $x_0$ and leaves $x_1,\ldots,x_{\ell}$, if we consider the distribution over the variables of $T$ induced by the tree (i.e., where the probability of $x$ is proportional to $\prod_{C\in\C(T)} \mu_C(x_C)$) then the marginal distribution over $\{ x_0,x_1,\ldots,x_{\ell} \}$ is uniform. Hence these trees create no dependence between $A$ and $B$. 

As a final remark, observe that the example from Figure \ref{fig:correlations} shows that $A$ and $B$ being $2$-closed is not enough to guarantee the statement of the lemma. While we believe that at least one of the sets out of $A$ and $B$ should be $R$-closed for some $R > 3$ for the lemma to hold, currently, we do not have any example of a counter example demonstrating this point. We now proceed with the actual proof.

\begin{proof}[Proof of Lemma~\ref{lem:local-dist-union}]
Let $D = \cl(A \cup B)$. Let $\C_{A,B}$ and $\C_{B}$ be the set of bridge paths and bridge closure  paths of $B$ for the pair $A,B$, respectively. Observe that $V(\C_{A,B}) \cup V(\C_{B}) \subseteq D$. We now show that these are the only \emph{extra} clauses in $D$:

We first make a few simple observations:

The first observation describes how bridge paths and bridge-closure paths intersect.

\begin{claim}[Intersections]
\begin{enumerate}
\item For any distinct $P_1, P_2 \in \C_{A,B}$, $P_1 \cap P_2 \subseteq A \cup B$.
\item For any distinct $P_1, P_2 \in \C_B$, $P_1 \cap P_2 \subseteq V(P) \cup B$ where $P$ is a bridge path.
\item For any $P \in \C_B$ and $P' \in \C_{A,B}$, $|V(P) \cap V(P')| \leq 1$.
\item Suppose $P_1, P_2 \in \C_B$ are such that $P_1 \cap P \neq \emptyset$ and $P_2 \cap P' \neq \emptyset$ for some bridge paths $P \neq P'$. Then, $P_1 \cap P_2 = \emptyset$.
\end{enumerate}
\label{claim:bridge-dont-intersect}
\end{claim}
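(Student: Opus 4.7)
The plan is to prove each of the four parts by contradiction, and in every case to exploit the fact that every bridge path and every bridge-closure path has length at most $3$. Hence any walk we build by gluing two or three such paths together has length at most $12$, which is far below both $\girth/2$ and the closure radius $R \geq 100$. This lets us freely invoke the $R$-closedness of $A$ (any path of length $\leq R$ between two vertices of $A$ lies entirely inside $A$) and the uniqueness of sufficiently short paths in $\I$ that follows from the girth bound.

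For Part 1, I would fix distinct bridge paths $P_1, P_2 \in \C_{A,B}$ with $A$-endpoints $a_1, a_2$ and $B$-endpoints $b_1, b_2$, and suppose for contradiction that $v \in (V(P_1) \cap V(P_2)) \setminus (A \cup B)$. The two sub-paths $a_i \leadsto v$ inside $P_i$ concatenate into a walk of length at most $6$ from $a_1$ to $a_2$ passing through $v$. If $a_1 \neq a_2$, the $R$-closedness of $A$ forces this walk (whose unique underlying path exists by the girth bound) to lie inside $A$, contradicting $v \notin A$. If $a_1 = a_2 = a$, uniqueness of length-$\leq 6$ paths between $a$ and $v$ forces the two $a \leadsto v$ pieces of $P_1, P_2$ to coincide; repeating the argument from the $B$-side using the closedness of $B$ then yields $P_1 = P_2$, contradicting distinctness.

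Parts 2, 3, and 4 follow the same template. For Part 3, if $P \in \C_B$ and $P' \in \C_{A,B}$ share two distinct vertices $u, w$, the union of their $u$-to-$w$ sub-paths has length at most $6$, which is either a cycle (ruled out by girth) or a common shared sub-path; in the latter case, walking from the $A$-endpoint of $P'$ through $P'$ and then along $P$ to the $B$-endpoint of $P$ gives a length-$\leq 6$ excursion that starts in $A$, exits $A$ (since $P$ contains no $A$-vertex), and reaches a point in $B$, and then invoking $R$-closedness together with the bridge structure yields the contradiction. Part 4 is the cleanest application: gluing $P_1$, its associated bridge path $P$, the other bridge path $P'$, and $P_2$ produces a walk of length at most $12$ between the $A$-endpoints of $P$ and $P'$ that passes through $V(P_1) \cap V(P_2) \subseteq [n] \setminus A$, which $R$-closedness forbids whenever those endpoints differ. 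Part 2 is handled in the same spirit, reducing essentially to Part 4 when the two bridge-closure paths are attached to different bridge paths and to a direct $R$-closedness/girth argument when they are attached to the same bridge path $P$.

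The main obstacle will be the careful combinatorial case analysis of which endpoints can collapse: when the $A$-endpoints (or $B$-endpoints) of the underlying bridge paths coincide, the constructed ``path between two $A$-vertices'' degenerates to a closed walk, so we cannot apply $R$-closedness directly; instead we must use girth-uniqueness to conclude that two supposedly distinct short pieces actually coincide, which then contradicts the distinctness hypothesis on $P_1, P_2$. Enumerating these degenerate configurations uniformly across all four parts, and keeping track of exactly which vertices of each bridge/bridge-closure path lie in $A$ versus $B$ versus neither, is the bookkeeping-heavy part of the proof.
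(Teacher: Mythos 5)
Your overall strategy is the same as the paper's: concatenate the constituent paths (each of length at most $3$) into walks of length at most $12$ and derive a contradiction either from the girth bound (no short cycles, hence uniqueness of short paths) or from the $R$-closedness of $A$. Parts 3 and 4 of your sketch essentially reproduce the paper's argument, including the split in Part 4 according to whether the two bridge paths meet $A$ in the same vertex (short cycle, girth violation) or in two distinct vertices (short path between two vertices of $A$ leaving $A$, violating $R$-closedness).

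The genuine gap is in your handling of the degenerate case of Part 1 (and it recurs wherever Part 2 is supposed to fall back on a ``$B$-side'' argument). When the two bridge paths share their $A$-vertex, $a_1=a_2=a$, you correctly use girth-uniqueness to make the two pieces from $a$ to the common vertex $v$ coincide, but you then propose to ``repeat the argument from the $B$-side using the closedness of $B$.'' The situation is not symmetric: $A$ is $R$-closed with $R\geq 100$, whereas $B$ is only $3$-closed, and the glued walk from $b_1$ to $b_2$ that your argument produces has length up to $6$; as soon as its length exceeds $3$, the $3$-closedness of $B$ gives no information. Concretely, take clauses $C_1\ni a,v$, $C_2\ni v,u_1$, $C_3\ni u_1,b_1$, $C_4\ni v,u_2$, $C_5\ni u_2,b_2$, with $a\in A$, $b_1\neq b_2\in B$, and all other vertices outside $A\cup B$. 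Then $P_1=\{C_1,C_2,C_3\}$ and $P_2=\{C_1,C_4,C_5\}$ are distinct bridge paths sharing the vertex $v\notin A\cup B$; the configuration is a tree (no girth violation), it contains only one vertex of $A$ (so $R$-closedness of $A$ is untouched), and $b_1,b_2$ are at distance $4$ (so $3$-closedness of $B$ is untouched). Hence in this case no contradiction is available from the stated hypotheses and your plan cannot close it; any repair must use something other than a symmetric closedness argument (for instance, settling for the structural fact actually used downstream, namely that the extra clauses organize into trees rooted at a single vertex of $A$ with leaves in $B$). Note that the paper's own one-line proof of Part 1 also only addresses the case where the glued walk joins two \emph{distinct} vertices of $A$, so you have correctly isolated the delicate case, but the proposed resolution does not work.
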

\begin{proof}
\begin{enumerate}

\item If the claim weren't true, then there must be a path of length $\leq 6$ between two vertices of $A$ which violating that $A$ is $R$-closed.
\item Suppose first that there is a bridge path $P$ such that $P \cap P_1 \neq \emptyset$ and $P \cap P_2 \neq \emptyset$. If either of $P_1$ or $P_2$ intersect $P$ in more than one element, then there is a cycle of length at most $6$ in $G$ which violates the fact that $G$ has $\Omega(1)$ girth. If $P_1$ and $P_2$ intersect in an element not contained in $V(P)$, then, again there is a cycle of length at most $9$ in $G$ violating the high girth of $G$. Similarly, if $P_1, P_2$ intersect inside $B$, then, they cannot intersect outside of $B$ and further, they cannot both intersect the same bridge path as it would yield a cycle of length at most $9$ in $G$. Thus in both the cases, $P_1 \cap P_2 \subseteq V(P) \cup B$ for some bridge path $P$.

\item Otherwise there is a cycle of length at most $6$ in $G$ violating that $G$ has girth $\omega(1)$.

\item If not, then if $|P\cap P' \cap A|=1$ then there is a cycle of length $12$ in the graph, contradicting our assumption on the girth. Otherwise $|P \cap P' \cap A|=2$ which means that there is a path of length at most $12$ between two distinct vertices of $A$.
\end{enumerate}
\end{proof}

%

The next observation shows that there is no path of length at most $3$ that connects two bridge paths, two bridge-closure paths or two bridge-bridge-closure paths that are not contained in $A \cup B$.

\begin{claim}[No Extra Paths]
\begin{enumerate}
\item There is no path $P$ of length at most $3$ not contained in $A$ that connects a bridge path $P'$ and $A$.
\item There is no path of length at most $3$ not contained in $A$ that connects $P \in \C_{A,B}$ with $P' \in \C_B$.
\item There is no path of length at most $3$ connecting distinct $P,P' \in \C_B$.
\end{enumerate}
\end{claim}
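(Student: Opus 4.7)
The plan is to prove each of the three parts by contradiction, leveraging two properties of the instance: that $A$ is $R$-closed with $R \geq 100$, and that the hypergraph $\I$ has girth $\girth \geq \tau \log n$, which for sufficiently large $n$ greatly exceeds any constant appearing below. Together these say that whenever two vertices of $A$ are connected by a walk of length $\ell \leq R$, any path of length at most $\ell$ between them is unique (otherwise two distinct short paths would produce a cycle shorter than the girth) and lies entirely in $A$ (by $R$-closedness). This lets us rule out stray short connectors by showing that any such connector would be absorbed into $A$, contradicting a property of the objects it connects.

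For part (1), given the hypothetical path $P$ of length at most $3$ joining $u \in V(P')$ to $a \in A$ with $P \not\subseteq A$, I would let $a' \in P' \cap A$ be the unique $A$-endpoint of the bridge path $P'$ and concatenate the sub-path of $P'$ from $a'$ to $u$ with $P$ to produce a walk of length at most $6$ from $a'$ to $a$. The girth-closedness combination then places every hyperedge of this walk in $A$, contradicting $P \not\subseteq A$. For parts (2) and (3), I would chain the hypothetical short connector $Q$ with the bridge and bridge-closure paths it meets to construct a walk of bounded length between two vertices of $A$. For (2), letting $P''$ be the bridge associated to the bridge-closure $P'$, chain the sub-path of the bridge $P$ from its $A$-endpoint to $P \cap Q$, then $Q$, then the sub-path of $P'$ from $Q$'s other endpoint to $P' \cap P''$, and finally the sub-path of $P''$ back to $A$, for a total length at most $12 \leq R$. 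For (3), chain analogously through both bridge-closures and their associated bridges for a walk of length at most $15 \leq R$. In each case $R$-closedness forces the entire walk into $A$; in particular it places some hyperedge of the bridge-closure $P'$ into $\C(A)$, which together with $V(\C(A)) \subseteq A$ contradicts the defining requirement $P' \cap A = \emptyset$.

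The main obstacle is verifying that each concatenation actually yields a walk of the claimed short length and that the forbidden hyperedges of $P$ or $Q$ really appear in the induced shortest path: the various sub-paths might share vertices or edges in ways that could either further shorten the walk (helpful) or disturb a clean application of $R$-closedness. Here Claim~\ref{claim:bridge-dont-intersect} does most of the bookkeeping, pinning down precisely where bridge paths and bridge-closure paths can meet, so that each chain above is well-defined, has the expected length, and passes through at least one hyperedge forbidden to lie in $A$, yielding the desired contradiction in every case.
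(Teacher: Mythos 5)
Your proposal is correct and follows essentially the same route as the paper: concatenating the hypothetical short connector with the relevant bridge and bridge-closure segments to obtain a short path (length at most $6$, $12$, and roughly $15$--$18$ respectively, all far below $R=100$) joining vertices of $A$ but not contained in $A$, contradicting $R$-closedness, with the girth bound covering degenerate coincidences. Your explicit attention to walk-versus-path issues via the Intersections claim is, if anything, more careful than the paper's one-line arguments.
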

\begin{proof}
\begin{enumerate}
\item Otherwise there is a path of length at most $6$ between two vertices of $A$ not contained in $A$, violating the fact that $A$ is $R$ closed.
\item Otherwise there is a path of length at most $12$ between two vertices of $A$, violating that $A$ is $R$ closed.
\item Otherwise there is a path of length at most $18$ not contained in $A$, connecting two vertices of $A$.
\end{enumerate}
\end{proof}

%
%
%
%
%
%
%
%
%
%

The following claim is now a consequence of the claims above:

\begin{claim}\label{claim:bridge-clauses}
For any $C$ such that $V(C) \nsubseteq A \cup B$ but $V(C) \subseteq D$, $C \in \C_{A,B} \cup \C_B$.
\end{claim}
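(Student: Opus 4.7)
The plan is to trace the addition of clauses in the constructive closure procedure from Lemma~\ref{lem:alternate-closure}, starting with $D_0 = A\cup B$, and to show inductively that every clause ever added must lie in $\C_{A,B}\cup\C_B$. Equivalently, I will show that the set
\[
E \;=\; A\cup B \cup V(\C_{A,B})\cup V(\C_B)
\]
is already $3$-closed; since $E\supseteq A\cup B$, minimality of $\cl(A\cup B)$ then forces $D\subseteq E$. Any clause $C$ with $V(C)\subseteq D$ but $V(C)\not\subseteq A\cup B$ must therefore use at least one internal vertex of a bridge or bridge-closure path, and I will argue via the high-girth assumption that $C$ itself must coincide with (a sub-clause of) one of those paths, placing $C\in\C_{A,B}\cup \C_B$.

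First I would handle the initial round of the closure procedure. Any path of length at most $3$ with both endpoints in $A\cup B$ falls into one of three cases: both endpoints in $A$, in which case it lies in $A$ by $R$-closedness (recall $R\geq 100>3$); both in $B$, in which case it lies in $B$ by closedness of $B$; or one endpoint in $A$ and one in $B$, which is exactly the definition of a bridge path, contributing clauses in $\C_{A,B}$. So after this round the accumulated set is $D_1=A\cup B\cup V(\C_{A,B})$. Next, I would examine paths of length at most $3$ whose endpoints lie in $D_1$ but which are not contained in $D_1$. The cases $A$-to-$A$, $B$-to-$B$, and $A$-to-$B$ repeat as above. For $A$-to-bridge paths, the first part of the \emph{No Extra Paths} claim places them inside $A$, hence inside $D_1$. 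For bridge-to-bridge paths, concatenating with the two bridges produces a path of length at most $9$ between two vertices of $A$ that leaves $A$; this contradicts $R$-closedness of $A$ (or, if the two $A$-endpoints coincide, yields a cycle of length at most $9$ in a graph of girth $\omega(1)$). The remaining case, $B$-to-bridge paths, is exactly the definition of a bridge-closure, and contributes clauses in $\C_B$. This gives $D_2=E$.

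It remains to verify that $E$ is $3$-closed. The only new cases beyond the ones just analysed involve bridge-closures: $A$-to-bridge-closure, $B$-to-bridge-closure, bridge-to-bridge-closure, and bridge-closure-to-bridge-closure. The bridge-to-bridge-closure case is handled directly by part (2) of \emph{No Extra Paths}, and the bridge-closure-to-bridge-closure case by part (3). For the $A$-to-bridge-closure case, I would combine the hypothetical short path with the bridge-closure and its associated bridge to build a path of length at most $7$ between two vertices of $A$ that leaves $A$, contradicting $R$-closedness. The $B$-to-bridge-closure case is the most delicate: using the high girth bound I would rule out short cycles, and by concatenating with the relevant bridge and bridge-closure segments I would either recover a path that is already contained in $B$ or in an existing bridge/bridge-closure, or produce a configuration that violates the $R$-closedness of $A$ or the girth. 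With $E$ shown to be $3$-closed, minimality yields $D\subseteq E$, and the character of the vertices used by any clause $C\in\C(D)\setminus\C(A\cup B)$, combined with Claim~\ref{claim:bridge-dont-intersect} (which restricts how bridges and bridge-closures overlap so that $C$ cannot span two distinct such paths), forces $C\in\C_{A,B}\cup\C_B$.

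The main obstacle I anticipate is the $B$-to-bridge-closure subcase of the closedness check for $E$: because $B$ is only $3$-closed (not $R$-closed) and bridge-closures touch $B$, one must carefully rule out the existence of new short paths through an internal vertex of a bridge-closure to a second vertex of $B$. Here I expect the argument to hinge on combining the girth lower bound (which excludes short cycles and therefore forces the combined structure to be a simple path) with the $R$-closedness of $A$ (applied to the $A$-endpoints reached by following the involved bridge), so that any hypothetical new path collapses into an already-accounted-for bridge, bridge-closure, or subpath of $B$.
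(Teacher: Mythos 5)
Your overall strategy -- showing that $E=A\cup B\cup V(\C_{A,B})\cup V(\C_B)$ is $3$-closed so that minimality forces $D\subseteq E$, and then identifying the extra clauses -- is essentially the paper's argument, which phrases it as ``the first path added by the closure procedure that violates the claim must connect two of the already-present structures, each of which is barred by the preceding claims.'' In fact your case enumeration is more careful than the paper's (the paper only lists bridge--bridge, bridge--$A$, bridge--bridge-closure and bridge-closure--bridge-closure connections), and you have correctly isolated the one case that you flag but do not actually prove: a path of length at most $3$ from a vertex of $B$ to an interior vertex of a bridge-closure path.

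That case is a genuine gap, because the tools you propose for it do not apply. Concretely, take a bridge path $P'$ of length $3$ from $a\in A$ to $b''\in B$, a bridge-closure path $Q$ of length $3$ from $b'\in B$ to a vertex of $P'$, and a new path $P$ of length $3$ from some $b\in B$ to an interior vertex of $Q$, with $b,b',b''$ pairwise at distance at least $4$ and $a$ the only vertex of $A$ anywhere near this configuration. The union of these paths is a tree, so the girth hypothesis is vacuous; $R$-closedness of $A$ requires two vertices of $A$ joined by a short outside path, but only $a$ is involved (following $P$, $Q$ and $P'$ back to $A$ gives a single $B$-to-$A$ path of length about $6$, which nothing forbids); and the new $B$-to-$B$ connection from $b$ to $b'$ has length up to $6$, which $3$-closedness of $B$ does not forbid. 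Nothing in the hypotheses ``$A$ is $R$-closed, $B$ is $3$-closed, the instance is nice'' excludes this configuration, yet the procedure of Lemma~\ref{lem:alternate-closure} forces the clauses of $P$ into $\cl(A\cup B)$, while $P$ meets only $Q$ and not any bridge path, so its clauses lie in neither $\C_{A,B}$ nor $\C_B$ as defined. So the ``delicate subcase'' cannot be closed by the argument you anticipate (``violates the $R$-closedness of $A$ or the girth''); one would instead have to either enlarge $\C_B$ (e.g.\ define the attached paths iteratively, which preserves the tree structure that the computation in Lemma~\ref{lem:local-dist-union} really uses) or invoke some additional property of the particular sets $A=\cl_R(A_i)$ and $B=\cl(A_j)$ beyond their closedness. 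As written, your proposal does not establish the claim at exactly this point -- and it is worth noting that the paper's own terse case list passes over the same configuration.
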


\begin{proof}[Proof of Claim]
Consider the iterative procedure of building the closure of $A \cup B$ by adding paths one by one in some order. Let $P$ be the first path in this order that violates the claim. Then, either $P$ intersects two bridge paths or a bridge path and $A$ or a bridge path and a bridge-closure path or two bridge-closure paths. Each of these situations is explicitly barred by the claims above. This completes the argument.
\end{proof}


Let $Z = 2^{k|\C(D)|-|D|} = 2^{k|\C(A \cup B)| + k |\C_{A,B}| - |D|}$. Observe that $Z \cdot 2^{-2|\C_{A,B}|} = Z_{A,B}$. For every clause $C \in \C_{A,B} \cup \C_B$, let $V_C' = V(C) \setminus (A \cup B)$ and $V_C'' = V(C) \cap (A \cap B)$. Similarly, let $D' = D \setminus (A \cup B)$ and $D'' = D \cap (A \cup B)$. Next, we claim:

\begin{claim}
$$Z \cdot \sum_{\gamma \in \on^{D'}} \Pi_{C \in \C_{A,B} \cup \C_B} \mu_{C}\left(x_{V_C''} \circ \gamma_{V_C'} \right) = Z_{A,B}.$$
\label{claim:big-calculation}
\end{claim}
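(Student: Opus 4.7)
The plan is to factorize the sum over connected components of $\C_{A,B}\cup\C_B$ and reduce each component to an elementary pairwise-independence calculation on a tree. First, I would use Claim~\ref{claim:bridge-clauses} to note that every clause contributing to the sum lies on a bridge path in $\C_{A,B}$ or a bridge-closure path in $\C_B$. Next, using the intersection statements in Claim~\ref{claim:bridge-dont-intersect} together with the ``no extra paths'' claim, the sub-hypergraph formed by $\C_{A,B}\cup\C_B$ on its vertex set decomposes into connected components $T_1,\ldots,T_r$, each consisting of exactly one bridge path (the \emph{trunk}) from an $A$-vertex to a $B$-vertex, together with a (possibly empty) collection of bridge-closure paths attached to the trunk (or to another already-attached bridge-closure) and terminating in $B$. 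Distinct components share no vertex in $D'$, so $D'=D'_1\sqcup\cdots\sqcup D'_r$ and the sum factorizes as
\[
\sum_{\gamma\in\on^{D'}}\prod_{C\in\C_{A,B}\cup\C_B}\mu_C(\cdot) \;=\; \prod_{i=1}^r\Bigl(\sum_{\gamma_i\in\on^{D'_i}}\prod_{C\in T_i}\mu_C(\cdot)\Bigr).
\]

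The heart of the proof is to show that each inner sum is a constant $\alpha_i$ independent of the values of $x$ on $V(T_i)\cap(A\cup B)$, and to compute $\alpha_i$. Because each $T_i$ has bounded diameter while $\I$ has girth $\Omega(\log n)$, each $T_i$ is acyclic, so I would peel off leaf clauses one at a time. A leaf clause $C$ of $T_i$ has at most two \emph{non-private} variables, namely variables lying in $A\cup B$ or shared with the rest of $T_i$, and hence at least $k-2$ \emph{private} variables in $D'_i$. Summing $\mu_C$ over the private variables yields the marginal of $\mu$ on the non-private coordinates, which by pairwise independence is the uniform distribution on $\on^{\leq 2}$, hence a pure constant. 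The leaf clause then contributes a multiplicative constant, and the remaining tree with $C$ removed is again a tree of the same form, so the peeling iterates until $T_i$ is exhausted. This process is valid precisely because the tree structure ensures every leaf has at most two non-private variables, which is exactly what $2$-wise independence of $\mu$ suffices to handle.

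Finally, to conclude it suffices to multiply the per-tree constants $\alpha_i$ against $Z$ and verify the result equals $Z_{A,B}$. This is a vertex-counting exercise component by component: each summed-out internal vertex contributes a factor $2$, and each pairwise-marginal invocation contributes a factor $2^{-2}$. Collecting these and matching with the definitions $Z=2^{k|\C(D)|-|D|}$ and $Z_{A,B}=2^{k|\C(A\cup B)|-|A\cup B|}$ yields the desired identity. The main obstacle I expect is the bookkeeping when several bridge-closure paths cluster around a single trunk and share internal vertices with the trunk or with one another: the intersection statements in Claim~\ref{claim:bridge-dont-intersect} (items 2--4) must be carefully invoked to guarantee that every peeling step really does encounter a leaf clause with at most two non-private variables, so that only $2$-wise independence of $\mu$ is ever used. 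The acyclicity of each $T_i$, as a constant-diameter subgraph of a graph with $\Omega(\log n)$ girth, is what guarantees that such a leaf always exists.
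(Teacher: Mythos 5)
Your proposal is correct and follows essentially the same route as the paper: the structural claims confine the extra clauses to bridge and bridge-closure paths forming vertex-disjoint tree-like pieces between $A$ and $B$, and then each clause is eliminated in an order in which it has at most two non-summed variables, so pairwise independence of $\mu$ turns each elimination into a power of $2$ that is matched against $Z$ and $Z_{A,B}$. The paper organizes the elimination as a two-stage sum (first the vertices lying only on bridge-closure paths, then the bridge-path vertices) rather than explicit per-component leaf peeling, but this is the same mechanism, not a different argument.
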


\begin{proof}

Let $D' = V_1 \cup V_2$ such that $V_1 \cap V_2 = \emptyset$ defined by $V_1 = D' \setminus V(\C_{A,B}))$ and $V_2 = D' \setminus V_1$.

\begin{align*}
Z \cdot &\sum_{\gamma \in \on^{D'}} \Pi_{C \in \C_{A,B} \cup \C_B} \mu_{C}\left(x_{V_C''} \circ \gamma_{V_C'} \right) \\
&= Z\sum_{\gamma \in \on^{D'}} \Pi_{C \in \C_B} \mu_{C}\left(x_{V_C''} \circ \gamma_{V_C'} \right) \Pi_{C \in \C_{A,B}} \mu_{C}\left(x_{V_C''} \circ \gamma_{V_C'} \right) \\
&= Z\sum_{\gamma \in \on^{V_2}}  \Pi_{C \in \C_{A,B}} \mu_{C}\left(x_{V_C''} \circ \gamma_{V_C'} \right) \sum_{\gamma \in \on^{V_1}} \Pi_{C \in \C_B} \mu_{C}\left(x_{V_C''} \circ \gamma_{V_C'} \right)\\
&\text{ Now, observe that for every $C \in \C_{B}$, $V(C) \cap V_2$ has at most $2$ elements. Thus, by pairwise independence of $\mu$ }\\
&= Z\sum_{\gamma \in \on^{V_2}}  \Pi_{C \in \C_{A,B}} \mu_{C}\left(x_{V_C''} \circ \gamma_{V_C'} \right) \Pi_{C \in \C_B} 2^{-|V(C) \cap (A \cup B \cup V_1)|}\\
&\text{ Similarly, for every $C \in \C_{A,B}$, $V(C) \cap (A \cup B)$ contains at most $2$ elements. Thus, }\\
&=Z \Pi_{C \in \C_{A,B}} 2^{-|V(C) \cap (A \cup B)|} \Pi_{C \in \C_B} 2^{-|V(C) \cap (A \cup B \cup V_1)|}\\
&= Z_{A,B}.
\end{align*}

\end{proof}
We can now write, using \eqref{eq:def-prob}:

\begin{align*}
\nu_{A \cup B}(x) &= Z \cdot \sum_{\gamma \in \on^{D'}} \Pi_{C: V(C) \subseteq D} \mu_{C}\left( x_{V_C''} \circ \gamma_{V_C'} \right)\\
\text{ Using} & \text{ Claim \ref{claim:bridge-clauses}}\\
&= Z \sum_{\gamma \in \on^{D'}} \Pi_{C: V(C) \subseteq \left(A \cup B \right)} \mu_{C}( x_{C}) \cdot \Pi_{C \in \C_{A,B} \cup \C_B} \mu_{C}\left(x_{V_C''} \circ \gamma_{V_C'} \right)\\
&\text{ Using Claim \ref{claim:big-calculation} }\\
&= Z_{A,B} \Pi_{C: V(C) \subseteq \left(A \cup B \right)} \mu_{C}( x_{C}).
\end{align*}

This completes the proof.
%
\end{proof}

%
%
%
%

%
%
%

\section{$\tE$ is positive semidefinite}
\label{sec:PSD}
In this section, we prove our main result. Our proof will follow easily from the following lemma which is the main result of this section.

\stoc{
\begin{lemma}[Main Lemma] \label{lem:orthonormalization}
Let $P_{d}^n = \Span \{ \chi_A \mid |A| \leq d\}$ be the space of multilinear polynomials on $\R^n$ of degree at most $d = \frac{\sqrt{\girth}}{10k^2}$. There exists a collection of functions  $\{ \tchi_i \mid 0 \leq i \leq M\} \subseteq P_{d}^n$ for $M = {n \choose {\leq d}}-1$ such that \textbf{(i)} $P_d^n = \Span \{ \tchi_i \mid 0 \leq i \leq M\}$, \textbf{(ii)} $\tE[ \tchi_i^2] \geq 0$, and \textbf{(iii)} $\tE[ \tchi_i \cdot \tchi_j] = 0$ whenever $i \neq j$.
\end{lemma}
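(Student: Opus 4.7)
The plan is to build the $\tchi_i$'s via a local Gram--Schmidt procedure along a carefully chosen ordering of subsets of $[n]$ of size at most $d$. Fix a constant $R$ (say $R = 100$) with $R < \girth/2$; by Lemma~\ref{lem:closure-size}, $|\cl_R(A)| \leq 2Rkd \ll s$ for every $|A| \leq d$. Enumerate the sets of size $\le d$ as $A_0 \prec A_1 \prec \cdots \prec A_M$ with $A_0 = \emptyset$, by non-decreasing $|\cl_R(A)|$ (breaking ties arbitrarily). Set $\Ball(A) := \cl_R(A)$; by monotonicity and idempotency of $\cl_R$ (Lemma~\ref{lem:closure-props}), $X \subseteq \Ball(A)$ implies $\Ball(X) \subseteq \Ball(A)$. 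Inductively set $\tchi_0 := 1$ and
$$W_i \;:=\; \Span\{\tchi_j : j < i,\ A_j \subseteq \Ball(A_i)\}, \qquad \tchi_i \;:=\; \chi_{A_i} - \Proj^{\tE}_{W_i}\!\bigl(\chi_{A_i}\bigr),$$
where the orthogonal projection is with respect to the bilinear form $\tE$. This is well defined because, by induction, every $\tchi_j \in W_i$ is a polynomial on $\Ball(A_j) \subseteq \Ball(A_i)$, and Lemma~\ref{lem:local-PSD} says $\tE$ is a genuine PSD inner product on polynomials in the $\Ball(A_i)$ variables.

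Properties (i) and (ii) now follow easily: induction gives $\chi_{A_i} \in \Span\{\tchi_0,\ldots,\tchi_i\}$, and $\tE[\tchi_i^2] \ge 0$ is Lemma~\ref{lem:local-PSD} applied on $\Ball(A_i)$. For orthogonality (iii) with $j < i$, I split into two cases. In the \emph{easy case} $A_j \subseteq \Ball(A_i)$, unrolling the construction of $\tchi_j$ and using idempotency of $\cl_R$ shows that every $\chi_{A_l}$ appearing in $\tchi_j$ satisfies $A_l \subseteq \Ball(A_j) \subseteq \Ball(A_i)$ with $l < i$, so $\tchi_j \in W_i$ and $\tE[\tchi_i \tchi_j] = 0$ by the projection property.

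The \emph{hard case} is $A_j \not\subseteq \Ball(A_i)$. Here I apply Lemma~\ref{lem:local-dist-union} with $A := \Ball(A_i)$ ($R$-closed) and $B := \Ball(A_j)$ ($R$-closed, hence closed) to get $\nu_{A\cup B}(x) = Z \prod_{C:V(C)\subseteq A \cup B} \mu_C(x_C)$. Since both $A$ and $B$ are $1$-closed, any clause with $\ge 2$ vertices in $A$ (resp.\ $B$) must lie entirely in $A$ (resp.\ $B$); a short case check then rules out, for $k \geq 3$, any ``straddling'' clause $C$ with $V(C) \subseteq A \cup B$ yet $V(C) \not\subseteq A$ and $V(C) \not\subseteq B$. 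Consequently $\nu_{A\cup B}(x) \propto h_A(x_A)\, h_B(x_B)$ for some $h_A, h_B$, making $x_{A\setminus B}$ and $x_{B\setminus A}$ conditionally independent given $x_{B_{in}}$, where $B_{in} := A\cap B$. Setting $\phi_i(x_{B_{in}}) := \tE[\tchi_i \mid x_{B_{in}}]$ and $\phi_j$ analogously, conditional independence yields $\tE[\tchi_i \tchi_j] = \E_{\nu_{B_{in}}}[\phi_i \phi_j]$.

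It remains to show $\phi_i \equiv 0$ in $L^2(\nu_{B_{in}})$. For any $T \subseteq B_{in}$, using that $\tE$ agrees with $\E_{\nu_A}$ on polynomials in the $A$ variables, a short calculation gives $\E_{\nu_{B_{in}}}[\phi_i \chi_T] = \tE[\tchi_i \chi_T]$. By Lemma~\ref{lem:closure-props}(1), $B_{in}$ is itself $R$-closed, so $|\cl_R(T)| \le |B_{in}|$; and in the hard case $|B_{in}| < |\cl_R(A_i)|$, since otherwise the chain $B_{in} \subseteq \cl_R(A_j) \subseteq \cl_R(A_i)$ would be all equalities, forcing $A_j \subseteq \Ball(A_i)$, a contradiction. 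Hence $T \prec A_i$ and $T \subseteq \Ball(A_i)$, so $\chi_T \in W_i$ and $\tE[\tchi_i \chi_T] = 0$ by the projection property. Since this holds for every character $\chi_T$ on $B_{in}$, $\phi_i$ is orthogonal to a spanning set of $L^2(\nu_{B_{in}})$, whence $\phi_i \equiv 0$ and $\tE[\tchi_i \tchi_j] = 0$. The main obstacle is the hard case, and within it the two key structural facts: the ``no straddling clause'' argument (which yields conditional independence) and the strict size inequality $|B_{in}| < |\cl_R(A_i)|$ (which places every $T \subseteq B_{in}$ strictly before $A_i$ in the ordering and hence into $W_i$).
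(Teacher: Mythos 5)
Your construction follows the same skeleton as the paper (a tailor-made ordering of the index sets, a local Gram--Schmidt step against a ``ball'' $\cl_R(A_i)$, local PSDness via Lemma~\ref{lem:local-PSD}, and a factorization of the joint local distribution via Lemma~\ref{lem:local-dist-union} to handle sets sticking out of the ball), and several of your ingredients are sound: the ``no straddling clause'' observation for two $R$-closed sets, the idempotency/monotonicity bookkeeping, and the neat size-based ordering argument giving $|\cl_R(T)| \le |B_{in}| < |\cl_R(A_i)|$ in the hard case. But the hard case has a genuine gap at its last step. You condition on $B_{in} := \cl_R(A_i)\cap\cl_R(A_j)$ and want $\phi_i = \tE[\tchi_i \mid x_{B_{in}}]$ to vanish by testing it against ``every character $\chi_T$ on $B_{in}$.'' The set $B_{in}$ is an intersection of two $R$-closures, each of size up to $\Theta(Rkd)$ (Lemma~\ref{lem:closure-size}), so $|B_{in}|$ can be far larger than $d$; e.g.\ $A_i$ and $A_j$ may share $d-1$ vertices whose common closure is huge while $A_j$ has one vertex outside $\Ball(A_i)$. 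For subsets $T\subseteq B_{in}$ with $|T|>d$, the relation $T\prec A_i$ is not even defined (only sets of size $\le d$ are ordered), $\chi_T\notin P_d^n\supseteq W_i$, and the projection property gives you nothing. Hence you only obtain orthogonality of $\phi_i$ against characters of degree $\le d$, which do \emph{not} span $L^2(\nu_{B_{in}})$; in particular you cannot conclude $\E_{\nu_{B_{in}}}[\phi_i\phi_j]=0$, since $\phi_j$ (a conditional expectation under a non-product measure) need not be a low-degree polynomial in $x_{B_{in}}$.

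This is exactly the difficulty the paper's proof of Lemma~\ref{lem:crucial} is built to avoid: rather than conditioning on the full overlap of the two balls, it conditions only on the small boundary set $B_{bdy}$ and works with $B_{in},B_{rest},B_{out}$ defined relative to $B=A_j$ itself, proving the counting claim $|B_{bdy}|\le |B_{out}|$, hence $|B_{in}\cup B_{bdy}|\le |B|\le d$ (Claim~\ref{claim:size}) --- the paper flags this as one of the main technical ingredients for $\Omega(n)$ degree, and it uses expansion via Lemma~\ref{lem:long-dist-many-edges} together with the closure machinery. It then uses a lexicographic-on-closure-clauses ordering so that all subsets of $B_{in}\cup B_{bdy}$ provably precede $A_i$ (Claim~\ref{claim:ordering}), which lets the function $\chi_{B_{in}}\chi_{B_{rest}}\1_{B_{bdy}=\gamma}$ be placed inside the local correlated space and killed by local orthogonality. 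To repair your argument you would need an analogue of these two claims: a proof that the set you condition on has size at most $d$ (your $B_{in}$ does not) and that all its subsets come earlier in your ordering; as written, the proposal is missing the counting argument that makes the hard case go through.
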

}
{
\begin{lemma}[Main Lemma] \label{lem:orthonormalization}
Let $P_{d}^n = \Span \{ \chi_A \mid |A| \leq d\}$ be the space of multilinear polynomials on $\R^n$ of degree at most $d = \frac{\eta n}{10000k}$. There exists a collection of functions  $\{ \tchi_i \mid 0 \leq i \leq M\} \subseteq P_{d}^n$ for $M = {n \choose {\leq d}}-1$ such that:
\begin{enumerate}
\item $P_d^n = \Span \{ \tchi_i \mid 0 \leq i \leq M\}$.
\item $\tE[ \tchi_i^2] \geq 0$.
\item $\tE[ \tchi_i \cdot \tchi_j] = 0$ whenever $i \neq j$.
\end{enumerate}
\end{lemma}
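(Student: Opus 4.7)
The plan is to construct $\tchi_0, \tchi_1, \ldots, \tchi_M$ by a tailored Gram--Schmidt process using $\tE[\cdot\,\cdot]$ as a bilinear form, exploiting local PSDness (Lemma~\ref{lem:local-PSD}) on small neighborhoods and the factorization of local distributions (Lemma~\ref{lem:local-dist-union}) to promote orthogonality within these neighborhoods to \emph{global} orthogonality.

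\textbf{Setup and ordering.} Fix $R \geq 100$ (the constant demanded by Lemma~\ref{lem:local-dist-union}) and define $\Ball(A) := \cl_R(A)$ for each $A$ with $|A| \leq d$. By Lemma~\ref{lem:closure-size}, $|\Ball(A)| \leq 2Rk|A| = O(d)$, so $|\cl(\Ball(A))| \leq s = \eta n/6$ and Lemma~\ref{lem:local-PSD} makes $\tE$ positive semidefinite on $W_A := \Span\{\chi_B : B \subseteq \Ball(A)\}$. I order all sets $A$ with $|A| \leq d$ as $A_0 \prec A_1 \prec \cdots \prec A_M$ by increasing $|\cl(A)|$, breaking ties arbitrarily, so that $A_0 = \emptyset$ and, importantly, every $B \subseteq \cl(A)$ with $B \neq A$ satisfies $B \prec A$ (using monotonicity from Lemma~\ref{lem:closure-props}(2)).

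\textbf{Inductive construction.} Put $\tchi_0 := 1$. Suppose $\tchi_0, \ldots, \tchi_{i-1}$ have been built so that each $\tchi_j \in W_{A_j}$ expands as $\chi_{A_j} - \sum_{B \prec A_j,\, B \subseteq \Ball(A_j)} \lambda_{j,B}\,\chi_B$. I then define $\tchi_i$ as the $\tE$-orthogonal projection of $\chi_{A_i}$ onto the $\tE$-orthogonal complement (inside $W_{A_i}$) of $\Span\{\chi_B : B \prec A_i,\ B \subseteq \Ball(A_i)\}$. Because $\tE$ is PSD on the finite-dimensional space $W_{A_i}$, standard Gram--Schmidt succeeds (possibly outputting a $\tchi_i$ that lies in the radical, which is harmless). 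By construction, $\tE[\tchi_i^2] \geq 0$ and $\tE[\tchi_i \,\chi_B] = 0$ for every $B \prec A_i$ contained in $\Ball(A_i)$; moreover $\tchi_i = \chi_{A_i} + (\text{lower-order terms})$, which gives spanning by unitriangularity.

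\textbf{Global orthogonality, the main obstacle.} The heart of the argument is $\tE[\tchi_i \tchi_j] = 0$ for $j < i$. Expanding $\tchi_j$ in the $\chi_{B'}$ basis with $B' \subseteq \Ball(A_j)$, $B' \preceq A_j$, it suffices to show $\tE[\tchi_i \,\chi_{B'}] = 0$ for each such $B'$. If $B' \subseteq \Ball(A_i)$ this holds by construction. Otherwise, write $B' = B'_{\mathrm{in}} \sqcup B'_{\mathrm{out}}$ with $B'_{\mathrm{in}} := B' \cap \Ball(A_i)$, and apply Lemma~\ref{lem:local-dist-union} with $A := \Ball(A_i)$ and $B := \cl(B')$ (valid since $\Ball(A_i)$ is $R$-closed and $\cl(B')$ is $3$-closed). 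The resulting product formula over clauses contained in $\Ball(A_i) \cup \cl(B')$, combined with the bridge/bridge-closure tree analysis underlying Lemma~\ref{lem:local-dist-union}, yields that the variables of $B'_{\mathrm{out}}$ are conditionally independent of $\tchi_i$ given $B'_{\mathrm{in}}$, so that $\tE[\tchi_i\,\chi_{B'}]$ factors as $\tE[\tchi_i\,\chi_{B'_{\mathrm{in}}}]$ times an unconditional expectation over $\chi_{B'_{\mathrm{out}}}$. Since $B'_{\mathrm{in}} \subseteq \Ball(A_i)$ and $|\cl(B'_{\mathrm{in}})| \leq |\cl(B')| \leq |\cl(A_j)| \leq |\cl(A_i)|$ (strict once one is careful with the tie-breaking and the fact that $B' \neq A_i$), we have $B'_{\mathrm{in}} \prec A_i$ and the first factor vanishes by the previous case. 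The delicate part — matching the combinatorial accounting $|B'_{\mathrm{in}}| + |B' \setminus \Ball(A_i)| \leq |B'|$ highlighted in the proof overview, and converting Lemma~\ref{lem:local-dist-union}'s density factorization into genuine conditional independence between the ``inside'' and ``outside'' parts — is exactly where the construction of $\Ball$ and the closure-size ordering have to cooperate, and it is the step I expect to require the most care. Once these three properties (spanning, self-positivity, pairwise orthogonality) are in place, positivity of $\tE$ on $P_d^n$ follows by expanding any $p \in P_{d/2}^n$ as $\sum_i p_i \tchi_i$ and computing $\tE[p^2] = \sum_i p_i^2\,\tE[\tchi_i^2] \geq 0$.
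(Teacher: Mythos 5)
Your skeleton matches the paper's proof (local orthogonalization of $\chi_{A_i}$ against $V_i=\Span\{\chi_B : B\subseteq \cl_R(A_i),\ B\prec A_i\}$, spanning by triangularity, and reduction of PSDness to pairwise orthogonality), but the two places where the paper does its real work are missing or wrong. First, the ordering. Sorting by $|\cl(A)|$ with arbitrary tie-breaking does not even give your stated property that every $B\subseteq\cl(A)$ with $B\neq A$ satisfies $B\prec A$: if $\cl(B)=\cl(A)$ the cardinalities tie and an arbitrary tie-break may put $B$ after $A$. More importantly, the property the orthogonality argument actually needs (the paper's Claim~\ref{claim:ordering}) is that when $B\not\subseteq\cl_R(A_i)$, every subset $S$ of $B_{in}\cup B_{bdy}$ precedes $A_i$, where $B_{bdy}$ consists of boundary vertices of $\cl(B)$ that need not lie in $B$ at all. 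The paper's lexicographic order on the clause sets $\C(\cl(\cdot))$ is engineered exactly so that $\cl(S)\subseteq\cl(B)$ together with $\cl(B)\neq\cl(A_i)$ forces $\pi(\cl(S))\leq\pi(\cl(B))<\pi(\cl(A_i))$; a cardinality comparison cannot deliver that strict inequality, since distinct closures can have equal size.

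Second, the factorization step is asserted, not proved, and it conditions on the wrong set. The dependence between $\cl_R(A_i)$ and the part of $\cl(B')$ outside it is mediated by $B_{bdy}$ (vertices $x\in\cl(B')$ such that some clause of $\C(\cl(B'))$ meets $\cl_R(A_i)$ exactly in $x$), not by $B'\cap\Ball(A_i)$; conditioning only on $B'_{\mathrm{in}}$ does not make $B'_{\mathrm{out}}$ independent of $\tchi_i$ when clauses of $\cl(B')$ straddle the boundary of the ball without containing any variable of $B'$ inside it. The paper instead conditions on an assignment $\gamma$ to $x_{B_{bdy}}$, uses Lemma~\ref{lem:local-dist-union} to pull out the $\chi_{B_{out}}$ factor, and then must show that the inside function $\chi_{B_{in}}\chi_{B_{rest}}\1_{B_{bdy}=\gamma}$ lies in $V_i$. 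That requires both the ordering property above and the combinatorial bound $|B_{bdy}|\leq|B_{out}|$ (Claim~\ref{claim:size}, proved via Lemma~\ref{lem:long-dist-many-edges} and a careful decomposition of $\cl(B)$), which is exactly the ingredient that makes the argument survive at degree $\Omega(n)$; you explicitly defer this as "the step requiring the most care," so the heart of the lemma is not established. (A smaller slip: your chain $|\cl(B')|\leq|\cl(A_j)|$ is unjustified, since $B'\subseteq\cl_R(A_j)$ rather than $\cl(A_j)$; it is also unnecessary once the correct ordering and boundary set are used.)
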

}
We first complete the proof of of Theorem~\ref{thm:main} assuming this lemma. Observe that part (1) of the theorem follows from Lemma \ref{lem:soundness}. Further, $\tE$ satisfies $\tE[ f(C_i)] = f(\mu)$ by Corollary \ref{cor:girth-SA}. Thus, we only need to prove that $\tE$ is a valid pseudo-expectation operator, that is, that $\tE$ is positive semidefinite.

Let $f \in P_d^n$ be any multilinear polynomial of degree $\leq d$. Then, we show that $\tE[ f^2] \geq 0$. We use the spanning property (1) of the $\tchi_i$s above to write $f = \sum_{i <{n \choose {\leq d}}} f_i \cdot \tchi_{i}$. Using orthogonality (3) of $\tchi_i$s, we have: $\tE[ f^2] = \sum_{i \in {[n] \choose {\leq d}}} f_i^2 \tE[ \tchi_{i}^2]$. Finally, using the positivity property (2) of the $\tchi_i$s, we have that $\tE$ is PSD.

The rest of this section is devoted to proving Lemma \ref{lem:orthonormalization}.

\subsection{Choosing an Ordering}

Our aim is to build an order on the ${[n] \choose {\leq d}}$, in which to process them for our local orthogonalization procedure.  We start with an arbitrary ordering on the clauses of $\I$, e.g. for every $C\in \I$ we define a unique index $\zeta(C) \in [m]$. We say that $A \prec B$ if:
\begin{itemize}

\item $\C(\cl(A))$ is smaller than $\C(\cl(B))$ in lexicographic order of $\zeta$. That is, $A \prec B$ if the maximum
index $\zeta(C)$ for $C\in \cl(A)$ is smaller than this maximum for $\cl(B)$, and if they are equal we break ties by the second largest index and so on. We define $\pi(\cl(A))$ to be the index of $\cl(A)$ according to this ordering. (Note that $\pi$ is a permutation on \emph{distinct} closures, and so if $\cl(A) \neq \cl(B)$ then $\pi(\cl(A)) \neq \pi(\cl(B))$.)

\item If $\C(\cl(A))=\C(\cl(B))$ then we say that $A \prec B$ if $|A|<|B|$.

\item If $\C(\cl(A)) = \C(\cl(B))$ and $|A|=|B|$ then we break ties arbitrarily.

\end{itemize}

For $i=0,\ldots, M$, we let $A_i$ denote the $i^{th}$ set in this ordering. Note that $A_0 = \emptyset$ and
$A_1,\ldots,A_n$ are the singleton elements $\{1 \},\ldots,\{ n \}$ (in some arbitrary order).
We will write $\chi_i$ for $\chi_{A_i}$ in the following to reduce clutter.

\subsection{Local Orthogonalization}

Set $R = 100$. Define the \emph{$i^{th}$ local correlated space} as \stoceq{V_i = \Span \{ \chi_B \mid |B| \leq d, B \subseteq \cl_R(A_i), B \prec A_i \}.}

\begin{lemma}
For every $f\in V_i$,  $\tE[ f^2] \geq 0$. \label{lem:PSDVi}
\end{lemma}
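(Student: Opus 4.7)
The plan is to reduce the statement to local PSDness of $\tE$ (Lemma~\ref{lem:local-PSD}), since every $f\in V_i$ depends only on variables in the single set $\cl_R(A_i)$.

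First I would observe that by definition $V_i \subseteq \Span\{\chi_B \mid B \subseteq \cl_R(A_i)\}$, so if we set $T = \cl_R(A_i)$, then $f \in \Span\{\chi_B \mid B \subseteq T\}$. Thus, to invoke Lemma~\ref{lem:local-PSD}, I only need to check that $|T| \leq s = \eta n / 6$. Since $|A_i| \leq d = \frac{\eta n}{10000 k}$ and $R = 100$, the hypothesis $|A_i| \leq \frac{\eta n}{10 R}$ of Lemma~\ref{lem:closure-size} is satisfied (using that $\beta < 1/(2R)$ from niceness, so $R < \min\{\girth/2, 1/(2\beta)\}$).

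Next I would apply Lemma~\ref{lem:closure-size} to bound
\[
|\cl_R(A_i)| \;\leq\; 2Rk\,|A_i| \;\leq\; 2Rk \cdot d \;=\; 200\,k \cdot \frac{\eta n}{10000\,k} \;=\; \frac{\eta n}{50} \;\leq\; \frac{\eta n}{6} \;=\; s.
\]
Hence $T = \cl_R(A_i)$ is a legitimate ``small'' set on which $\tE$ agrees with a genuine probability distribution $\nu_T$ (guaranteed by Corollary~\ref{cor:girth-SA}).

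Finally, directly invoking Lemma~\ref{lem:local-PSD} with this choice of $T$ gives $\tE[f^2] \geq 0$ for every $f$ in $\Span\{\chi_B \mid B \subseteq T\}$, and in particular for every $f \in V_i$. There is no genuine obstacle here; the only thing to verify carefully is the numerical inequality $2Rkd \leq s$, which is why the parameter $d$ was chosen as $\eta n/(10000k)$ rather than something larger. The substantive work lies not in this lemma itself, but in the subsequent step of showing that any $\tchi_i$ constructed by a Gram--Schmidt procedure over the ordered sequence $A_0,A_1,\ldots$ can be forced to lie in $V_i + \Span\{\chi_{A_i}\}$ so that local PSDness of $V_i \cup A_i$ suffices to conclude $\tE[\tchi_i^2] \geq 0$.
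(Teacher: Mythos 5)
Your proof is correct and follows essentially the same route as the paper: reduce to the local PSDness statement (Lemma~\ref{lem:local-PSD}) with $T=\cl_R(A_i)$ and verify $|\cl_R(A_i)|\leq 2Rkd\leq \eta n/6=s$ via Lemma~\ref{lem:closure-size}. Your explicit check of the hypotheses of Lemma~\ref{lem:closure-size} (and the arithmetic giving $\eta n/50$) is just a slightly more careful rendering of the same argument.
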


\begin{proof}
Invoking Lemma \ref{lem:local-PSD}, it suffices to show that $|\cl_R(A_i)| < s = \eta n/6$. This follows by noting that $|A_i| \leq d$ and $|\cl_R(A_i)| \leq 2Rkd = 200 \eta n/10000 = \eta n/500 \leq s$ (Lemma \ref{lem:closure-size}).
\end{proof}

Define $\bchi_i$ to be any $f \in V_i$ such that $\tE[ (\chi_{i}-f)^2] \leq \tE[(\chi_{i}-g)^2]$ for every $g \in V_i$. Note that such a function must exist because $\tE[ (\chi_i - f)^2 ] \geq 0$ for every $f$ (one can WLOG minimize on the orthogonal complement of the kernel of $\tE$ inside $V_i$).  We define $\tchi_i = \chi_i - \bchi_i$. Since $V_0$ is empty, we set $\bchi_0$ as the constant $0$ function and $\tchi_0$ is thus defined as $\chi_0 = \chi_\emptyset=1$.

The following simple lemma would be very useful.

\begin{lemma}[Local orthogonality]
$\pE [\tchi_i g] = 0$ for every $g \in V_i$.
\label{lem:local-projections-exist}
\end{lemma}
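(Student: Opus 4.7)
My plan is to derive the identity directly from the definition of $\bchi_i$ as a minimizer, via a standard first-order optimality argument in a positive semidefinite inner product (taking care to allow the form to be degenerate). First I would note that $V_i$ and $\chi_i$ both lie inside the space $W := \Span\{\chi_B \mid B\subseteq \cl_R(A_i),\ |B|\leq d\}$, since $A_i\subseteq \cl_R(A_i)$ by definition of closure. By Lemma~\ref{lem:closure-size}, $|\cl_R(A_i)| \leq 2Rkd \leq s = \eta n/6$, so Lemma~\ref{lem:local-PSD} applied with $T = \cl_R(A_i)$ tells us that the bilinear form $(f,g)\mapsto \tE[fg]$ is PSD on $W$; in particular $\tchi_i = \chi_i - \bchi_i$ and every $g\in V_i$ live in $W$, so all expressions $\tE[h^2]$ we will form are nonnegative.

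Now fix any $g\in V_i$. Since $V_i$ is a linear subspace, $\bchi_i + \alpha g \in V_i$ for every $\alpha\in\R$, so by the minimality of $\bchi_i$ we have
\[
\tE\bigl[(\chi_i - \bchi_i - \alpha g)^2\bigr] \geq \tE\bigl[(\chi_i-\bchi_i)^2\bigr] = \tE[\tchi_i^2].
\]
Expanding the left-hand side using linearity of $\tE$ gives the quadratic inequality
\[
\alpha^2 \tE[g^2] - 2\alpha\, \tE[\tchi_i\, g] \;\geq\; 0 \qquad \text{for every } \alpha \in \R.
\]

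Finally I would conclude by splitting on whether $\tE[g^2]$ vanishes. If $\tE[g^2] > 0$, plugging in the minimizer $\alpha = \tE[\tchi_i g]/\tE[g^2]$ gives $-\tE[\tchi_i g]^2/\tE[g^2] \geq 0$, which forces $\tE[\tchi_i g] = 0$. If $\tE[g^2] = 0$, the inequality reduces to $-2\alpha\,\tE[\tchi_i g]\geq 0$ for all $\alpha\in\R$, which again forces $\tE[\tchi_i g] = 0$. The only subtle point is the degeneracy of the PSD form, but it is handled uniformly by the fact that the quadratic inequality is required to hold for every real $\alpha$, not just small ones; this also ensures that the earlier existence of a minimizer $\bchi_i$ is unproblematic (one may formally minimize on a complement of the kernel of $\tE$ inside $V_i$, a genuine finite-dimensional inner product space).
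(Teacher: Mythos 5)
Your proof is correct and follows essentially the same route as the paper's: both derive the orthogonality from the first-order optimality of $\bchi_i$, perturbing it along $g$ inside $V_i$ and using that $\tE$ is well defined (and agrees with the expectation under $\nu_{\cl_R(A_i)}$) on the span of characters supported in $\cl_R(A_i)$, whose size is controlled by Lemma~\ref{lem:closure-size}. The only difference is cosmetic: you state the quadratic inequality in $\alpha$ and treat the degenerate case $\tE[g^2]=0$ explicitly, while the paper argues by contradiction with a sufficiently small perturbation $\e$.
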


\begin{proof}
Since both $g$ and $\bchi_i$ are spanned by characters of size at most $d$ and $2d < s$, the pseudo-expectation is well defined. Further, since both $g$ and $\bchi_i$ lie in $\Span\{ \chi_S \mid S \subseteq \cl_R(A_i) \}$ and $|\cl_R(A_i)| \leq s$  (as in the proof of Lemma \ref{lem:PSDVi}), $\tE$ corresponds to the expectation operator associated with the probability distribution $\nu_{\cl_R(A_i)}$.

Now suppose for the sake of contradiction that \stoceq{
\tE[ (\tchi_i - \bchi_i)g ] = \delta
}
for some $\delta > 0$. (If the expectation is negative then we can take $-g$.)
Let $f = \bchi_i - \e g$. We have:
\[
\tE[ (\chi_i - f)^2 ] = \tE[ (\chi_i - \bchi_i)^2 ] + \e^2 \tE[ g^2 ] -2\e \delta
\]
and so if $\e$ is sufficiently small then \stoceq{
\tE[ (\chi_i - f)^2 ] < \tE[ (\chi_i - \bchi_i)^2 ]
}
contradicting our choice of $\bchi_i$.
\end{proof}

%

The following lemma shows that the $\tchi_i$'s span $P_d^n$:

\begin{lemma}
For every $i$: $\Span \{ \tchi_j: j \leq i \} = \Span \{ \chi_{j}: j \leq i \}$. \label{lem:equal-spans}
\end{lemma}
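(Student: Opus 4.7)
The plan is to proceed by induction on $i$. The base case $i=0$ is immediate since $\tchi_0 = \chi_0 = 1$, so both spans equal $\Span\{1\}$.

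For the inductive step, assume $\Span\{\tchi_j : j < i\} = \Span\{\chi_j : j < i\}$, and consider $\tchi_i = \chi_i - \bchi_i$. The key observation is that $\bchi_i$ lies in $V_i = \Span\{\chi_B \mid |B| \leq d,\ B\subseteq \cl_R(A_i),\ B \prec A_i\}$, and every such $B$ appears as some $A_j$ with $j < i$ by the construction of the ordering $\prec$ on $\binom{[n]}{\leq d}$. Consequently $V_i \subseteq \Span\{\chi_j : j < i\}$, and by the induction hypothesis $\bchi_i \in \Span\{\tchi_j : j < i\}$. Therefore $\tchi_i = \chi_i - \bchi_i \in \Span\{\chi_j : j \leq i\}$, giving one inclusion, and $\chi_i = \tchi_i + \bchi_i \in \Span\{\tchi_j : j \leq i\}$, giving the other. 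This closes the induction.

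There is no substantive obstacle here; the whole argument is a one-line consequence of the fact that $\prec$ is a total order on $\binom{[n]}{\leq d}$ for which $A_j \prec A_i$ is equivalent to $j < i$, combined with the definition of $V_i$ restricting to subsets $B$ with $B \prec A_i$. The only thing to verify, which is purely a check of the definition of $\prec$, is that each $B \subseteq \cl_R(A_i)$ with $|B| \leq d$ and $B \prec A_i$ is indeed indexed by some $j < i$ in the global ordering $A_0, A_1, \ldots, A_M$; this is built in since $\prec$ was extended to a total order on all of $\binom{[n]}{\leq d}$.
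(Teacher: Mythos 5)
Your proof is correct and follows essentially the same route as the paper: induction on $i$, using that $\bchi_i \in V_i \subseteq \Span\{\chi_j : j<i\}$ (since every $B$ with $B \prec A_i$ in the definition of $V_i$ is some $A_j$ with $j<i$) to get both inclusions from $\tchi_i = \chi_i - \bchi_i$. No gaps to report.
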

\begin{proof}
First, we show that for every $i$, $\chi_{i} \in \Span \{ \tchi_j \mid j \leq i \}$. We argue by induction. $\tchi_1 = \chi_1$ and thus the statement holds for $i = 1$. Now suppose the statement holds for all $j < i$.  Consider $\chi_i$. From the definition of $\tchi_i$ above, we have that: $\chi_i  = \tchi_i+\bchi_i$. Now, $\bchi_i \in V_i$ and $V_i \subseteq \Span \{ \chi_j \mid j <i \}$ by definition. Further, by inductive hypothesis, each $\chi_j$ for $j < i$ satisfies $\chi_j \in \Span \{ \tchi_{j'} \mid {j'} \leq j \} \subseteq \Span \{ \tchi_{j'} \mid {j'} < i\}$. This completes the induction.

The other direction is easier: $\tchi_i = \chi_i-\bchi_i$ and as argued above, $\bchi_i \in \Span \{ \chi_j \mid j <  i \}$. Thus, $\tchi_i \in \Span \{ \chi_j \mid j \leq i\}$.  Together, we thus have: $\Span \{ \tchi_j \mid j \leq i\} = \Span \{ \chi_j \mid j \leq i \}$.
\end{proof}

\subsection{Global Orthogonality lemma}

In this section, we prove the following lemma that is the technical heart of the proof and says that local orthogonalization is enough to ensure that $\tchi_{i}$ are all mutually orthogonal.

\begin{figure}
\begin{center}
\includegraphics[width=3in]{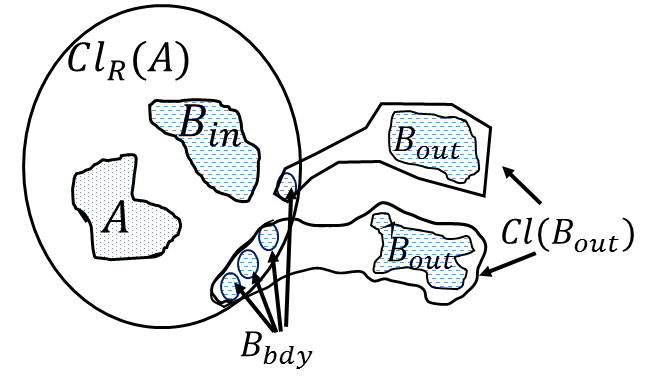}
\end{center}
\caption{A possible configuration of $B_{in}$, $B_{out}$ and $B_{bdy}$.}
\label{fig:boundary}
\end{figure}

\begin{lemma} \label{lem:crucial}
For every $j < i$, \stoceq{\tE[ \tchi_{i} \cdot \chi_{j}] = 0.}
\end{lemma}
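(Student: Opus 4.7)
The plan is strong induction on $j$. For the base case $j=0$, $\chi_0 = 1 = \chi_\emptyset$ lies in $V_i$ for every $i\geq 1$, so the local orthogonality Lemma~\ref{lem:local-projections-exist} immediately gives $\tE[\tchi_i] = 0$. For the inductive step, fix $j < i$ and assume the claim for all $j' < j$. Set $A := \cl_R(A_i)$ (which is $R$-closed) and $B := \cl(A_j)$ (which is $3$-closed). A size computation using Lemma~\ref{lem:closure-size} gives $|A\cup B| = O(kd) \leq s$, so $\tE$ coincides with the honest expectation under $\nu_{A\cup B}$ on polynomials in $x_{A\cup B}$. If $A_j \subseteq A$, then $\chi_j \in V_i$ (since $|A_j|\leq d$ and $A_j \prec A_i$), and we are done by local orthogonality. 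So we assume $A_j^{\text{out}} := A_j \setminus A \neq \emptyset$ and set $A_j^{\text{in}} := A_j \cap A$.

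We apply Lemma~\ref{lem:local-dist-union} to get $\nu_{A\cup B}(x) = Z_{A,B} \prod_{C:\,V(C)\subseteq A\cup B} \mu_C(x_C)$. Since both $A$ and $B$ are $1$-closed, any such clause $C$ must satisfy $V(C)\subseteq A$ or $V(C)\subseteq B$: a ``crossing'' clause with one vertex in $A\setminus B$ and another in $B\setminus A$ would force $|V(C)| \leq 2 < k$. Splitting $\chi_j(x) = \chi_{A_j^{\text{in}}}(x_A)\,\chi_{A_j^{\text{out}}}(y)$ with $y = x_{B\setminus A}$ and summing out $y$ yields
\[
\tE[\tchi_i \chi_j] \;=\; c\cdot\tE\bigl[\tchi_i \cdot \chi_{A_j^{\text{in}}}\cdot F(x_{A\cap B})\bigr],
\]
where $c = Z_{A,B}/Z_A$ is a positive constant and $F$ is the polynomial in $x_{A\cap B}$ obtained by integrating $\prod_{C\in\C(B)\setminus\C(A)} \mu_C$ against $\chi_{A_j^{\text{out}}}$.

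Fourier-expand $\chi_{A_j^{\text{in}}} \cdot F = \sum_T c_T \chi_T$ with $T \subseteq A_j^{\text{in}}\cup(A\cap B)\subseteq A$; the goal is that every contributing $T$ corresponds to some $A_{j'}$ with $j' < j$, so the inductive hypothesis kills every term. By Lemma~\ref{lem:closure-props}(1), $A\cap B$ is $3$-closed, so $\cl(T)\subseteq A\cap B\subseteq B$ and $\C(\cl(T))\subseteq \C(B) = \C(\cl(A_j))$. We may assume no $v\in A_j^{\text{out}}$ is isolated in $B$: if such a $v$ exists, then $y_v$ appears in $\chi_{A_j^{\text{out}}}$ but in none of the $\mu_C$ factors, so $\sum_{y_v}y_v(\cdots) = 0$ and $F\equiv 0$. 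Then any such $v$ lies in some clause $C \in \C(B)\setminus \C(A)$ (because $v\notin A$), whence $C\notin\C(A\cap B)\supseteq \C(\cl(T))$, yielding the strict inclusion $\C(\cl(T))\subsetneq\C(\cl(A_j))$. The lexicographic rule defining $\prec$ then forces $T \prec A_j$.

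The hard part will be controlling $|T|$: for $\chi_T$ to appear in our ordered list (and hence for the inductive hypothesis to apply) we need $|T|\leq d$. In the high-girth regime this is immediate from the forest structure of small subhypergraphs, but for $d = \Omega(n)$ cycles inside $\cl(A_j)$ introduce a subtlety. This is the ``$|B_{\text{in}}| + |B\setminus\text{Ball}(A)| \leq |B|$'' property previewed in the overview: using pairwise independence of $\mu$ to cancel the low-Fourier-support contributions of $F$, together with the $(r,\beta)$-expansion property of $\I$, we will show that only $T$ with $|T| + |A_j^{\text{out}}| \leq |A_j|$ contribute nontrivially, so $|T| \leq |A_j|-1\leq d$, completing the induction.
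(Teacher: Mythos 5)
Your reduction is essentially the paper's: you integrate out the variables outside the ball $\cl_R(A_i)$ using Lemma~\ref{lem:local-dist-union} (the paper conditions on the boundary assignment rather than summing it out, a cosmetic difference), and you then need every surviving character $\chi_T$ to come earlier in the ordering so that it can be killed (you do this via an induction hypothesis on $j$; the paper instead places the surviving functions directly in $V_i$ and invokes Lemma~\ref{lem:local-projections-exist} --- again a cosmetic difference). The ordering part of your argument is fine: $T\subseteq A\cap B$ with $A\cap B$ closed, so $\cl(T)\subseteq \cl(A_j)$, and a non-isolated vertex of $A_j^{\text{out}}$ yields $\C(\cl(T))\subsetneq \C(\cl(A_j))$ and hence $T\prec A_j$.

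The genuine gap is the degree bound $|T|\leq d$, which you explicitly defer (``we will show\ldots'') and never prove. This is not a routine verification: it is the main technical content of the lemma at degree $d=\Omega(n)$, and the paper spends roughly a page on it (Claim~\ref{claim:size}, which proves $|B_{bdy}|\leq |B_{out}|$, so that all surviving characters are supported in $B_{in}\cup B_{bdy}$ of size at most $|B|\leq d$). Without such a bound, the support of your $F$ --- the set of attachment points of clauses of $\cl(A_j)$ to $\cl_R(A_i)$ --- could a priori be much larger than $|A_j^{\text{out}}|$, in which case $\chi_T$ need not correspond to any set in the ordered list ($|T|>d$), and then neither your induction hypothesis nor local orthogonality is applicable. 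Moreover, the route you sketch differs in kind from the paper's and is not obviously workable: the paper's bound is purely combinatorial (maximal connected components of $\C(\cl(B))\setminus\C(\cl_R(A))$, the distance/edge-count bound of Lemma~\ref{lem:long-dist-many-edges}, the decomposition property of Lemma~\ref{lem:closure-props}, and expansion via Lemma~\ref{lem:closure-size}) and does not use pairwise independence at this step, whereas you propose a Fourier-cancellation argument ``using pairwise independence of $\mu$,'' and the specific statement you assert (only $T$ with $|T|+|A_j^{\text{out}}|\leq |A_j|$ contribute) is stronger than what the combinatorial claim delivers ($|T|\leq |A_j|$) and comes with no supporting argument. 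As written, the proof is incomplete at exactly its hardest step.
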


We will need the following observation for the proof which we record before proceeding:

\begin{lemma} \label{lem:long-dist-many-edges}
Suppose $H$ is a connected $k$-uniform hypergraph such that there exist a subset of vertices, $U$, $|U| \geq 2$ satisfying: $\dist(u,v) > R$ for every distinct $u,v \in U$. Then, $H$ must have at least $\frac{|U|R}{2}$ hyperedges.
\end{lemma}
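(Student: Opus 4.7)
The plan is a covering argument: assign to each $u \in U$ a set $E_u$ of hyperedges of $H$ that lie in a ``ball'' of radius $R/2$ around $u$, then show these sets are pairwise disjoint and each has at least $R/2$ elements. Concretely I would define
\[
E_u \;=\; \{\, C \in H : \exists\, w \in V(C) \text{ with } \dist_H(u,w) < R/2 \,\},
\]
i.e., the hyperedges meeting the open radius-$R/2$ ball around $u$ in the hyperedge metric of $H$.

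The first step is to verify disjointness. If $C \in E_u \cap E_{u'}$ for distinct $u,u' \in U$, pick witnesses $w,w' \in V(C)$ with $\dist(u,w),\dist(u',w') < R/2$. Because $w,w'$ lie in the single hyperedge $C$ we have $\dist(w,w') \leq 1$, so the triangle inequality gives $\dist(u,u') < R/2 + 1 + R/2 = R+1$, i.e.\ $\dist(u,u') \leq R$ (distances being integers), contradicting the hypothesis $\dist(u,u') > R$. This step is essentially a routine application of the triangle inequality.

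Next I would lower-bound $|E_u|$. Since $|U| \geq 2$ and $H$ is connected, pick any $v \in U \setminus \{u\}$ and fix a shortest $u$-to-$v$ path of hyperedges $C_1,\ldots,C_\ell$ with $\ell = \dist(u,v) > R$. For each $i$ with $1 \leq i \leq \lceil R/2 \rceil$, the prefix $C_1,\ldots,C_i$ certifies that some vertex of $V(C_i)$ (namely a vertex in $V(C_{i-1}) \cap V(C_i)$, or $u$ itself when $i = 1$) lies within distance $i - 1 < R/2$ of $u$, so $C_i \in E_u$. Hence $|E_u| \geq \lceil R/2 \rceil \geq R/2$.

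Combining the two steps, $|E(H)| \geq \sum_{u \in U} |E_u| \geq |U| \cdot R/2$, as claimed. I expect no real obstacle: the only mildly delicate bookkeeping is aligning the indexing of the path's hyperedges with the \emph{open}-ball condition $\dist(\cdot,\cdot) < R/2$, which is handled by reading off the shared vertex $V(C_{i-1})\cap V(C_i)$ at distance $i-1$ rather than some vertex of $V(C_i)$ possibly at distance $i$.
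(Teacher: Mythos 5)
Your proposal is correct and matches the paper's argument, which likewise observes that the radius-$R/2$ balls around the vertices of $U$ are pairwise disjoint and, by connectedness, each contains at least $R/2$ hyperedges of a path leading toward another vertex of $U$. Your write-up simply makes explicit the triangle-inequality disjointness step and the prefix-of-a-shortest-path counting that the paper leaves as a one-line observation.
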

\begin{proof}
Observe that the collection of balls of radius $R/2$ around any vertex in $u \in U$ are all disjoint and contain at least one path (due to connectedness of $H$).
\end{proof}

We now go on to prove Lemma \ref{lem:crucial}.

\begin{proof}
Fix any $j <i$ and let $A=A_i$ and $B = A_j$. Let\\
\stoceq{
B_{bdy} = \{ x \in \cl(B) \mid \exists \text{ a clause } v(W) \in \C(\cl(B))  \text{ s.t.  } W \cap \cl_R(A) = \{x\} \}.
}
For every $x \in B_{bdy}$ we call any associated clause $W$ as in the definition above as a \emph{boundary clause}. Let $B_{out} = B \setminus \cl_R(A_i)$ and $B_{in} = B \setminus (B_{out} \cup B_{bdy})$ and $B_{rest} = B \setminus (B_{out} \cup B_{in})$. Note that $B_{bdy}$ is not necessarily a subset of $B$. Next, we make two useful claims:
\begin{claim} \label{claim:size}
\stoceq{\left| B_{bdy} \cup B_{in} \right| \leq |B| \leq d.}
\end{claim}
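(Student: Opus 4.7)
Proof plan: I would prove Claim~\ref{claim:size} by first reducing the inequality to a simpler form and then constructing an injection into $B_{out}$ that exploits both the $R$-closedness of $\cl_R(A)$ and the high-girth hypothesis on $\I$.

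The first step is an elementary decomposition. From the definitions, $B_{in} = (B \cap \cl_R(A)) \setminus B_{bdy}$, and with $B_{rest} = B \setminus (B_{out} \cup B_{in}) = B \cap B_{bdy}$ one gets a disjoint partition $B = B_{in} \sqcup B_{out} \sqcup B_{rest}$, where $B_{out} = B \setminus \cl_R(A)$. Moreover $B_{in} \cap B_{bdy} = \emptyset$ by construction, so $|B_{bdy} \cup B_{in}| = |B_{in}| + |B_{rest}| + |B_{bdy} \setminus B|$. Comparing with $|B| = |B_{in}| + |B_{out}| + |B_{rest}|$, the target bound $|B_{bdy} \cup B_{in}| \leq |B|$ becomes equivalent to the cleaner inequality $|B_{bdy} \setminus B| \leq |B_{out}|$.

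For the second step I would construct an injection $\phi \colon B_{bdy} \setminus B \to B_{out}$. Given $x \in B_{bdy} \setminus B$, its boundary clause $W_x$ has its $k-1 \geq 2$ non-$x$ vertices outside $\cl_R(A)$ but inside $\cl(B)$; in particular they lie in the connected component $D$ of $x$ in $\cl(B)$. This component must contain at least one element of $B_{out}$: if every $B$-element of $D$ were in $\cl_R(A)$, then, since $\cl_R(A)$ is $R$-closed with $R \geq 3$, every closure path of length $\leq 3$ between two such elements would stay inside $\cl_R(A)$; iterating the construction of Lemma~\ref{lem:alternate-closure} would then force $D \subseteq \cl_R(A)$, contradicting the existence of $W_x$'s outside vertices. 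I define $\phi(x)$ as a canonically chosen $B_{out}$-vertex of $D$ reached by tracing outward from $W_x$ through the closure-path structure.

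The main obstacle is injectivity of $\phi$. Here I would exploit the high-girth hypothesis $\girth \geq \tau \log n$ together with the size bound $|\C(\cl(B))| \leq 6|B| \ll \girth$ from Lemma~\ref{lem:closure-size}, which makes each component $D$ of $\cl(B)$ essentially a tree of closure paths (each of length $\leq 3$) connecting its $B$-elements. Distinct $x, x' \in (B_{bdy} \setminus B) \cap D$ have distinct boundary clauses $W_x \neq W_{x'}$ (because $x$ is uniquely recovered as $V(W) \cap \cl_R(A)$), and the tree-like structure forces their outward routes to lead to distinct $B_{out}$-leaves of $D$. More formally, I would set up a bipartite incidence graph between $B_{bdy} \setminus B$ and $B_{out}$ with edges recording reachability via closure paths, and verify Hall's condition on it: any violating subset would imply a cycle in $\I$ of length below $\girth$ via Lemma~\ref{lem:long-dist-many-edges}, contradicting the girth bound, whence an injection exists.
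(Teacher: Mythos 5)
Your first step is fine: the identity $B_{rest}=B\cap B_{bdy}$ and the disjointness $B_{in}\cap B_{bdy}=\emptyset$ do reduce the claim to $|B_{bdy}\setminus B|\leq |B_{out}|$, which is a (slightly weaker but sufficient) variant of what the paper proves, namely $|B_{bdy}|\leq |B_{out}|$. The problem is the second step. Your injectivity argument rests on the premise that each connected component of $\cl(B)$ is ``essentially a tree'' because $|\C(\cl(B))|\leq 6|B|\ll \girth$. This is false in the regime that matters: here $|B|$ can be as large as $d=\eta n/(10000k)=\Omega(n)$ while $\girth=\Theta(\log n)$, so $\C(\cl(B))$ can contain far more clauses than the girth and can certainly contain cycles. (This is exactly the difficulty the paper flags: the claim is ``significantly simpler in the case $|B|<R/2$'', and handling $|B|=\Omega(n)$ with constant $R$ is the main technical point.) For the same reason the proposed Hall's-condition verification collapses: a violating subset of a large component does not force a cycle of length below $\girth$, since all paths involved may be long, so no contradiction with the girth bound is obtained. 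Even granting tree structure, the assertion that distinct boundary vertices are routed to distinct $B_{out}$ leaves is left unargued, and it is essentially the whole content of the claim.

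The missing ideas are the ones the paper actually uses, and they are counting arguments rather than acyclicity. Work inside a maximal connected component $Q$ of the subgraph on clauses $\C(\cl(B))\setminus\C(\cl_R(A))$, with $Q_{bdy}=B_{bdy}\cap Q$ and $Q_{out}=B_{out}\cap Q$. First, because $\cl_R(A)$ is $R$-closed, any two boundary vertices in $Q$ are at distance more than $R$ within $Q$, so Lemma~\ref{lem:long-dist-many-edges} forces $|\C(Q)|\geq |Q_{bdy}|R/2$ (and the singleton case $|Q_{bdy}|=1$ is handled by minimality of $\cl(B)$: if $Q$ had no $B_{out}$ vertex one could delete $Q\setminus Q_{bdy}$ and still have a closed set containing $B$). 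Second, one shows the structural containment $Q\subseteq\cl(Q_{bdy}\cup Q_{out})$, so that the expansion-based bound of Lemma~\ref{lem:closure-size} gives $|\C(Q)|\leq 6\bigl(|Q_{bdy}|+|Q_{out}|\bigr)$. Combining the two bounds with $R=100$ yields $|Q_{bdy}|\leq |Q_{out}|$ per component, and summing over components gives $|B_{bdy}|\leq|B_{out}|$. Your proposal contains neither the distance/edge-count lower bound nor the containment $Q\subseteq\cl(Q_{bdy}\cup Q_{out})$, and its substitute (girth-induced tree structure) is unavailable, so the proof as sketched does not go through.
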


\begin{proof}
We will show that $|B_{bdy}| \leq |B_{out}|$. This immediately yields the claim by observing that $d \geq |B| = |B_{in}|+ |B_{out}| + |B_{rest}| \geq |B_{in}| + |B_{bdy}|$. We note that the proof of this claim is significantly simpler in the case that $|B| < R/2$. Proving it in the case when $R$ is a constant and $|B|=\Omega(n)$ is one of the main technical ingredients in getting the proof sketched in the overview to work for  $\Omega(n)$ rounds of the SOS hierarchy.

Let $Q \subseteq [n]$ be a (maximally) connected component in the subgraph defined by the hyperedges $\C(\cl(B)) \setminus \C(\cl_R(A))$. Let $Q_{bdy} = B_{bdy} \cap Q$ and $Q_{out} = B_{out} \cap Q$. $B_{bdy}$ is thus partitioned into $Q_{bdy}$ for every possible maximally connected subgraphs $Q$. It is thus enough to prove that $|Q_{bdy}| \leq |Q_{out}|$ for any fixed $Q$.

Observe that $Q \cap \cl_R(A) = Q_{bdy}$. If $Q \cap \cl_R(A) = \emptyset$, then, there is nothing to prove. If $Q_{bdy} = \{ v\}$, then,  $Q$ contains $V(W_v)$ where $W_v$ is a boundary clause associated with $v$. If $Q$ contains no vertex of $B_{out}$, then, observe that $\cl(B) \setminus (Q \setminus \{v\})$ is a closed set containing $B$ contradicting the minimality of $\cl(B)$. Thus, in this case, $|Q_{bdy}| \leq |Q_{out}|$.

Now suppose for $|Q_{bdy}| \geq 2$. Then, vertices in $Q_{bdy}$ are connected through clauses in $Q$. On the other hand, since $A$ is $R$-closed, for any $u,v \in Q_{bdy}$, any path that uses clauses from $Q$ between $u,v$ must be of length at least $R+1$. Applying Lemma \ref{lem:long-dist-many-edges}, we observe that $|\C(Q)| \geq |Q_{bdy}|R/2$.

Next, we claim that $Q \subseteq \cl(Q_{bdy} \cup Q_{out})$. It is easy to complete the proof once we have this claim: observe that $$|Q_{bdy}|R/2 \leq |\C(Q)| \leq |\C(\cl(Q_{bdy} \cup Q_{out}))| \leq 6|Q_{bdy}| + 6|Q_{out}|.$$ Rearranging yields that $|Q_{out}| \geq |Q_{bdy}| \cdot \frac{R/2-6}{6}$. Using $R \geq 24$ yields that $|Q_{out}| \geq |Q_{bdy}|$. 

We now proceed to show that $Q \subseteq \cl(Q_{bdy} \cup Q_{out})$. By Lemma \ref{lem:closure-props} (4), $\cl(B) = \cl(B_{in} \cup B_{bdy} \cup B_{out})$. Let $B' = B_{in} \cup B_{bdy} \cup B_{out} \setminus (Q_{bdy} \cup Q_{out})$. Then, by another application of Lemma \ref{lem:closure-props} (4), $\cl(B) = \cl( \cl(B') \cup \cl(Q_{bdy} \cup Q_{out}))$. In other words, one can build the closure of $B$ by first building the closure of $B'$ and $Q_{bdy} \cup Q_{out}$ (Step $1$) and then taking the closure of the unions of the obtained sets (Step $2$). Clearly, the final output contains every clause in $\C(Q)$. If we show that (1) $\C(\cl(B')) \cap \C(Q) = \emptyset$ and that (2) no clause from $\C(Q)$ is added in the step $2$, then every clause in $\C(Q)$ must be added in the procedure to build $\cl(Q_{bdy} \cup Q_{out})$ and thus we are done. We now proceed to show the two statements above.

(1): First observe that $\cl(B')$ itself can be built by building the closure of $B_{in}$ (and $\cl(B_{in}) \subseteq \cl_R(A) \Rightarrow \C(\cl(B_{in}) )\cap \C(Q)  = \emptyset$), the closure of $B_{out} \cup B_{bdy} \setminus (Q_{bdy} \cup Q_{out})$ (that cannot intersect any clause from $\C(Q)$ as then $Q$ must include a vertex from $B_{out} \cup B_{bdy} \setminus (Q_{bdy} \cup Q_{out})$, a contradiction) and finally taking the closure of their union. This last step cannot add a clause in $Q$: every path $P$ added connects $\cl(B_{in})$ and $\cl(B_{out} \cup B_{bdy} \setminus (Q_{bdy} \cup Q_{out}))$. If $P$ is contained in $\cl_R(A)$, then, there is nothing to prove. Otherwise $P$ must pass (exactly once) through a boundary vertex. If $P$ contains a clause from $\C(Q)$, then, if $P$ passes through a boundary vertex not in $Q_{bdy}$, then this enlarges $Q$ violating that $Q$ is a maximally connected component. If, on the other hand, $P$ passes through a boundary vertex in $Q_{bdy}$, then, $P$ connects $B_{out} \setminus Q$ with $Q$ violating the maximality of $Q$. Thus, $\C(\cl(B'))$ cannot include any clause from $\C(Q)$.

(2): Consider the step $2$ of the procedure to build $\cl(B)$. In this step, we add paths (of length at most $3$) that connect $\cl(B')$ and $\cl(Q_{bdy} \cup Q_{out})$. For any such path $P$, if $P$ includes some clause $C$ from $\C(Q)$ then it crosses out of $\cl_R(A)$ (exactly once) and thus must pass through a boundary vertex. By maximality of $Q$, we must have that $P \cap B_{bdy} \in Q_{bdy}$ and $P\setminus \C(\cl_R(A)) \subseteq \C(Q)$. On the other hand, the part of $P$ that connects some vertex in $Q_{bdy}$ to $\cl(Q_{bdy} \cup Q_{out})$ is of length at most $3$ and thus must be contained in $\cl(Q_{bdy} \cup Q_{out})$. Thus every edge in $P \setminus \C(\cl_R(A))$ is present in $\C(\cl(Q_{bdy} \cup Q_{out})$ and thus $C \in \C(Q)$.

\end{proof}

\begin{claim} \label{claim:ordering}
Suppose $B_{out} \neq \emptyset$. Then, for every $S \subseteq B_{in} \cup B_{bdy}$, $S \prec A$.
\end{claim}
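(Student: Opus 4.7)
The plan is to exploit two facts: $\cl(S) \subseteq \cl(B)$, which follows directly from the definitions of $B_{in}$ and $B_{bdy}$, and $B \prec A$, which follows from the hypothesis $j<i$. Combining them, $S \prec A$ will drop out of a short case analysis on the lex order, with Claim~\ref{claim:size} supplying the size bound needed in one degenerate subcase.

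First I would verify $\cl(S) \subseteq \cl(B)$. By definition $B_{in} \subseteq B \subseteq \cl(B)$, and $B_{bdy} \subseteq \cl(B)$ (since $B_{bdy}$ is defined as a subset of $\cl(B)$), so $S \subseteq B_{in} \cup B_{bdy} \subseteq \cl(B)$. Since $\cl(B)$ is $3$-closed and contains $S$, minimality of $\cl(S)$ gives $\cl(S) \subseteq \cl(B)$, and hence $\C(\cl(S)) \subseteq \C(\cl(B))$. I would then record an elementary property of the lex order on clause sets (clauses are distinct by their indices $\zeta$): $X \subsetneq Y$ implies $X \prec_{\mathrm{lex}} Y$, since reading both sorted-decreasing sequences in parallel, the first position where they differ has $Y$ strictly larger (or $X$ terminates as a proper prefix of $Y$).

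Next I would case split on whether $\C(\cl(B)) = \C(\cl(A))$. If they are unequal, $B \prec A$ translates to $\C(\cl(B)) \prec_{\mathrm{lex}} \C(\cl(A))$; chaining, $\C(\cl(S)) \preceq_{\mathrm{lex}} \C(\cl(B)) \prec_{\mathrm{lex}} \C(\cl(A))$, so $S \prec A$. If they are equal, then $B \prec A$ forces $|B| < |A|$; since $\C(\cl(S)) \subseteq \C(\cl(A))$, any strict containment still yields $\C(\cl(S)) \prec_{\mathrm{lex}} \C(\cl(A))$ and we are done. Otherwise $\C(\cl(S)) = \C(\cl(A))$, in which case we need $|S| < |A|$, and this is precisely where Claim~\ref{claim:size} enters (applicable since $B_{out} \neq \emptyset$): $|S| \leq |B_{in} \cup B_{bdy}| \leq |B| < |A|$, giving $S \prec A$.

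The only genuinely non-trivial ingredient is Claim~\ref{claim:size}; everything else is short formal manipulation of the ordering. The subtlety to guard against is that $\cl(S)$ might a priori pick up clauses beyond $\cl(B)$, which would sever the transitivity chain in the first case of the split — but this is precluded by the containment $S \subseteq \cl(B)$ established in the first step.
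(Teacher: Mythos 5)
Your overall route is the same as the paper's: you combine $\cl(S)\subseteq\cl(B)$ (closure monotonicity), the fact that a subset of a clause set is lexicographically no larger, $B\prec A$, and Claim~\ref{claim:size}, so the main case $\C(\cl(B))\neq\C(\cl(A))$ is handled exactly as in the paper. However, your tie case contains a genuine flaw. When $\C(\cl(B))=\C(\cl(A))$, the ordering breaks ties of equal cardinality \emph{arbitrarily}, so $B\prec A$ does not force $|B|<|A|$; it only gives $|B|\leq|A|$. Your chain $|S|\leq|B_{in}\cup B_{bdy}|\leq|B|<|A|$ therefore does not go through, and Claim~\ref{claim:size} cannot rescue it, since it only yields a non-strict bound. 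Note also that Claim~\ref{claim:size} has no hypothesis $B_{out}\neq\emptyset$; that hypothesis is needed precisely to break this tie, not to make the size claim applicable.

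For comparison, the paper uses $B_{out}\neq\emptyset$ to conclude $\cl(B)\neq\cl(A)$, hence $\pi(\cl(B))<\pi(\cl(A))$, and then $\C(\cl(S))\subseteq\C(\cl(B))$ gives $\pi(\cl(S))\leq\pi(\cl(B))<\pi(\cl(A))$, so $S\prec A$ is decided by the first (lexicographic) criterion alone; Claim~\ref{claim:size} only serves to guarantee $|S|\leq d$, so that $S$ has a well-defined position in the order at all. If you prefer to keep your explicit case analysis, the tie case can be repaired as follows: if $\C(\cl(B))=\C(\cl(A))$, then every clause of $\C(\cl(B))$ has all its vertices in $\cl(A)\subseteq\cl_R(A)$, so no such clause meets $\cl_R(A)$ in exactly one vertex and hence $B_{bdy}=\emptyset$; then $S\subseteq B_{in}=B\setminus B_{out}$, and $B_{out}\neq\emptyset$ gives $|S|\leq|B|-1\leq|A|-1$, which restores $S\prec A$ even in the sub-subcase $\C(\cl(S))=\C(\cl(A))$ with $|B|=|A|$.
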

\begin{proof}
Since $B_{out} \neq \emptyset$, $\cl(B) \neq \cl(A)$. Thus, $\pi(\cl(B)) < \pi(\cl(A))$. Now, $|B_{in} \cup B_{bdy}| \leq d$ from Claim \ref{claim:size}. Thus, every subset $S \subseteq B_{in} \cup B_{bdy}$ has a well-defined ordering w.r.t ${[n] \choose {\leq d}}$. Further, for every such $S$, $\cl(S) \subseteq \cl(B)$ (Lemma \ref{lem:closure-props}) and thus, $\pi(\cl(S)) \leq \pi(\cl(B))$. Hence, $S \prec A$.
\end{proof}

We now proceed to complete the proof of the lemma. It is easy to verify that $|\cl_R(A) \cup \cl(B)| < s$ and thus by Corollary ~\ref{cor:girth-SA} the $\tE$ operator on functions on variables in $\cl_R(A) \cup \cl(B)$ corresponds to the expectation of a valid local distribution. In what follows, whenever we write $\Pr$, we mean the probability of an event w.r.t. this local probability distribution. (Note that the expectation w.r.t. this probability distribution agrees with $\tE$ whenever both are defined.)

Now,
$\chi_B = \chi_{B_{in}}\chi_{B_{rest}}\chi_{B_{out}}$ and we can write
\begin{equation}
\tE[ \tchi_i \chi_j ] = \pE [ \tchi_i \chi_{B_{in}}\chi_{B_{rest}}\chi_{B_{out}} ] \label{eq:gfdgfdh}
\end{equation}

Consider an arbitrary assignment $z$ to  $B \setminus A $ and  $\gamma \in \pmo^{|B_{bdy}|}$ to $x_{B_{bdy}}$. Let $\1_{B_{bdy}=\gamma}$ be the function that on input $x\in \pmo^n$ outputs $1$ if $x_{B_{bdy}}=\gamma$ and zero otherwise.

Lemma \ref{lem:local-dist-union} gives the expression for the local distribution on $\cl_R(A) \cup \cl(B)$. Using the expression, we have: $$\pE [ \tchi_i \chi_{B_{in}}\chi_{B_{rest}}\chi_{B_{out}} \mid x_{B \setminus A} = z\text{, } x_{B_{bdy}} = \gamma] = \pE [ \tchi_i \chi_{B_{in}}\chi_{B_{rest}} \mid  x_{B_{bdy}} = \gamma] \cdot \chi_{B_{out}}(z_{B_{out}}),$$ where the $\tE$ on the RHS matches the expectation operator associated with the probability distribution $\nu_{\cl_R(A)}$.

We will show that $\pE [ \tchi_i \chi_{B_{in}}\chi_{B_{rest}}\mid x_{B_{bdy}} = \gamma] = 0$ for every choice of $\gamma$. First, we observe that: \begin{equation} \Pr[ \1_{B_{bdy} = \gamma}] \cdot \tE[ \chi_S \cdot \chi_{B_{in}}\chi_{B_{rest}} \mid \1_{B_{bdy} = \gamma}] =  \tE[ \chi_S \cdot \chi_{B_{in}}\chi_{B_{rest}} \cdot \1_{B_{bdy} = \gamma}] , \label{eq:star3} \end{equation} for ever $S \subseteq \cl_R(A)$, $|S| \leq d$.

Now, $\tchi_i \in \Span \{ \chi_S \mid S \subseteq \cl_R(A_i) \text{, } |S| \leq d\}$, and thus using \eqref{eq:star3}, $$\Pr[ \1_{B_{bdy} = \gamma}] \cdot \pE [ \tchi_i \chi_{B_{in}}\chi_{B_{rest}} \mid x_{B_{bdy}} = \gamma] = \tE[ \tchi_i \cdot \chi_{B_{in}}\chi_{B_{rest}} \cdot \1_{B_{bdy} = \gamma}].$$

Now $|B_{in} \cup B_{bdy}| \leq |B_{out}|$ (Claim \ref{claim:size}) and $\1_{B_{bdy}} = \gamma \in \Span \{ \chi_S \mid S \subseteq B_{bdy} \}$: $$\chi_{B_{in}}\chi_{B_{rest}} \cdot \1_{B_{bdy} = \gamma} \in \Span \{ \chi_{B_{in}} \cdot \chi_{B_{rest}} \cdot \chi_T  \mid T \subseteq B_{bdy}\}.$$ Each index set $S$ of the characters above is a subset of $B$  and thus $S \prec A_i$ (invoking Claim \ref{claim:ordering} along with the fact $\pi(\cl(B)) <\pi( \cl(A))$ ). Thus, $\chi_{B_{in}}\chi_{B_{rest}} \cdot \1_{B_{bdy} = \gamma}  \in V_i$. Using Lemma \ref{lem:local-projections-exist}, thus, $$\tE[ \tchi_i \cdot \chi_{B_{in}}\chi_{B_{rest}} \cdot \1_{B_{bdy} = \gamma}] = 0.$$

\end{proof}

We can now complete the proof of Lemma \ref{lem:orthonormalization}.
\begin{proof}[Proof of Lemma \ref{lem:orthonormalization}]
We show that the $\tchi_i$ constructed above satisfy all the properties required. By Lemma \ref{lem:equal-spans}, $\Span \{ \tchi_i\mid i \leq M\} = \Span \{ \chi_i \mid i \leq M \} = P_d^n$.
Next, observe that $\tchi_i = \chi_i - \bchi_i$. Both $\chi_i$ and $\tchi_i$ lie in $\Span \{ \chi_S \mid S \subseteq \cl_R(A_i) \}$ and by Lemma \ref{lem:local-PSD}, $\tE$ is a psd expectation operator over $V_i$. Thus, $\tE[ \tchi_i^2] \geq 0$ for every $i \leq M$. Finally, we verify that $\tchi_i$ are mutually orthogonal. Fix any $i$. It is then enough to show that $\tE[ \tchi_j \cdot \tchi_i] = 0$ for every $j \neq i$. Since $\Span \{ \tchi_r \mid r \leq j \} = \Span \{ \chi_r \mid r \leq j\}$ (Lemma \ref{lem:equal-spans}), we only need to show that $\tE[ \chi_j \cdot \tchi_i] = 0$ for every $j < i$. Invoking Lemma \ref{lem:crucial} then completes the proof.
\end{proof}

\section*{Acknowledgements}
Thanks to Ryan O'Donnell, Li-Yang Tan, and David Steurer for fruitful discussions and the anonymous reviewers for their valuable comments and suggestions on a previous version of this paper.

\bibliographystyle{plain}
\bibliography{refs}
\appendix

\section{Random sparse predicates} \label{sec:sparse-predicate}

Consider a random sparse predicate $P$ on $k$ variables and accepting
$|P^{-1}(1)| = t$ assignments.
If $t = \exp(o(k))$, we now show that $P$ does not support a pairwise
independent subgroup with high probability, as $k$ tends to infinity.
Here the randomness corresponds to choosing $P^{-1}(1)$ to be a $t$-sized
subset of $\{0,1\}^k$ uniformly at random.

\begin{observation} \label{obs:no-aff-plane}
$P^{-1}(1)$ does not contain any affine subspace of dimension $2$ (over
$\F_2$) with probability $\geq 1-t^4/2^k$.
\end{observation}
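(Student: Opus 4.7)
The plan is a straightforward first moment / union bound over all two-dimensional affine subspaces of $\F_2^k$. First I would count such subspaces: a $2$-dimensional affine subspace of $\F_2^k$ is a set of the form $\{a,\,a+v_1,\,a+v_2,\,a+v_1+v_2\}$ with $v_1,v_2\in\F_2^k$ linearly independent. Counting ordered triples $(a,v_1,v_2)$ gives $2^k(2^k-1)(2^k-2)$ possibilities, and each unordered affine plane is obtained $4\cdot 3 \cdot 2 = 24$ times (choose the base point, then the first nonzero direction, then an independent second direction). Hence the total number of $2$-dimensional affine subspaces is
\[
N \;=\; \frac{2^k(2^k-1)(2^k-2)}{24}.
\]

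Next, I would bound the probability that a \emph{fixed} such plane $S$ (of size $4$) is contained in $P^{-1}(1)$. Since $P^{-1}(1)$ is a uniformly random $t$-subset of $\{0,1\}^k$,
\[
\Pr[S\subseteq P^{-1}(1)] \;=\; \frac{\binom{2^k-4}{t-4}}{\binom{2^k}{t}} \;=\; \frac{t(t-1)(t-2)(t-3)}{2^k(2^k-1)(2^k-2)(2^k-3)} \;\le\; \frac{t^4}{2^k(2^k-1)(2^k-2)(2^k-3)}.
\]

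Finally, I would combine the two via a union bound over all $N$ planes:
\[
\Pr[\exists\text{ a 2-dim affine plane in }P^{-1}(1)] \;\le\; N\cdot \frac{t^4}{2^k(2^k-1)(2^k-2)(2^k-3)} \;=\; \frac{t^4}{24(2^k-3)} \;\le\; \frac{t^4}{2^k},
\]
where the last inequality holds for all $k\ge 2$ (say). Taking the complement gives the stated bound.

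There is no serious obstacle here; the only mildly delicate point is the combinatorial counting of affine planes (not linear subspaces), which is easy once one remembers to divide by the stabilizer $24 = |\mathrm{Aff}(2,\F_2)|$ of an affine basis of a plane.
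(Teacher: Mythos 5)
Your proof is correct. It is the same first-moment idea as the paper's, but the union bound is organized differently: you sum over all $2^k(2^k-1)(2^k-2)/24$ two-dimensional affine subspaces of $\F_2^k$ and compute the exact hypergeometric probability $\binom{2^k-4}{t-4}/\binom{2^k}{t}$ that a fixed plane lies inside the random $t$-set, whereas the paper sums over the at most $t^3$ triples $v_a,v_b,v_c$ of points of $P^{-1}(1)$ itself (any plane inside $P^{-1}(1)$ is the affine span of some such triple) and bounds by $t/2^k$ the conditional probability that the fourth point of their affine span also lands in $P^{-1}(1)$. The paper's parametrization avoids counting affine planes and the stabilizer-of-a-frame computation, giving a shorter argument that lands exactly on $t^4/2^k$; your parametrization requires the (correct) count with the factor $24$ and the final inequality $t^4/\bigl(24(2^k-3)\bigr)\leq t^4/2^k$ for $k\geq 2$, but in exchange yields a slightly sharper constant and an exact per-plane probability rather than a conditional bound. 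One cosmetic remark: your hypergeometric formula presumes $t\geq 4$; for $t\leq 3$ the event is impossible and the claim is trivial, so nothing is lost, but it is worth a half-sentence.
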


Under the condition of the observation, $P^{-1}(1)$ does not contain any
pairwise independent subgroup, because any such a subgroup contains an affine
subspace of dimension $2$.

\begin{proof}[Proof of Observation~\ref{obs:no-aff-plane}]
Let $v_1, \dots, v_t\in P^{-1}(1)$ be an enumeration of vectors in $P^{-1}(1)$.
Note that if $P^{-1}(1)$ contains a subspace of dimension 2, then there are
$1\leq a < b < c\leq t$ such that this subspace is exactly the affine span of
$v_a, v_b, v_c$.

For a fixed choice of the triple $(a,b,c)$, conditioning on the event that
$v_a, v_b, v_c$ span an affine subspace of dimension $2$, the remaining vector
from this affine subspace also belongs to $P^{-1}(1)$ with probability at most
$t/2^k$.
Taking a union bound over $(a,b,c)$ (at most $t^3$ such choices), we see that
$P^{-1}(1)$ contains an affine subspace with probability at most $t^4/2^k$.
\end{proof}

\section{Constructing nice instances} \label{sec:nice}

In this section, we show the existence of \emph{nice} instances of constraint hypergraphs and prove Theorem \ref{thm:nice}.

\begin{lemma}
Fix $1 > \epsilon, \delta \geq 0$ and $\gamma \geq e^k k^2$. Then, there exists a $k$-uniform constraint hypergraph $G$ with $\gamma  n$ edges such that for $\eta = (1/\gamma^2)^{2/\delta}$, $\tau = 4 \log_2 (\gamma k^2)$, 
$G$:
\begin{enumerate}
\item is $(\eta n, \delta)$-expanding,
\item has girth $\girth \geq \log{(n)}/\tau$, and 
\end{enumerate}
\end{lemma}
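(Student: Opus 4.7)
The plan is to use the probabilistic method: I will show that a random $k$-uniform hypergraph on $n$ vertices with $\gamma n$ (or slightly more) uniformly random hyperedges satisfies both properties with positive probability, after a small pruning step for the girth condition. Both properties are monotone under edge deletion (the expansion condition only quantifies over subsets, and removing edges can only increase girth), so a small number of deletions does no harm.

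First I would handle expansion. Let $G'$ be a random $k$-uniform hypergraph with $m = \gamma n$ hyperedges sampled i.i.d.\ uniformly from $\binom{[n]}{k}$. For each $r$, let $N_r$ count the $r$-subsets of hyperedges whose union covers at most $s := (k-1-\delta) r$ vertices. The standard first-moment bound gives
\[
\mathbb{E}[N_r] \le \binom{m}{r}\binom{n}{s}\!\left(\tfrac{\binom{s}{k}}{\binom{n}{k}}\right)^{\!r} \le \left(e^{k-\delta} \gamma\, (k{-}1{-}\delta)^{1+\delta}\, \rho^{\delta}\right)^{r},
\]
where $\rho = r/n$, after applying $\binom{m}{r}\le(em/r)^r$, $\binom{n}{s}\le(en/s)^s$, $\binom{s}{k}/\binom{n}{k}\le(s/n)^k$. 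For $\rho \le \eta$ with $\eta = \gamma^{-4/\delta}$ and $\gamma \ge e^k k^2$, one checks that $e^k k^2 \gamma \cdot \eta^{\delta} \le e^k k^2/\gamma^{3} \le 1/\gamma \le 1/2$, so $\mathbb{E}[N_r] \le 2^{-r}$. Summing over $2 \le r \le \eta n$ gives $\Pr[\text{expansion fails}] \le 1/2$.

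Next I would handle girth. The number of labeled cycles of length $\ell$ in $[n]$ is at most $n^{\ell}\cdot(n)^{(k-2)\ell}/(k-2)!^{\ell}$ (choose the $\ell$ cycle vertices, then the $k-2$ extra vertices per hyperedge), and the probability that a given ordered $\ell$-tuple of distinct edges is present is at most $(1/\binom{n}{k})^{\ell}$. Using $\binom{n}{k} \ge n^k/k^k$ and $k^k/(k-2)! \le e^k k^2$, one obtains
\[
\mathbb{E}[\#\text{cycles of length }\ell] \le (\gamma e^k k^2)^{\ell} \le \gamma^{2\ell},
\]
where the last inequality uses $\gamma \ge e^k k^2$. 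With $L = \log_2(n)/(4\log_2(\gamma k^2))$, the total expected number of cycles of length at most $L$ is at most $L\,\gamma^{2L} \le L\cdot 2^{\log_2(n)/2} = O(\sqrt{n}\log n)$. By Markov's inequality this number is $o(n)$ with probability $\ge 2/3$.

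By a union bound, with positive probability both events occur simultaneously. Now I would delete one hyperedge from each short cycle, which destroys all cycles of length $\le L$ while keeping $m - o(n)$ edges, and preserves the $(\eta n,\delta)$-expansion condition (deletion can only remove potential violators). To reach exactly $\gamma n$ edges, start the construction with $\gamma n + o(n)$ edges (which affects the constants in the expansion bound only negligibly) and delete extras at the end; since both expansion and high-girth are monotone under deletion, the result is a nice hypergraph as claimed. The main obstacle is the expansion calculation: getting the exponent of $\rho^{\delta}$ and the power $\gamma^{-4/\delta}$ of $\eta$ to match requires careful bookkeeping of the constants $e^k$, $k^{1+\delta}$, and the hypothesis $\gamma \ge e^k k^2$; the girth bound and the pruning step are comparatively routine.
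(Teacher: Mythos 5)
Your proposal is correct and follows essentially the same route as the paper: a random $k$-uniform hypergraph, a first-moment union bound for $(\eta n,\delta)$-expansion, an expected count of short cycles of length $\ell$ bounded by $(\gamma e^k k^2)^\ell$ plus Markov, and then pruning one edge per short cycle, which preserves both monotone properties (the paper uses the binomial edge model with $p=4\gamma k!/n^{k-1}$ instead of a fixed number of i.i.d.\ edges, but the calculations are the same). The only nit is the phrase that a fixed $\ell$-tuple of edges is present with probability at most $\bigl(1/\binom{n}{k}\bigr)^{\ell}$, which should be $\bigl(m/\binom{n}{k}\bigr)^{\ell}$; your final bound already incorporates the $m^\ell$ factor, so this is just a typo, not a gap.
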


\begin{proof}
We first choose a random graph $G$ by choosing every $k$ uniform hyperedge, independently, with probability $ p = 4\gamma \cdot k!/n^{k-1}$. Our final hypergraph will be obtained by removing hyperedges from $G$. 

We first show that:
\begin{claim}
For $G$ chosen as above, with probability at least $1/3$,
\begin{enumerate} 
\item has between $2\gamma n$ and $6 \gamma n$ edges.
\item has $(\eta n, \delta)$-expansion,
\item has at most $n^{1/4} \log{(n)}$ cycles of length at most $\girth$ and
\end{enumerate}
\end{claim}
We first show that the claim above is enough to complete the proof of the lemma. We define $G'$ to be the hypergraph obtained by removing every cycle of length at most $\girth$.
By the claim above, the total number of hyperedges removed in this process, for a large enough $n$, is at most $\gamma n$. Observe that the last  property in the statement of the theorem is immediately satisfied by $G'$. Further, since $G'$ is obtained only by removing hyperedges from $G$, $G'$ still enjoys $(\eta n, \delta)$-expansion. Thus, $G'$ is a constraint hypergraph that satisfies the requirements of the lemma. Finally, the total number of edges removed is sublinear in $n$ and thus $G'$ has at least $\gamma n$ edges for a large enough $n$.
 
We now move on to complete the proof of the claim above:
\begin{proof}[Proof of Claim]

\begin{enumerate}

\item The expected number of edges in $G$ is given by $p \cdot {n \choose k}  = 4 \gamma n (1 - \frac{k-1}{n})^{k-1} \geq 4 \gamma n (1 - \frac{(k-1)^2}{n})$. By an application of Chernoff bound, the probability that the number of edges does not lie in the interval $[2 \gamma n, 6 \gamma n]$ is at most $2e^{\frac{-\gamma n}{16}}$.

\item Next, consider any collection of $s$ clauses and let us compute the probability that they cover at most $cs$ variables for some $c = k-1-\delta$. This probability, is then upper bounded by $$ { n \choose {cs}} \cdot { {{cs} \choose k} \choose s} p^s. $$ Using that ${{cs} \choose k} \leq (cs)^k/k!$ and the approximation ${x \choose y} \leq \left( \frac{xe}{y} \right)^y$, we can upper bound the above expression by: $$ \left( \frac{ne}{cs} \right)^{cs} \cdot \left( \frac{e \frac{\left(cs \right)^k}{k!}}{s} \right)^{s} \left( \frac{ \gamma \cdot k!}{n^{k-1}}\right)^s.$$ Using that $c = k-1-\delta$ and that $\delta < 1$ now yields an upper bound of $$\left( \frac{s}{n} \right)^{\delta s}  \cdot \left( \gamma e^{k} c^{2} \right)^s.$$ Thus, using that $\gamma > e^k k^2$ and that $s$ satisfies $\frac{s}{n} \leq (1/\gamma^2)^{2/\delta}$ makes the above probability at most $(1/\gamma^2)^{s}$.

\item To see how to ensure that the high girth requirement, we first observe that for any integer $\ell$, the expected number of cycles of length $\ell$ in $G$ is at
most $(dk^2)^\ell$.

We first count the number of ways to choose a cycle of length $\ell$.
Recall that a cycle is given by a cyclic sequence $C_1, \dots, C_\ell$ of
hyperedges.
There are ${n \choose k}$ ways to choose $C_1$, and for $2\leq i < \ell$, at
most $k{n \choose k-1}$ ways to choose the common vertex $C_{i-1}\cap C_i$ and
remaining vertices for $C_i$, and finally at most $k^2 {n \choose k-2}$ to
choose $C_\ell$ that intersects both $C_1$ and $C_{\ell-1}$.
Therefore the expected number of length-$\ell$ cycles is at most
\[ {n \choose k} \cdot \left(k{n \choose k-1}\right)^{\ell-2} \cdot k^2{n
\choose k-2} \cdot \left( \frac{4 \gamma(k!)}{n^{k-1}} \right)^\ell \leq {(4\gamma)}^\ell
k^{2\ell} . \qedhere \]

By an application of Markov's inequality, with probability at least $7/8$ over the draw of hyperedges of $G$, the number of cycles of length at most $\girth = \frac
14 \log_{\gamma k^2} n$ are at most  \[ \sum_{\ell \leq \girth} (4\gamma k^2)^\ell \leq \girth n^{1/4}. \] 
%


By a union bound, now, all the three properties above can be ensured with probability at least $1/3$. 

\end{enumerate}

\end{proof}
\end{proof}

\subsection{Soundness} \label{sec:soundness}

In this section, we show that after fixing the underlying hyperedges $G$ of an
instance, with high probability over the literals on constraints, all
assignments are very close to a random assignment.
Here closeness is measured with respect to the distribution $\{C(x)\}$ as one
chooses a uniformly random constraint among all hyperedges of the hypergraph.

Let $G$ be any hypergraph with $m$ hyperedges.
Let $\I$ be an instance with the same underlying hypergraph as $G$, and with
the literals in all clauses be chosen uniformly at random.
We have the following lemma.

\begin{lemma} \label{lem:soundness}
Suppose $m = \Omega(2^{O(k)}\eps^{-2} n)$.
With high probability over the choice of literals, for any assignment $x\in
\{\pm 1\}^n$, the distribution $\{C(x)\}$ with $C$ chosen uniformly at random
in $\I$ is within $\eps$ statistical distance to the uniform distribution over
$\{\pm 1\}^k$.
\end{lemma}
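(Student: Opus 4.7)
}

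The plan is a straightforward Chernoff-plus-union-bound argument, exploiting the fact that for any \emph{fixed} assignment $x$, the random signs $\sigma^{C_i}$ make the outputs $C_1(x), \dots, C_m(x)$ independent and uniform on $\{\pm 1\}^k$. The hard part is not any single step but only keeping track of constants so that the union bound over $2^n$ assignments survives.

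First I would observe the key distributional fact. Fix $x \in \{\pm 1\}^n$ and consider a single clause $C_i = (\sigma^{C_i}_1 x_{j_1}, \dots, \sigma^{C_i}_k x_{j_k})$. Since each $\sigma^{C_i}_\ell$ is an independent uniform sign, the coordinate $\sigma^{C_i}_\ell x_{j_\ell}$ is uniform on $\{\pm 1\}$ regardless of $x_{j_\ell}$, and the $k$ coordinates are mutually independent. Hence $C_i(x)$ is uniform on $\{\pm 1\}^k$, and because sign choices across different clauses are independent, $C_1(x), \dots, C_m(x)$ are i.i.d.\ uniform on $\{\pm 1\}^k$.

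Next, I would apply a concentration bound. For fixed $x \in \{\pm 1\}^n$ and fixed $y \in \{\pm 1\}^k$, let $X_i^{x,y} = \mathds{1}[C_i(x) = y]$; these are i.i.d.\ Bernoulli$(2^{-k})$. Writing $\widehat p_y(x) = \frac{1}{m}\sum_i X_i^{x,y}$ and applying Hoeffding's inequality,
\[
\Pr\!\left[\bigl|\widehat p_y(x) - 2^{-k}\bigr| > \tfrac{2\eps}{2^k}\right] \;\leq\; 2\exp\!\left(-\tfrac{8 m \eps^2}{4^k}\right).
\]
A union bound over the $2^k$ choices of $y$ and the $2^n$ choices of $x$ gives that, except with probability at most $2^{n+k+1}\exp(-8 m\eps^2/4^k)$, every $x$ satisfies $|\widehat p_y(x) - 2^{-k}| \leq 2\eps/2^k$ for every $y$, and hence the statistical distance of the distribution $\{C(x)\}_{C \sim \I}$ from uniform is at most $\tfrac12\sum_y 2\eps/2^k = \eps$.

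Finally, I would choose $m$ to kill the failure probability. The bound above is $o(1)$ as long as $8 m\eps^2/4^k \geq (n+k+1)\ln 2 + \omega(1)$, which is satisfied by $m = \Omega(2^{O(k)} \eps^{-2} n)$ as in the hypothesis. This completes the argument; note that the underlying hypergraph $G$ played no role whatsoever — the lemma uses only the random choice of literal signs on top of whatever hypergraph is produced by the construction in Section~\ref{sec:nice}.
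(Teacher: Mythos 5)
Your proposal is correct and follows essentially the same route as the paper: fix $x$ and $y$, note that the random literal signs make the indicators $\1[C_i(x)=y]$ i.i.d.\ Bernoulli$(2^{-k})$, apply Chernoff--Hoeffding, union bound over the $2^k$ values of $y$ and $2^n$ assignments $x$, and convert the per-coordinate deviation into a statistical-distance bound under the hypothesis $m = \Omega(2^{O(k)}\eps^{-2}n)$. The only cosmetic difference is the choice of deviation parameter ($2\eps/2^k$ versus the paper's $\eps/2^k$), which does not affect the argument.
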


\begin{proof}
Let $\I = (C_1, \dots, C_m)$ be a fixed collection of literals.
Let $\mu_{\I,x}$ denotes the distribution $\{C_i(x)\}$ when $i$ is drawn
uniformly from $[m]$.
For each local assignment $y\in \{\pm 1\}^k$, the probability $\mu_{\I,x}[y]$
that a random local assignment from $\mu_{\I,x}$ equals $y$ is given by
$\E_{i\in [m]}[\1_{C_i(x) = y}]$.

Now suppose the signs of the literals from $\I$ for every constraint are chosen
uniformly at random, keeping the underlying subhypergraph fixed.
Then $\mu_{\I,x}[y]$ is now a random variable depending on the randomness of
the literals.
For each $i$, the indicator $\1_{C_i(x) = y}$ equals $1$ with probability
$1/2^k$, and equals $0$ with the remaining probability (over the randomness of
the signs of the literals on the $i$-th constraint), and the random variables
$\1_{C_i(x) = y}$ are independent of each other for different $i$.
Therefore $\mu_{\I,x}[y]$ is the average of $m$ independent $\{0,1\}$-indicator
random variables, each being $1$ with probability $1/2^k$.
By Chernoff--Hoeffding bound, we have $|\mu_{\I,x}[y] - 1/2^k| > \eta$ with
probability at most $2\exp(-\eta^2 m/2^{k+1})$.
By a union bound over all assignments $x\in \{\pm 1\}^n$, the maximum deviation
of $\mu_{\I,x}[y]$ from $1/2^k$ (over all $x$) exceeds $\eta$ with probability
at most $2\exp(-\eta^2 m/2^{k+1} + n\log 2)$.
Letting $\eta = \eps/2^k$, we see that
\[ \Pr\left[\max_x \left\{ |\mu_{\I,x}[y] - 1/2^k| \right\} \geq \frac\eps{2^k}
\right] \leq \exp(-\Omega(n)) \]
as long as $m = \Omega(2^{O(k)}\eps^{-2}n)$.

Now the distribution $\{C_i(x)\}$ for a random $i\in [m]$ has statistical
distance at least $\eps$ implies that $|\mu_{\I,x}[y] - 1/2^k| \geq \eps/2^k$
for some $y$.
By a union bound over all $y\in \{\pm 1\}^k$, the distribution $\{C_i(x)\}$ is
close in statistical distance to the uniform distribution on $\{\pm 1\}^k$
except with probability $\exp(O(k)-\Omega(n))$, assuming $m =
\Omega(2^{O(k)}\eps^{-2}n)$.
\end{proof}
%

\end{document}